\title{Generalized Stochastic Matching}
\author{Alireza Farhadi, Jacob Gilbert, MohammadTaghi Hajiaghayi}
\DeclareMathOperator*{\argmax}{arg\,max}
\newtheorem{theorem}{Theorem}
\newtheorem{lemma}[theorem]{Lemma}
\newtheorem{definition}[theorem]{Definition}
\newtheorem{observation}[theorem]{Observation}
\newcommand{\SM}[0]{\emph{stochastic matching}}
\begin{document}

\maketitle

\begin{abstract}
In this paper, we generalize the recently studied \SM\ problem to more accurately model a significant medical process, kidney exchange, and several other applications.  Up until now the \SM\ problem that has been studied was as follows: given a graph $G = (V, E)$, each edge is included in the $\emph{realized}$ sub-graph $\mathcal{G}$ of $G$ mutually independently with probability $p_e$, and the goal is to find a degree-bounded sub-graph $Q$ of $G$ that has an expected maximum matching that approximates the expected maximum matching of $\mathcal{G}$. This model does not account for possibilities of vertex dropouts, which can be found in several applications, e.g. in kidney exchange when donors or patients opt out of the exchange process as well as in online freelancing and online dating when online profiles are found to be faked. Thus, we will study a more generalized model of \SM\ in which vertices and edges are both $\emph{realized}$ independently with some probabilities $p_v, p_e$, respectively, which more accurately fits important applications than the previously studied model.

We will discuss the first algorithms and analysis for this generalization of the \SM\ model and prove that they achieve good approximation ratios.
In particular, we show that the approximation factor of a natural algorithm for this problem is at least $0.6568$ in unweighted graphs, and $1/2 + \epsilon$ in weighted graphs for some constant $\epsilon >0$. We further improve our result for unweighted graphs to $2/3$ using edge degree constrained subgraphs (EDCS).
\end{abstract}

\section{Introduction}
The \SM\ problem has been used to model kidney exchange in several research papers in recent years, and in this paper, we generalize this model to better suit the needs of kidney exchange and other applications. Kidney exchange is an important medical procedure that is utilized to increase the amount of possible successful kidney transplants between patients and donors for hundreds of donor-patient pairs in the U.S. each year. This medical process occurs when an incompatible kidney donor-patient pair matches with another incompatible pair such that the donors are swapped to become compatible pairs with the patients.  Unfortunately, compatibility medical testing can require patients and donors to be hospitalized and are expensive.  Thus, minimizing the amount of compatibility tests while maximizing compatible exchanges is an important problem in the medical world.
Moreover, in this paper we also account for the possibility that a patient or donor may decide to drop out of the exchange due to health conditions or at their discretion at any point throughout the  months' long process.  Therefore, while prior papers only considered donor-patient compatibility, we will also consider potential dropouts from the exchange process on top of compatibility.

In our proposed \SM\ model of kidney exchange, each donor-patient pair is represented by a vertex in the graph. Edges in this graph $G$ represent donor-patient pairs that may be compatible for exchange. 
Only a subset of the edges are found to be compatible through medical records and testing, and this subset forms a realized sub-graph $\mathcal{G}$ of possible successful exchanges.  We say these edges are $\emph{realized}$, i.e. appear in $\mathcal{G}$, with some probability $p_e$.  Similarly, a vertex is $\emph{realized}$ with probability $p_v$ if the pair does not dropout during the exchange process.  In our generalized model, an edge can only be included in $\mathcal{G}$ if both of its vertices are realized as well. A maximum matching algoritm seeks to pair vertices connected by an edge of the graph together to create the maximum amount of matches. So, a maximum matching of $\mathcal{G}$ represents maximized compatible kidney exchanges.  As mentioned, medical tests for compatibility are expensive, and so $\emph{querying}$ the edges of $G$ to see if they were realized should be kept to a minimum.  Without knowing the sub-graph $\mathcal{G}$, the goal of the \SM\ problem is to find some degree-bounded sub-graph $Q$ with an expected
maximum matching of realized edges that has a size approximately that of the actual maximum matching of $\mathcal{G}$.  We will state and prove the existence of the first bounds for the approximation ratio achieved with this model of kidney exchange in which vertices and edges may be dropped from the original graph.

\subsection{Additional Applications}
With our generalized \SM, in addition to kidney exchange we can model the freelancing industry comprised of freelance workers and their potential employers.  In modern freelancing, workers may have online profiles on websites that businesses can look through to find freelancers with compatibility for a job or project.  Unfortunately, a large amount of fake profiles and fake job offerings plague these websites.  Finding out profiles and jobs are fake costs time and money from those who hired the fake profiles or the freelancers who took up a fake job.  Therefore, in the online freelancing problem the goal is to maximize matchings between jobs and freelancers while minimizing the amount of queries to freelancers and employers needed to match real freelancer profiles to real job offers.

To model online freelancing with \SM, profiles and companies will make up the vertices of some graph $G$, and there is an edge between profiles and companies if a freelancer fits the qualifications for a company's job opening.  Edges may be weighted by the amount a company will pay a freelancer for the job, or remain unweighted if all jobs are nearly equally valuable. Each vertex is realized with some probability $p_v$  as long as the online profile or company is real.  Each edge $(u, v)$ is realized as long as both of the vertices $u$ and $v$ are realized.  As in kidney exchange, $\emph{queries}$ of edges/vertices are expensive since they require profile reviews and lengthy communications, but in this version of the problem  only vertices have a realization probability while edges are always realized if both of its vertices have been realized. Online dating is a very similar scenario with potential fake vertices, but edges between dating profiles may also drop out if a match does not lead to a relationship.  In our results later in the paper, we will state and prove bounds to the approximation ratio achieved for weighted graphs for the freelancing and dating model.  


Besides the aforementioned applications, the problem is significant from a computer science theory perspective as a discussion of graph sparsification.  We will show that a simple, well-studied algorithm provides a sparse sub-graph with a good approximation of the expected maximum matching of the original graph for our generalized version of stochastic matching. This sub-graph will conform to a tight restriction: any vertex has at most constant $O(1)$ degree.

\subsection{Generalized Stochastic Matching Model}
As discussed, the kidney exchange problem may be modeled with our proposed generalized \SM\ model. In the stochastic setting, we have a random sub-graph, the $\it{realized}$ sub-graph, of some given graph, and we want to approximate some property of the realized sub-graph.
\begin{definition}
    Given fixed parameters $p_v, p_e \in (0, 1]$ and weighted or unweighted graph $G = (V, E)$ with vertex set $V$ and edge set $E \subset V^2$, let graph $\mathcal{G} = (\mathcal{V}, \mathcal{E})$ be a sub-graph of $G$ such that any vertex $v \in V$ is in $\mathcal{V}$ mutually independently randomly with probability $p_v$ and any edge $e = (u, v) \in E$ is in $\mathcal{E}$ mutually independently randomly with probability $p_e$ if $u, v \in \mathcal{V}$. We call $\mathcal{G}$ the \emph{realized sub-graph} of $G$.
\end{definition}

\begin{definition}
Given weighted graph $G = (V, E, W)$ where $W$ is a set of edge weights, let $w_e \in W$ be the weight of edge $e \in E$.  Define $M(G)$ to be the maximum weighted matching of $G$; furthermore, let $\mu(G) := \sum_{e \in M(G)} w_e$ be the weight of the the maximum matching of $G$.
\end{definition}

In the \SM\ problem, we want to find a sparse sub-graph of a given graph such that the realized portion of the sparse sub-graph approximates the maximum weighted matching of the realized sub-graph.
More formally, given a graph $G$ with $n$ vertices, we want to find a sub-graph $Q = (V, E_Q)$ that satisfies the following two conditions:

\begin{enumerate}
    \item Let $\mathcal{Q} = Q \cap \mathcal{G}$, then the approximation ratio $\mathbb{E}[\mu(\mathcal{Q})]/\mathbb{E}[\mu(\mathcal{G})]$ is as large as possible.
    \item The degree of $Q$ is $O(1)$.  Specifically, the maximum degree of any vertex in $Q$ may be bounded by a constant determined by $p_v, p_e$ but not $n$.
\end{enumerate}

So, if we can find such a sub-graph $Q$, then we may query the $O(n)$ edges of $Q$ instead of doing expensive queries to all $O(n^2)$ edges of $G$ to find out which were realized. However, finding such a sparse sub-graph and proving it has a large approximation ratio is non-trivial.

\subsection{Related Work}
The less generalized version of \SM\ in which all vertices are realized with probability $1$ was first introduced by \cite{DBLP:conf/sigecom/BlumDHPSS15} primarily to model the kidney exchange setting. In this paper, the authors showed positive empirical results on simulated and real data from the United Network for Organ Sharing in which stochastic matching algorithms resulted in a good approximation of the optimal solution. This problem has been extensively studied since then \cite{DBLP:conf/sigecom/AssadiKL16,DBLP:conf/soda/YamaguchiM18, DBLP:conf/sigecom/BehnezhadR18, DBLP:conf/sagt/BehnezhadD0HR19}. The first discussion of this less generalized problem by \cite{DBLP:conf/sigecom/BlumDHPSS15} achieved an approximation ratio of $(1/2 - \epsilon)$ in unweighted graphs, and then \cite{DBLP:conf/sigecom/AssadiKL17} broke the half approximation barrier with an approximation ratio of .5001. This bound was later improved by \cite{DBLP:conf/soda/BehnezhadFHR19} to .6568 and by \cite{DBLP:conf/soda/AssadiB19} to $(2/3 - \epsilon)$. Afterwards, \cite{DBLP:conf/stoc/BehnezhadDH20} and \cite{DBLP:journals/corr/abs-2004-08703} both built on the analysis of the algorithm proposed by \cite{DBLP:conf/soda/BehnezhadFHR19} to further improve approximation ratios for unweighted and weighted graphs to $(1- \epsilon)$, respectively. We adapt this same algorithm as Algorithm \ref{alg:nonadaptive1} below to fit our model.
\begin{algorithm}
  \caption{An algorithm for the generalized stochastic matching problem.}
  \label{alg:nonadaptive1}
  \begin{algorithmic}[1]
  	\Statex \textbf{Input:} Input weighted graph $G=(V, E)$ and realization probabilities $p_v, p_e \in [0, 1]$.
  	\Statex \textbf{Parameter:} $R := \frac{2000 \log(1/\epsilon) \log(1 / (\epsilon p_v^2p_e))}{\epsilon^4 p_v^2 p_e}$
  	\State $Q \gets (V, \emptyset)$
	\For{$r = 1, \ldots, R$}
		\State Construct a sample $\mathcal{G}_r=(\mathcal{V}_r, \mathcal{E}_r)$ of $G$, where any vertex $v \in V$ appears in $\mathcal{V}_r$ independently with probability $p_v$, and each edge $e \in E$ connecting vertices $u, v$ appears in $\mathcal{E}_r$ independently with probability $p_e$ if and only if $u, v \in \mathcal{V}_r$.
		\State Add the edges in maximum weighted matching $M(E_r)$ of $\mathcal{G}_r$ to $Q$.
	\EndFor
	\State Query the edges in $Q$ and report the maximum weighted matching of it.
  \end{algorithmic}
\end{algorithm}

\subsection{Our Results}
  In the Crucial Edges and Unweighted Approximation section, we will achieve and prove a $.65$ approximation ratio for unweighted stochastic matching, i.e. the kidney exchange model. By adapting the analysis techniques of \cite{DBLP:conf/soda/BehnezhadFHR19} for our new generalization of stochastic matching, we will prove the following theorem and lower bound for the unweighted case:
 
\begin{theorem}
\label{thm:unweighted}
    For unweighted graph $G$, constant $\epsilon > 0$, vertex and edge realization probabilities  $p_v, p_e \in (0, 1]$, there is an algorithm to find an $O_{\epsilon, p}(1)$\footnote{We use $O_{\epsilon, p}(.)$ to hide the dependency on $\poly(\epsilon, p_v, p_e)$.}-degree subgraph $Q$ of $G$ such that $\mathbb{E}[\mu(\mathcal{Q})]/\mathbb{E}[\mu(\mathcal{G})] \ge .6568 - \epsilon$.
\end{theorem}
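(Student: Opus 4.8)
The plan is to adapt the crucial-edge analysis of \cite{DBLP:conf/soda/BehnezhadFHR19} to the generalized realization distribution. The key observation is that every sample $\mathcal{G}_r$ produced in Algorithm~\ref{alg:nonadaptive1} is identically distributed to the target realization $\mathcal{G}$, and independent of it, so the algorithm is drawing from exactly the right distribution despite the new vertex-dropout randomness. For each edge $e$ write $q_e := \Pr[e \in M(\mathcal{G})]$ for its marginal probability of lying in the maximum matching of a random realization; since $e$ is matchable only when both endpoints and the edge itself are realized, $q_e \le p_v^2 p_e$, which is the source of the $p_v^2 p_e$ factor in $R$. For unweighted $G$ we have $\mathbb{E}[\mu(\mathcal{G})] = \sum_e q_e$. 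I would call $e$ \emph{crucial} if $q_e \ge \tau$ for a threshold $\tau = \Theta(\epsilon\, p_v^2 p_e)$, and let $C$ denote the set of crucial edges.

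First I would prove the \emph{sampling guarantee}. Edge $e$ is added to $Q$ whenever $e \in M(\mathcal{G}_r)$ for some round $r$; these events are independent across the $R$ rounds, each of probability $q_e$, so $\Pr[e \notin Q] \le (1-\tau)^R \le e^{-\tau R} \le \epsilon$ for every crucial $e$ once $R \gtrsim \tau^{-1}\log(1/\epsilon)$, with no dependence on $n$. Rather than union bounding over all $\Theta(n^2)$ edges (which would force $R = \Omega(\log n)$), I would bound the lost matching in expectation: the expected number of crucial edges of $M(\mathcal{G})$ that fail to land in $Q$ is at most $\epsilon \sum_e q_e = \epsilon\,\mathbb{E}[\mu(\mathcal{G})]$, which is absorbed into the final error term. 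The degree bound is then free and needs no pruning, since each of the $R$ sampled matchings contributes at most one edge per vertex, giving every vertex degree at most $R = O_{\epsilon,p}(1)$ in $Q$.

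The core of the proof is the \emph{structural lemma}, $\mathbb{E}[\mu(C \cap \mathcal{G})] \ge (0.6568 - \epsilon)\,\mathbb{E}[\mu(\mathcal{G})]$. Granting it, and using $C \subseteq Q$ up to the negligible loss above, we get $\mu(\mathcal{Q}) = \mu(Q \cap \mathcal{G}) \ge \mu(C \cap \mathcal{G})$, and the theorem follows. To prove the lemma I would fix a realization $\mathcal{G}$ with its maximum matching $M(\mathcal{G})$, classify each matched edge as crucial or (individually rare) non-crucial, and reroute the endpoints of non-crucial edges through short alternating paths of realized crucial edges; the constant $0.6568$ emerges exactly as in the unweighted analysis of \cite{DBLP:conf/soda/BehnezhadFHR19}, by optimizing the worst-case loss of this rerouting.

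The principal obstacle is that the generalized realization is no longer edge-independent: edges sharing a vertex disappear together when that vertex drops, so they are positively correlated, whereas the rerouting argument uses independence to guarantee that enough replacement edges are simultaneously present. The saving observation is that this correlation is benign, since the only edges that vanish jointly are those incident to a dropped vertex, and such a vertex need not be matched at all; conditioned on the vertices we actually wish to match being realized, their incident edges still realize independently through fresh endpoints and edge coins. I would therefore run the analysis conditioned on the realized vertex set $\mathcal{V}$, where $G[\mathcal{V}]$ is an ordinary edge-only \SM\ instance with edge probability $p_e$, apply the structural bound of \cite{DBLP:conf/soda/BehnezhadFHR19}, and integrate over $\mathcal{V}$. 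The accounting that the extra $p_v^2$ factor in $\tau = \Theta(\epsilon p_v^2 p_e)$ manages is the translation between the global crucial threshold seen by the sampling and the conditional threshold $\Theta(\epsilon p_e)$ of the edge-only regime; reconciling these—global cruciality being an average over $\mathcal{V}$ while the conditional argument needs per-$\mathcal{V}$ cruciality—is where I expect the bulk of the technical effort to lie.
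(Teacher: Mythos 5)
Your preliminaries are fine (the sampling guarantee for crucial edges, the degree bound of $R$ per vertex, and the identity $\mathbb{E}[\mu(\mathcal{G})]=\sum_e q_e$ all match the paper), but the lemma you place at the core of the argument is false, and it is also not what \cite{DBLP:conf/soda/BehnezhadFHR19} proves. You claim $\mathbb{E}[\mu(C\cap\mathcal{G})]\ge (0.6568-\epsilon)\,\mathbb{E}[\mu(\mathcal{G})]$, i.e.\ that the crucial edges \emph{by themselves} preserve the approximation, and you reduce the theorem to this via $\mu(\mathcal{Q})\ge \mu(C\cap\mathcal{G})$. This collapses in the regime where essentially all of the expected matching weight sits on non-crucial edges. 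Take $G=K_{b,b}$ with constant $p_v,p_e$ and $b$ a large constant multiple of $1/\tau$: since the paper's $q_e$ is defined with respect to an arbitrary deterministic maximum-matching rule, for a rule that spreads its choices symmetrically every edge has $q_e\approx \mathbb{E}[\mu(\mathcal{G})]/b^2 = O(p_v/b)\ll\tau$, so $C$ is empty while $\mathbb{E}[\mu(\mathcal{G})]=\Omega(p_v b)$. Then $\mu(C\cap\mathcal{G})=0$, and your rerouting step has nothing to route through: there are no crucial edges at all. The constant $4\sqrt{2}-5\approx 0.6568$ in \cite{DBLP:conf/soda/BehnezhadFHR19} is not the optimum of any crucial-only rerouting; it arises only from combining the two edge classes, so there is no existing analysis from which your structural lemma can ``emerge exactly.''

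What the paper actually does is keep both classes of edges of $Q$ and certify a \emph{fractional} matching on $\mathcal{Q}$. Realized non-crucial edges of $Q$ receive value $\min\{f_e,2\tau\}/(p_v^2p_e)$, where $f_e$ is the empirical frequency of $e$ among the $R$ sampled matchings; Lemmas \ref{lem:ns1}--\ref{lem:nbound} show this captures $(1-10\epsilon)\varphi(N)$ — exactly the mass your reduction discards — while Lemma \ref{lem:nsupport} guarantees each vertex $v$ retains a residual budget of about $1-q_v^N$. Crucial edges are then assigned $(1-\epsilon)\min\{1-q_u^N,1-q_v^N\}$ along a matching $M_C$ drawn according to appearance probabilities, and Lemma \ref{lem:crucsize} bounds the combined matching by $1-\sum_v q_v^N q_v^I/\sum_v (q_v^I+q_v^N/2)\ge 4\sqrt{2}-5$ via the algebraic Lemma \ref{lem:algebraic}; the worst case is when the two classes are balanced, not when crucial edges dominate (crucial-only and non-crucial-only instances are both easy). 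Lemma \ref{lem:frac2int} then rounds the fractional matching to an integral one at a further $(1-\epsilon)$ loss. Your closing idea of conditioning on the realized vertex set $\mathcal{V}$ to tame the vertex correlations is reasonable in spirit, but the paper handles this more locally: the thresholds and scaling factors carry explicit $p_v^2p_e$ and $p_v$ factors, and in Lemma \ref{lem:ns2} the analysis conditions on a single edge's realization so that the remaining realizations are independent. To salvage your outline you would need to reinstate the non-crucial edges of $Q$ together with an argument that their empirical frequencies $f_e$ certify roughly $\varphi(N)$; that is the missing half of the proof.
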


In the Weighted Approximation section, we further consider the weighted stochastic matching problem, i.e. the freelancing model. Here, we prove the following bounds:

\begin{theorem}
\label{thm:weighted}
        For weighted graph $G$, constant $\epsilon > 0$, vertex and edge realization probabilities  $p_v, p_e \in (0, 1]$, there is an algorithm to find an $O_{\epsilon, p}(1)$-degree subgraph $Q$ of $G$ such that $\mathbb{E}[\mu(\mathcal{Q})]/\mathbb{E}[\mu(\mathcal{G})] \ge .501- \epsilon$.
\end{theorem}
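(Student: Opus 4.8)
The plan is to analyze Algorithm~\ref{alg:nonadaptive1} directly. The degree bound is immediate: $Q$ is a union of $R$ maximum weighted matchings, each of which has maximum degree one, so every vertex of $Q$ has degree at most $R$, which is $O_{\epsilon,p}(1)$ by the choice of the parameter. All of the work therefore goes into lower bounding $\mathbb{E}[\mu(\mathcal{Q})]$, where $\mathcal{Q}=Q\cap\mathcal{G}$ and the final realization $\mathcal{G}$ is drawn independently of the samples $\mathcal{G}_1,\dots,\mathcal{G}_R$ used to build $Q$.

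First I would fix the target fractional matching. For each edge $e$ let $f_e=\Pr[e\in M(\mathcal{G})]$ be the probability that $e$ lies in the maximum matching of a single realization; since every sample $\mathcal{G}_r$ has the same law as $\mathcal{G}$, linearity of expectation gives $\sum_e w_e f_e=\mathbb{E}[\mu(\mathcal{G})]$, and since at most one edge of $M(\mathcal{G})$ meets each vertex, $\sum_{e\ni v}f_e=\Pr[v\text{ is matched}]\le p_v\le 1$, so $f$ is a feasible fractional matching of weight exactly $\mathrm{OPT}:=\mathbb{E}[\mu(\mathcal{G})]$. I would then split the edges into a heavy part $H=\{e:f_e\ge\tau\}$ and a light part, for a threshold $\tau$ polynomial in $\epsilon,p_v,p_e$. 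Each heavy edge fails to enter $Q$ with probability at most $(1-\tau)^R$, which the choice of $R$ drives below $\epsilon$; because $Q$ and $\mathcal{G}$ are built from independent randomness, the set $M(\mathcal{G})\cap Q$ is a matching contained in $\mathcal{Q}$, whence $\mathbb{E}[\mu(\mathcal{Q})]\ge\sum_e w_e f_e\Pr[e\in Q]\ge(1-\epsilon)\sum_{e\in H}w_e f_e$, so the heavy weight is recovered essentially in full.

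The difficulty, and the heart of the proof, is the light edges: an individual light edge enters $Q$ only with probability about $Rf_e$, so the naive bound discards almost all of their weight even though collectively they may carry a constant fraction of $\mathrm{OPT}$. Here I would argue collectively rather than edge by edge. A vertex carrying a large fractional mass of light edges is matched through \emph{some} light edge with good probability, and the $R$ independent sampled matchings supply many interchangeable such edges; rounding the surviving light fractional matching greedily and charging each selected edge to its at most two blocking neighbours yields the baseline factor $\tfrac12(1-\epsilon)$. To cross $\tfrac12$ I would exploit that $\mathcal{Q}$ simultaneously contains several independent sampled matchings, so it typically admits short augmenting paths that a single matching lacks; weighted augmentation along these recovers additional weight, and a dichotomy according to whether $\mathrm{OPT}$ concentrates on a few near-certain edges (covered by the heavy bound) or is spread over many low-$f_e$ edges (which then supply abundant augmenting structure) guarantees a gain of a positive constant $\delta$, giving $\tfrac12+\delta\ge 0.501$ for a suitable choice of constants.

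The main obstacle specific to our generalized model is that vertex dropouts correlate the failures of all edges incident to a common vertex, so the edge-survival events inside $\mathcal{Q}$ are no longer mutually independent as in the classical $p_v=1$ setting on which the analysis of \cite{DBLP:conf/soda/BehnezhadFHR19} rests. I would handle this by conditioning on the realized vertex set $\mathcal{V}$ and working inside the induced instance $G[\mathcal{V}]$, where the remaining edge realizations \emph{are} independent with probability $p_e$. The care then needed is that both $f$ and the optimal matching depend on $\mathcal{V}$, so the heavy/light partition and the augmentation argument must be carried out uniformly over this conditioning, and the $p_v$ factors must be tracked through the charging so that the final constant survives for every $p_v,p_e\in(0,1]$.
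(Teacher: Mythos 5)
Your opening moves coincide with the paper's: the same algorithm, the same $O(R)$ degree bound, the same partition of edges by matching probability $q_e=\Pr[e\in M(\mathcal{G})]$ (your $f_e$; your heavy set is the paper's crucial set $C$), and the same recovery of crucial weight from $\Pr[e\notin Q]\le(1-\tau)^R\le\epsilon$ plus independence of $Q$ and $\mathcal{G}$, which is exactly Lemma \ref{lem:cbound}. The gap is that everything which actually produces a constant above $1/2$ is asserted rather than proved. Concretely: (i) your light-edge baseline of $\tfrac12(1-\epsilon)$ from greedy rounding with charging is too weak. The paper recovers essentially \emph{all} of the non-crucial weight, $(1-10\epsilon)\varphi(N)$ (Lemma \ref{lem:nbound}), by assigning fractional values $\tilde x_e=\min\{f_e,2\tau\}/(p_v^2p_e)$, proving concentration of the scaling factors (Lemma \ref{lem:ns2}), and converting fractions to an integral matching via Lemma \ref{lem:frac2int} at a cost of only $(1-\epsilon)$; this conversion needs the blossom-type condition (Lemma \ref{lem:weightedblossom}), which you never verify. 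With only half the light weight, the regime where $\mathrm{OPT}$ sits almost entirely on light edges already falls below $0.501$, so there the entire burden falls on your augmentation step. (ii) You never specify how the heavy matching $M(\mathcal{G})\cap Q$ and the rounded light matching are merged into one matching without colliding at vertices; resolving exactly this collision is the content of the paper's weighted section, where the non-crucial procedure deliberately leaves per-vertex budget ($x_v\le\max\{q_v^N,\epsilon\}/p_v$, Lemma \ref{lem:nsupport}), the crucial edges spend it through the $\argmax_\alpha$ rule, and the heavy/semi-heavy/$C^*$ classification with $\delta=.09$, the edge-directing argument, and the algebraic bound of $0.43$ yield $.501$ in the balanced case (Lemma \ref{lem:weightedsupport}, Theorem \ref{thm:mainweightednaive}). (iii) Your replacement mechanism, short augmenting paths among the $R$ sampled matchings, is not a proof: in \emph{weighted} graphs augmenting or alternating paths need not increase weight, and no quantitative gain $\delta$ is derived; the $(1-\epsilon)$ weighted results you allude to required substantial extra machinery precisely because of this.

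Separately, your fix for the vertex-dropout correlations (condition on the realized vertex set $\mathcal{V}$ and work in $G[\mathcal{V}]$) does not go through as stated: conditioning on $\mathcal{V}$ changes the law of $M(\mathcal{G})$, hence changes every $q_e$ and therefore your heavy/light partition, which was defined unconditionally; you acknowledge this but offer no repair. The paper's solution is different and local: it bakes the effective realization probability $p_v^2p_e$ into the threshold $\tau$ of Definition \ref{def:threshold} and into the $1/(p_v^2p_e)$ scaling of the non-crucial procedure, and in Lemma \ref{lem:ns2} it conditions only on a \emph{single} edge being realized, which keeps the remaining incident-edge realizations mutually independent (each with probability $p_vp_e$) so that the Chernoff argument survives.
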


One important distinction between our generalizations for freelancing and kidney exchange from prior work is that some edges are no longer realized completely independently. Specifically, if a vertex is not realized in our model, then every edge connected to it is also not realized.  This dependence sets our model apart from previous \SM\ papers.  Thus, our techniques will not utilize independent realizations of certain edges, a property that both \cite{DBLP:conf/stoc/BehnezhadDH20} and \cite{DBLP:journals/corr/abs-2004-08703} have relied on before. We also improve our bound for unweighted graphs to $(2/3-\epsilon)$ in the EDCS $2/3$ Approximation section using edge degree constraint sub-graphs (EDCS).

\begin{theorem}
\label{thm:unweighted-edcs}
For unweighted graph $G$, constant $\epsilon > 0$, vertex and edge realization probabilities $p_v, p_e \in (0, 1]$, there is an algorithm to find an $O_{\epsilon, p}(1)$-degree subgraph $Q$ of $G$ such that $\mathbb{E}[\mu(\mathcal{Q})]/\mathbb{E}[\mu(\mathcal{G})] \ge 2/3 - \epsilon$.
\end{theorem}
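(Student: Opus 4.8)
The plan is to combine the EDCS-based sparsification of \cite{DBLP:conf/soda/AssadiB19} with a conditioning argument that isolates the vertex dropouts. Recall that a subgraph $H$ of a graph $G'$ is a $(\beta, (1-\lambda)\beta)$-EDCS if every edge $(u,v) \in H$ satisfies $\deg_H(u) + \deg_H(v) \le \beta$ while every edge $(u,v) \in G' \setminus H$ satisfies $\deg_H(u) + \deg_H(v) \ge (1-\lambda)\beta$; the central property we will invoke is that such an $H$ has maximum degree below $\beta$ and contains a matching of size at least $(2/3 - O(\lambda))\,\mu(G')$. The algorithm I would analyze is the natural EDCS analogue of Algorithm \ref{alg:nonadaptive1}: draw $R$ independent realizations $\mathcal{G}_1, \dots, \mathcal{G}_R$ (each with both vertex and edge dropouts), compute a $(\beta, (1-\lambda)\beta)$-EDCS $H_r$ of each with $\beta = \Theta(\lambda^{-2})$ and $\lambda = \Theta(\epsilon)$, and set $Q = \bigcup_{r} H_r$. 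Since each $H_r$ has maximum degree at most $\beta$ and $R = O_{\epsilon,p}(1)$, the degree of $Q$ is at most $R\beta = O_{\epsilon,p}(1)$, so the degree condition holds immediately; the entire difficulty lies in the approximation ratio.

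To control the approximation ratio I would first dispose of the vertex dropouts by conditioning on the realized vertex set $\mathcal{V}$. Conditioned on $\mathcal{V}$, the realized graph $\mathcal{G}$ is exactly an edge-realization of the induced subgraph $G[\mathcal{V}]$, i.e. each edge of $G[\mathcal{V}]$ survives independently with probability $p_e$ --- precisely the independent-edge setting to which the EDCS analysis of \cite{DBLP:conf/soda/AssadiB19} applies. Moreover, the subgraph $Q$ is produced from the samples $\mathcal{G}_1, \dots, \mathcal{G}_R$, which are i.i.d. copies of $\mathcal{G}$ and are independent of $\mathcal{V}$ and of the edge-randomness defining $\mathcal{G}$. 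Writing $\mathbb{E}[\mu(\mathcal{Q})] = \mathbb{E}_{\mathcal{V}}\,\mathbb{E}[\mu(\mathcal{Q}) \mid \mathcal{V}]$ and likewise for $\mathbb{E}[\mu(\mathcal{G})]$, it therefore suffices to prove $\mathbb{E}[\mu(Q \cap \mathcal{G}) \mid \mathcal{V}] \ge (2/3 - \epsilon)\,\mathbb{E}[\mu(\mathcal{G}) \mid \mathcal{V}]$ for a typical $\mathcal{V}$, which reduces the theorem to an edge-stochastic statement on $G[\mathcal{V}]$.

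The main obstacle is that this reduction is not quite free: the EDCS $H_r$ is computed on the sample $\mathcal{G}_r$, whose vertex set $\mathcal{V}_r$ differs from $\mathcal{V}$, so $H_r \cap G[\mathcal{V}]$ need not be an EDCS of anything directly tied to $G[\mathcal{V}]$. To bridge this gap I would establish a robustness lemma in the spirit of \cite{DBLP:conf/soda/AssadiB19}: the degree profiles of an EDCS are stable under the i.i.d. two-level (vertex and edge) randomness, so that with $R$ large enough the union $Q$ contains, for the actual realization $\mathcal{G}$, a subgraph that is with high probability a $(\beta, (1-O(\lambda))\beta)$-EDCS of $\mathcal{G}$. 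The key quantitative change from the purely edge-stochastic case is that the effective survival probability of an edge is now $p_v^2 p_e$ (both endpoints and the edge must be realized) rather than $p_e$, which is exactly the dependence appearing in the parameter $R$; the concentration bounds driving the robustness lemma must be re-derived against this coupled randomness rather than the independent per-edge Bernoulli variables used in \cite{DBLP:conf/stoc/BehnezhadDH20,DBLP:journals/corr/abs-2004-08703}.

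Once the robustness lemma is in place the theorem follows quickly: conditioned on $\mathcal{V}$ and on the high-probability event that $Q \cap \mathcal{G}$ contains a $(\beta,(1-O(\lambda))\beta)$-EDCS of $\mathcal{G}$, the EDCS matching property gives $\mu(\mathcal{Q}) \ge (2/3 - O(\lambda))\,\mu(\mathcal{G})$; on the low-probability complementary event we lose at most an additive $\epsilon\,\mathbb{E}[\mu(\mathcal{G})]$ by choosing $R$ so that the failure probability is below $\epsilon$. Taking $\lambda = \Theta(\epsilon)$ and averaging over $\mathcal{V}$ then yields $\mathbb{E}[\mu(\mathcal{Q})] \ge (2/3 - \epsilon)\,\mathbb{E}[\mu(\mathcal{G})]$. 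I expect the robustness lemma --- specifically, re-establishing EDCS degree concentration under the correlated vertex--edge realizations, where a single dropped vertex simultaneously removes an entire star of edges --- to be the genuinely hard step, since it is exactly the independence assumption that prior work exploited and that our model breaks.
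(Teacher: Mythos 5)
Your proposal diverges from the paper's algorithm in a fundamental way, and the divergence is where the gap lies. You build $Q$ as a union of EDCSs $H_r$ computed on sampled realizations $\mathcal{G}_r$, with the per-sample parameter $\beta = \Theta(\lambda^{-2})$ independent of $p_v, p_e$, and then defer the entire technical content to a ``robustness lemma'' asserting that $Q \cap \mathcal{G}$ contains, w.h.p., a $(\beta, (1-O(\lambda))\beta)$-EDCS of $\mathcal{G}$. That lemma is not something that can be imported from \cite{DBLP:conf/soda/AssadiB19} (whose stochastic result concerns a single EDCS of the base graph, not unions of sample-EDCSs), and as stated it is very doubtful. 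Concretely: any $H \subseteq Q \cap \mathcal{G}$ has $\deg_H(v) \le \deg_{Q \cap \mathcal{G}}(v)$, and each edge of $Q$ survives into $\mathcal{G}$ with probability only $p_v^2 p_e$ (or $p_v p_e$ conditioned on one endpoint being realized), so realized degrees are a factor $\approx p_v p_e$ below degrees in $Q$. For the lower EDCS condition $\deg_H(u) + \deg_H(v) \ge (1-O(\lambda))\beta$ to hold on edges of $\mathcal{G} \setminus Q$, the union $Q$ would therefore need edge-degrees near $\beta/(p_v p_e)$ essentially everywhere it matters. Nothing in the per-sample guarantees forces this: the EDCS of each sample is not unique, and if the EDCS subroutine breaks ties consistently, the $H_r$ can keep selecting the same $\approx \beta$ edges, so the union never accumulates the needed degree; conversely, adversarial tie-breaking can push union degrees toward $R\beta$, threatening the upper EDCS condition. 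Moreover, for a vertex of $\mathcal{G}$ all of whose realized edges miss $Q$, no subgraph of $Q \cap \mathcal{G}$ can satisfy the lower EDCS condition with respect to $\mathcal{G}$ at all, so the lemma cannot be true relative to $\mathcal{G}$ itself; it must be restated relative to a pruned graph, which your proposal never introduces. Your conditioning on $\mathcal{V}$ does not repair any of this, as you yourself observe, since the samples defining $Q$ have different vertex sets.

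The paper's proof takes a simpler route that sidesteps sampling entirely: $Q$ is a \emph{single, deterministically computed} EDCS$(G, \beta, \beta-1)$ of the input graph, with $\beta \ge C\log(1/(\epsilon p_v p_e))/(\epsilon^2 p_v p_e)$ --- the inflation by $1/(p_v p_e)$ being exactly the compensation for realized degrees scaling down by $p_v p_e$. The analysis (Lemma \ref{lem:edcs-main}) shows by Chernoff bounds that $\deg_{\mathcal{Q}}(v)$ concentrates around $p_v p_e \deg_Q(v)$ up to $\pm \epsilon p_v p_e \beta / 2$, except on small exceptional sets $\mathcal{V}^+$ and $\mathcal{V}^-$ whose expected sizes are bounded via the fact that $Q$ has at most $2\beta\mu(G)$ non-isolated vertices (Lemma \ref{lem:vpvm}); pruning those vertices from both $\mathcal{Q}$ and $\mathcal{G}$ yields $\tilde{\mathcal{Q}}$ that is an EDCS$(\tilde{\mathcal{G}}, (1+\epsilon)p_v p_e \beta, (1-2\epsilon)p_v p_e \beta)$, where $\tilde{\mathcal{G}}$ loses at most an $\epsilon$-fraction of $\mathbb{E}[\mu(\mathcal{G})]$, and then Theorem \ref{thm:edcs} applies because $p_v p_e \beta$ is still $\Omega(\epsilon^{-2}\log(1/\epsilon))$. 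Notice that this single-EDCS construction handles the correlated vertex-and-edge randomness in one concentration argument, so the coupling you identify as the hard obstacle never has to be confronted in the form your proposal requires. To make your approach work you would have to prove a genuinely new statement --- that the union of sample-EDCSs builds up degree $\Theta(\beta/(p_v p_e))$ in the right places regardless of how the per-sample EDCSs are chosen --- and that statement, not the reduction to the edge-stochastic case, is the missing proof.
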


While the bound of Theorem $\ref{thm:unweighted-edcs}$ currently dominates that of Theorem $\ref{thm:unweighted}$ for unweighted graphs,
future improvements to Algorithm \ref{alg:nonadaptive1} and its analysis will eventually most likely overtake the $2/3$ approximation ratio provided by the EDCS approach.

\section{Algorithm \ref{alg:nonadaptive1} Analysis}
\label{sec:analysis}
Before we can prove our main results, we must introduce the concept of fractional matchings and the related procedures we use to build these fractional matchings. Constructing an integral matching directly on our sparse sub-graph $Q$ is difficult since we want a good approximation ratio in expectation without directly knowing $\mathcal{G}$. Instead, we can relax our matching requirements to allow assigning fractional values to edges of our matching, and then later show that the fractional matching serves as proof of the existence of a integral matching of the same approximation ratio.

\subsection{Fractional Matchings}
In order to prove Theorems \ref{thm:unweighted} and \ref{thm:weighted}, we will find a fractional matching $x$ of $\mathcal{Q}$ that achieves a $.6568 - \epsilon$ approximation ratio and $.501 - \epsilon$ approximation ratio, respectively. In an integral matching, each vertex can only be matched to one other vertex.  Alternatively, one can think of an integral matching as assigning a value of either 1 or 0 to every edge such that no vertex has two incident edges with value 1. A fractional matching $x$ provides more flexibility in analysis than an integral matching since it allows assigning fractional values $x_e \in [0, 1]$ to edge $e$ such that for any vertex, $x_v := \sum_{v \in e} x_e \leq 1$. Once we have our fractional matching and prove that it achieves our target approximation ratios, we will use the following folklore lemma to claim the existence of an integral matching $y$ that achieves the same approximation ratio to complete the proofs of Theorem $\ref{thm:unweighted}$ and Theorem \ref{thm:weighted}.  Note that in the following lemma, Lemma $\ref{lem:frac2int}$, given graph $G = (V, E)$ and subset $U \subseteq V$, we use $E(U)$ to refer to the edges of the induced sub-graph on $G$ by $U$ which includes every edge $(u, v) \in E$ such that $u, v \in U$.

\begin{lemma}
\label{lem:frac2int}
    Let $x$ be a fractional matching, $\epsilon > 0$ be a constant, and $G = (V, E)$ be an edge weighted graph where $w_e$ is the weight of edge $e \in E$. If for all $U \subseteq V$ such that $|U| \leq 1/\epsilon$ it is true that $\sum_{e \in E(U)} x_e \leq \lfloor |U| / 2 \rfloor$, then $G$ has an integral matching $y$ such that $\sum_{e \in E} w_e \cdot y_e  \geq (1-\epsilon) \sum_{e \in E} w_e \cdot x_e $. 
\end{lemma}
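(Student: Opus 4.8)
The plan is to exhibit the integral matching by appealing to Edmonds' description of the matching polytope. Recall that for a graph $G=(V,E)$, the convex hull of indicator vectors of integral matchings is exactly the set of vectors $z \in \mathbb{R}^E_{\ge 0}$ satisfying the degree constraints $\sum_{e \ni v} z_e \le 1$ for every $v \in V$ and the odd-set (blossom) constraints $\sum_{e \in E(U)} z_e \le \lfloor |U|/2 \rfloor$ for every $U \subseteq V$ with $|U|$ odd. Since this polytope is integral, any point of it is a convex combination of integral matchings, and because the objective $\sum_e w_e z_e$ is linear, at least one integral matching in the combination has objective value at least that of the point itself. So it suffices to produce a point in the matching polytope whose weight is at least $(1-\epsilon)\sum_e w_e x_e$.

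First I would set $x' := (1-\epsilon) x$, so that $\sum_e w_e x'_e = (1-\epsilon)\sum_e w_e x_e$, and verify that $x'$ lies in the matching polytope. Nonnegativity and the degree constraints are immediate: $x$ is a fractional matching, so $x_v \le 1$ for every $v$, and scaling by $(1-\epsilon) < 1$ only shrinks these sums. The odd-set constraints split into two cases according to the size of $U$.

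For a small odd set, $|U| \le 1/\epsilon$, the hypothesis already gives $\sum_{e \in E(U)} x_e \le \lfloor |U|/2\rfloor$, and scaling by $(1-\epsilon)$ preserves the inequality. The remaining and main case is a large odd set with $|U| = 2k+1 > 1/\epsilon$. Here I would sum the degree constraints $\sum_{e \ni v} x_e \le 1$ over all $v \in U$; since each edge of $E(U)$ is counted exactly twice, this yields $2 \sum_{e \in E(U)} x_e \le |U|$, that is, $\sum_{e \in E(U)} x_e \le k + \tfrac12$. Multiplying by $(1-\epsilon)$ gives $\sum_{e \in E(U)} x'_e \le (1-\epsilon)(k+\tfrac12)$, and the key computation is that $|U| > 1/\epsilon$ forces $k > (1-\epsilon)/(2\epsilon)$, which is precisely what is needed to conclude $(1-\epsilon)(k+\tfrac12) \le k = \lfloor |U|/2\rfloor$. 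Thus $x'$ satisfies all odd-set constraints and lies in the matching polytope.

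The hard part is this large-set case: the hypothesis says nothing about sets of size exceeding $1/\epsilon$, and the entire role of the $(1-\epsilon)$ slack is to absorb the extra $\tfrac12$ that the degree constraints alone leave on large odd sets. Once $x'$ is certified to lie in the polytope, writing it as a convex combination $\sum_i \lambda_i \mathbf{1}_{M_i}$ of integral matchings and choosing the index $i$ maximizing $\sum_e w_e (\mathbf{1}_{M_i})_e$ produces the desired integral matching $y$ with $\sum_e w_e y_e \ge \sum_e w_e x'_e = (1-\epsilon)\sum_e w_e x_e$, completing the proof.
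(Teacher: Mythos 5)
Your proof is correct and is the standard argument for this folklore lemma: scale $x$ by $(1-\epsilon)$, verify Edmonds' odd-set constraints (small sets by hypothesis, large sets via the degree constraints and the computation $k > (1-\epsilon)/(2\epsilon)$), and extract an integral matching from the convex combination. The paper itself provides no proof, deferring to Section 2.2 of the cited work of Behnezhad et al., and that reference proves the lemma by essentially this same matching-polytope argument, so your approach matches the intended one.
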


Proof of Lemma \ref{lem:frac2int} and further discussion about fractional matchings can be found in \cite{DBLP:conf/soda/BehnezhadFHR19} in section 2.2. Now we see that to prove Algorithm \ref{alg:nonadaptive1} provides a good expectecd maximum matching, a fractional matching $x$ on $Q$ will need to satisfy the requirements of Lemma \ref{lem:frac2int} in addition to achieving the target approximation ratio. To create such a matching, we will combine two smaller matchings over two disjoint sets of edges, a set of non-crucial edges and a set of crucial edges.  Each edge will be classified as non-crucial or crucial based on the probability that the edge appears in the maximum matching of $\mathcal{G}$.  

\begin{definition}
    For edge $e \in E$, we define $q_e := \Pr(e \in M(\mathcal{G}))$ as the probability that $e$ appears in the maximum weighted matching of realized sub-graph $\mathcal{G}$, and we will refer to $q_e$ as the $\emph{matching probability}$ of edge $e$\footnote{Given a realization, we can assume that edges belong to the maximum weighted matching are unique. These edges can be the edges returned by an arbitrary deterministic algorithm.}.  Additionally, for vertex $v \in V$, let $q_v := \sum_{e \ni v} q_e$.  For a vertex $v$ and subset $X \subseteq E$, $q(X) := \sum_{e \in X} q_e$ and $q^X_v := \sum_{e: e\in X, v \in e}q_e$.
\end{definition}

\begin{definition}
\label{def:threshold}
    Let threshold $\tau = \frac{\epsilon^3 p_v^2 p_e}{20 \log(1/\epsilon)}$, then edge $e$ is $\emph{crucial}$ if $q_e \geq \tau$ and $\emph{non-crucial}$ if $q_e < \tau$.  We will use $C$ to denote the set of crucial edges and $N$ to denote the set of non-crucial edges.
\end{definition}
The matchings over non-crucial and crucial edges will be constructed with procedures analyzed below.  When creating these procedures, we will have to keep a few things in mind about our new model.  First, edges are only realized if their incident vertices are realized, and so they can be thought of as having a realization probability of not just $p_e$ but $p_v^2p_e$.  With this in mind, when we sort edges into non-crucial and crucial sets, we make sure our threshold incorporates this realization probability $p_v^2 p_e$ in Definition \ref{def:threshold}.  Furthermore, note that in Algorithm $\ref{alg:nonadaptive1}$, the product $p_v^2p_e$ makes an appearance in the number of iterations.  One of the main reasons is to easily relate the number of rounds of the algorithm to our matching probability threshold $\tau$.

Another quirk of the new model is that there is a correlation in realization probabilities of adjacent edges that share an incident vertex. In step 2 of the upcoming non-crucial edge procedure, we scale down our fractional matching by a factor of $p_v$ to account for this.
The first procedure we discuss will create a near-optimal matching on the non-crucial edges using the following useful observation.
\begin{observation}
\label{obs:expmatch}
$\mathbb{E}[\mu(\mathcal{G})] = \sum_{e \in E} w_e q_e$.
\end{observation}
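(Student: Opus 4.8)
The plan is to prove this identity by a direct application of linearity of expectation to edge indicator random variables. The key observation is that the weight of any fixed matching can be written as a sum over \emph{all} edges of $E$, weighted by indicators, so that the randomness enters only through which edges land in $M(\mathcal{G})$.

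First I would fix a realization $\mathcal{G}$ and rewrite the matching weight as
\[
    \mu(\mathcal{G}) = \sum_{e \in M(\mathcal{G})} w_e = \sum_{e \in E} w_e \cdot \mathbf{1}[e \in M(\mathcal{G})],
\]
where the second equality holds because any edge not selected into $M(\mathcal{G})$ contributes a zero indicator; in particular, an edge absent from $\mathcal{E}$ can never be matched and so is automatically excluded. Next I would take expectations of both sides over the randomness of $\mathcal{G}$ and push the expectation inside the finite sum by linearity, obtaining
\[
    \mathbb{E}[\mu(\mathcal{G})] = \sum_{e \in E} w_e \cdot \mathbb{E}\bigl[\mathbf{1}[e \in M(\mathcal{G})]\bigr] = \sum_{e \in E} w_e \cdot \Pr(e \in M(\mathcal{G})) = \sum_{e \in E} w_e q_e,
\]
where the final step is just the definition $q_e := \Pr(e \in M(\mathcal{G}))$.

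There is essentially no computational obstacle here, so the one point requiring care is a definitional one: the indicators $\mathbf{1}[e \in M(\mathcal{G})]$, and hence the probabilities $q_e$, are only meaningful if $M(\mathcal{G})$ is a \emph{well-defined} (deterministic) function of the realized graph. When the maximum weighted matching is non-unique I would appeal to the tie-breaking convention already noted alongside the definition of $q_e$, fixing $M(\mathcal{G})$ to be the output of an arbitrary but fixed deterministic maximum-matching algorithm. With that convention in place the indicators are genuine random variables and the two displayed lines above complete the argument.
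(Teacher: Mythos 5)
Your proof is correct and is exactly the argument the paper intends: the paper states this as an unproved observation, and the one-line justification via linearity of expectation over edge indicators, together with the deterministic tie-breaking convention (which the paper relegates to the footnote accompanying the definition of $q_e$), is precisely what makes it hold. Nothing is missing.
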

Essentially, we will use matching probabilities as the assigned values in our fractional matching because the expected size of the maximum matching of $\mathcal{G}$ is just the sum of edge weights times matching probabilities. Since we don't actually know the matching probabilities exactly, we will assign the value of $f_e$, defined as the fraction of times edge $e$ appears in a maximum matching of an iteration of Algorithm $\ref{alg:nonadaptive1}$ out of $R$, the total number of iterations. 

\begin{definition}
   For an edge $e \in E$, let $k$ be the number of times $e$ appeared in a maximum weighted matching of a sampled graph $\mathcal{G}_i$ for $1 \leq i \leq R$ during Algorithm $\ref{alg:nonadaptive1}$. We define $f_e := \frac{k}{R}$. 
\end{definition}

Observe that from the above definition, $\mathbb{E}[f_e] = q_e$, which is what we wanted.


\subsection{Non-crucial Edge Procedure}
Given $Q = (V, E_Q)$ from Algorithm \ref{alg:nonadaptive1} with realized sub-graph $\mathcal{Q} = (\mathcal{V}, \mathcal{E}_Q)$. Let $\tilde x_e = 0$ for all edges $e \in E$. Then,
\begin{enumerate}
\label{proc:non-crucial}
	\item For any realized edge $e \in \mathcal{E}_Q \cap N $, set $\tilde x_e$ to be 
	\[\min\{f_e/(p_v^2p_e), 2\tau/(p_v^2p_e) \}.\]
	\item Let $s_e$ be the {\em scaling-factor} of $e$ where the default value is $s_e = 1$. For each vertex $v \in V$ and edge $e$ incident to $v$, set $s_e$ $$s_e = \min\Big\{s_e, \max\{q^N_v, \epsilon\}/( p_v \cdot \sum_{e \ni v} \tilde{x_e})\Big\}.$$
	Note that this step may be done for each vertex in an arbitrary order.
	\item Finally scale down the fractional matching with $s_e$.  So for all edges $e$, let $x_e := \tilde x_e \cdot s_e$.
\end{enumerate}

  By definition of a fractional matching, it is required that $x_v \leq 1$ for any vertex $v \in V$.  So, each vertex can be thought to have a ``budget'' of size 1.  We want to make sure the non-crucial edge procedure leaves some remaining budget for the crucial edges later. To this end, we set the scaling-factors in step 2 to have a factor of $\sum_{e \ni v} \tilde{x_e}$ in the denominator. Moreover, we place a $p_v$ in the denominator because vertices can only be matched if they are first realized. Altogether since $q_v \leq p_v$, in expectation the scaling-factor $q_v^N/( p_v \cdot \sum_{e \ni v} \tilde{x_e})$ should keep $x_e$ below 1. Note that in the actual definition of the scaling-factors we have a max over $q_v^N$ and $\epsilon$, and so, we must make sure it is small enough to stay within the vertex budget even with $\epsilon$. So as discussed, the properties of the following lemma prove an upper bound on the size of the fractional matching per vertex, i.e. some budget remains for crucial edges. Additionally, the first property of the following lemma, Lemma \ref{lem:nsupport}, proves that the non-crucial edge matching satisfies the requirements of Lemma \ref{lem:frac2int}, which as mentioned will be applied in our last step of this analysis to go from our fractional matching to an integral matching. The proof of Lemma \ref{lem:nsupport} and other missing proofs of this section are available in the Appendix.

\begin{lemma}
\label{lem:nsupport}
    Given graph $G = (V, E)$, constant $\epsilon \in (0, 1]$, and fractional matching $x$ from the non-crucial edge procedure:
    \begin{enumerate}
        \item $\forall U \subseteq V$ such that $|U| \leq 1/\epsilon$, $\sum_{e \in E(U)} x_e \leq \epsilon \lfloor |U|/2 \rfloor$.
        \item $\forall v \in V$, $x_v \leq \max\{q_v, \epsilon\}/p_v$.
    \end{enumerate}
\end{lemma}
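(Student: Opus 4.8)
\textbf{Proof proposal for Lemma \ref{lem:nsupport}.}

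The plan is to prove the two parts separately, beginning with the per-vertex bound in part (2) since part (1) will follow from it by a short averaging argument over the induced subgraph. For part (2), the key observation is that the scaling step (step 2) is designed precisely so that after scaling, the fractional value at each vertex respects the stated budget. I would fix a vertex $v$ and examine the scaling-factor that step 2 assigns to each edge $e$ incident to $v$. By construction, for every such edge we have $s_e \le \max\{q^N_v, \epsilon\}/(p_v \cdot \sum_{e' \ni v} \tilde{x}_{e'})$, because the minimum in the definition of $s_e$ includes exactly this term when we process vertex $v$. Therefore, summing the scaled values over the edges incident to $v$,
\[
x_v = \sum_{e \ni v} \tilde{x}_e \, s_e \le \frac{\max\{q^N_v, \epsilon\}}{p_v \cdot \sum_{e' \ni v}\tilde{x}_{e'}} \sum_{e \ni v}\tilde{x}_e = \frac{\max\{q^N_v, \epsilon\}}{p_v}.
\]
Since $q^N_v \le q_v$ (the non-crucial incident matching probability is at most the total incident matching probability), this gives $x_v \le \max\{q_v, \epsilon\}/p_v$, which is the claim. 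The one subtlety to handle carefully is the case $\sum_{e' \ni v}\tilde{x}_{e'} = 0$, where the scaling-factor formula is degenerate; there $x_v = 0$ and the bound holds trivially, so I would dispose of this edge case at the outset.

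For part (1), I would take an arbitrary $U \subseteq V$ with $|U| \le 1/\epsilon$ and bound $\sum_{e \in E(U)} x_e$ using part (2). Each edge in $E(U)$ has both endpoints in $U$, so summing the per-vertex bounds from part (2) over all $v \in U$ double-counts every edge of $E(U)$, giving $2\sum_{e \in E(U)} x_e \le \sum_{v \in U} x_v \le \sum_{v \in U} \max\{q_v, \epsilon\}/p_v$. The crucial point is then to show that the right-hand side is small enough to yield the factor $\epsilon$. This is where the $\max\{q^N_v,\epsilon\}$ term, rather than just $q^N_v$, matters: because we scaled by at most $\max\{q^N_v,\epsilon\}/(p_v \sum \tilde{x})$ and the non-crucial edges each satisfy $\tilde{x}_e \le 2\tau/(p_v^2 p_e)$ with $\tau$ chosen proportional to $\epsilon^3 p_v^2 p_e$, the individual fractional values on non-crucial edges are $O(\epsilon^3/\log(1/\epsilon))$, far smaller than a constant. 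I expect the main obstacle to lie here: I must argue that within any set of at most $1/\epsilon$ vertices, the accumulated non-crucial fractional matching weight is bounded by $\epsilon\lfloor |U|/2\rfloor$, which requires combining the smallness of each $\tilde{x}_e$ (via the $2\tau/(p_v^2 p_e)$ cap) with the bounded number of edges $\binom{|U|}{2} \le \tfrac{1}{2}|U|(|U|-1) \le \tfrac{1}{2\epsilon}|U|$ inside $U$.

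Concretely, the tightest route is to bound the sum directly rather than through part (2): since each $x_e \le \tilde{x}_e \le 2\tau/(p_v^2 p_e)$ and $|E(U)| \le |U|^2/2 \le |U|/(2\epsilon)$, we get
\[
\sum_{e \in E(U)} x_e \le \frac{|U|}{2\epsilon}\cdot\frac{2\tau}{p_v^2 p_e} = \frac{|U|\,\tau}{\epsilon\, p_v^2 p_e} = \frac{|U|\,\epsilon^2}{20\log(1/\epsilon)},
\]
after substituting $\tau = \epsilon^3 p_v^2 p_e/(20\log(1/\epsilon))$. Comparing this to the target $\epsilon\lfloor|U|/2\rfloor$, it suffices to check that $\epsilon^2|U|/(20\log(1/\epsilon)) \le \epsilon\lfloor|U|/2\rfloor$, which holds comfortably for all $\epsilon \in (0,1]$ and all $|U| \ge 2$ (and is vacuous for $|U| \le 1$, since $E(U)$ is then empty). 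This confirms part (1) with substantial room to spare, and the entire argument reduces to the two elementary estimates above once the scaling construction is unpacked.
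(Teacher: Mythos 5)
Your proposal is correct and takes essentially the same route as the paper's own proof: part (2) follows because every edge incident to $v$ receives scaling factor at most $\max\{q_v^N,\epsilon\}/(p_v \cdot \tilde{x}_v)$ so the scaled values sum to at most $\max\{q_v^N,\epsilon\}/p_v \le \max\{q_v,\epsilon\}/p_v$, and part (1) follows from the per-edge cap $x_e \le \tilde{x}_e \le 2\tau/(p_v^2 p_e)$ multiplied by the at most $\binom{|U|}{2}$ edges inside $U$. Your write-up is in places even more careful than the paper's (tracking per-edge scaling factors, the degenerate case $\tilde{x}_v = 0$, and the explicit $p_v$ factor), with the only caveat that, exactly as in the paper's version, the final numeric comparison silently requires $\epsilon$ bounded away from $1$, since $\log(1/\epsilon)\to 0$ as $\epsilon\to 1$.
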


Now, we have proven that the non-crucial edge procedure is not only a fractional matching, but leaves some room to grow when we discuss crucial edges.  However, as previously discussed, we must now also prove that the current matching is nearly optimal in size. The following definition will be useful to see this.

\begin{definition}
 For $X \subset E$, define $\varphi(X) := \sum_{e \in X} w_e \cdot q_e$ as the \emph{expected matching weight} of $X$.  Also, for a vertex $v$ and subset $X \subset E$, we define $\varphi_v^X := \sum_{e \in X, v \in e} w_e \cdot q_e$.
\end{definition}

By Observation \ref{obs:expmatch}, the expected size of the maximum mathcing on $\mathcal{G}$ is $\varphi(E)$. So in Lemma \ref{lem:ns1}, we will show that matching $x$ over non-crucial edges of our sparse sub-graph is within a $(1-\epsilon)$ factor of $\varphi(N)$. To begin, we will bound the size of $\tilde{x_e}$ from the first step of the non-crucial edge procedure.

\begin{lemma}
\label{lem:ns1}
    $\mathbb{E}\left[\sum\limits_{e \in \mathcal{E}_Q \cap N}w_e \cdot \tilde{x_e} \right] \geq (1-\epsilon)\varphi(N)$.
\end{lemma}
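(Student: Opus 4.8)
The plan is to expand the expectation one edge at a time, exploit the independence between the sample graphs that build $Q$ and the fresh realization that produces $\mathcal{Q}$, and thereby reduce the whole statement to a single per-edge concentration inequality driven by the choice of $\tau$ and $R$. Fix a non-crucial edge $e \in N$ and isolate its contribution. The value $\tilde x_e$ is nonzero only when $e \in \mathcal{E}_Q \cap N$, where it equals $\min\{f_e/(p_v^2 p_e),\, 2\tau/(p_v^2 p_e)\} = \tfrac{1}{p_v^2 p_e}\min\{f_e, 2\tau\}$; since $f_e = 0$ whenever $e \notin E_Q$, the membership indicator $\mathbf{1}[e \in E_Q]$ is already absorbed into $f_e$. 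Writing $\mathbf{1}[e \in \mathcal{E}]$ for the event that $e$ is realized in the fresh realization $\mathcal{G}$, we thus have $w_e \tilde x_e \mathbf{1}[e \in \mathcal{E}_Q] = \tfrac{w_e}{p_v^2 p_e}\min\{f_e, 2\tau\}\cdot \mathbf{1}[e \in \mathcal{E}]$. The crucial point is that $f_e$ is a function only of the $R$ sample graphs drawn inside Algorithm \ref{alg:nonadaptive1}, whereas $\mathbf{1}[e \in \mathcal{E}]$ depends only on the independent realization used to form $\mathcal{Q}$; the two are therefore independent, and $\Pr[e \in \mathcal{E}] = p_v^2 p_e$ exactly (the adjacent-edge correlations that the paper highlights are irrelevant here, since we treat one edge at a time). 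Taking expectations, the factor $p_v^2 p_e$ cancels the $1/(p_v^2 p_e)$ scaling, and summing over $e \in N$ gives
$$\mathbb{E}\Big[\sum_{e \in \mathcal{E}_Q \cap N} w_e \tilde x_e\Big] = \sum_{e \in N} w_e\, \mathbb{E}[\min\{f_e, 2\tau\}].$$

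It then suffices to prove the per-edge bound $\mathbb{E}[\min\{f_e, 2\tau\}] \ge (1-\epsilon) q_e$ for every non-crucial edge, since weighting by $w_e$ and summing reproduces $(1-\epsilon)\varphi(N) = (1-\epsilon)\sum_{e \in N} w_e q_e$. To establish it I would write $\min\{f_e, 2\tau\} = f_e - (f_e - 2\tau)^+$ and use $\mathbb{E}[f_e] = q_e$, reducing the claim to the overshoot estimate $\mathbb{E}[(f_e - 2\tau)^+] \le \epsilon q_e$. Now $R f_e$ is a sum of $R$ independent $\mathrm{Bernoulli}(q_e)$ indicators, one per iteration, hence Binomial with mean $q_e R$. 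Because $e$ is non-crucial, $q_e < \tau$, so the threshold $2\tau$ sits at least a factor of two above the mean; crossing it is a large-deviation event, and a multiplicative Chernoff bound makes $\Pr[f_e \ge 2\tau]$ exponentially small in $\tau R$. Since $f_e \le 1$, the expected overshoot $\mathbb{E}[(f_e-2\tau)^+]$ is bounded by this same probability, so it remains only to verify the numerical inequality.

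The main obstacle is exactly this final parameter check: showing the Chernoff tail lies below $\epsilon q_e$ uniformly over all $q_e \in (0, \tau)$. This is delicate because both sides vanish as $q_e \to 0$, so no constant bound suffices; one must track the dependence on the mean. I expect the worst ratio to occur at the boundary $q_e = \tau$, where the mean is $\tau R$ and the Chernoff estimate behaves like $(e/4)^{\tau R}$ up to polynomial factors (the function $\mu \mapsto e^{-\mu}\mu^{\,a-1}$ with $a = 2\tau R$ is increasing on $(0, a-1)$, which contains the admissible range $\mu \le \tau R$, pinning the extremum at $\mu = \tau R$). Substituting $\tau R = \tfrac{100}{\epsilon}\log(1/(\epsilon p_v^2 p_e))$ — precisely the reason $\tau$ carries the factor $\epsilon^3 p_v^2 p_e$ and $R$ carries $\log(1/(\epsilon p_v^2 p_e))/(\epsilon^4 p_v^2 p_e)$ — collapses this tail to roughly $(\epsilon p_v^2 p_e)^{\Theta(1/\epsilon)}$, which one checks is comfortably below the polynomial lower bound on $\epsilon q_e$ for every admissible $\epsilon, p_v, p_e$. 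Confirming that the deliberately generous constants make this hold is the one genuinely computational step; the independence decomposition and the reduction above are the conceptual core.
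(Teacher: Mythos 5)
Your proposal is correct and follows essentially the same route as the paper's proof: the same independence-and-cancellation decomposition (using that $f_e$ depends only on the sampled graphs while the realization indicator has probability exactly $p_v^2 p_e$) reduces the lemma to the per-edge tail bound $\Pr[f_e \geq 2\tau] \leq \epsilon \cdot q_e$, which both you and the paper establish by a multiplicative Chernoff bound driven by $\tau R \approx 100 \log(1/(\epsilon p_v^2 p_e))/\epsilon$. The only difference is cosmetic: you pin the worst case at $q_e = \tau$ via monotonicity of the tail-to-$q_e$ ratio, whereas the paper extracts the needed factor of $q_e$ by retaining an explicit $(1+\tau/q_e)^{-1} \leq q_e/\tau$ term from its form of the Chernoff bound.
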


We then show that the scaling factor is at least $(1-5\epsilon)$ with high probability. Therefore, we will know that the scaling factor does not decrease the size of the fractional matching too greatly.

\begin{lemma}
\label{lem:ns2}
    For vertex $v$ of realized edge $e \in \mathcal{E}_Q \cap N$ with probability at least $1-2\epsilon$, it is true that $\max\{q_v^N, \epsilon\}/(p_v \cdot \sum_{e \ni v} \tilde{x_e}) \geq 1 -5\epsilon$.
\end{lemma}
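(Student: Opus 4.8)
The plan is to bound the random quantity $S := p_v \sum_{e \ni v} \tilde x_e$ from above and show that $S \le \max\{q_v^N,\epsilon\}/(1-5\epsilon)$ with probability at least $1-2\epsilon$, which is exactly the claimed inequality after rearranging. Throughout I would condition on $v$ being realized (this is given, since $v$ is an endpoint of the realized edge $e$); conditioned on this, for each non-crucial $e=(u,v)$ the realization indicator $Y_e := \mathbb 1[e \in \mathcal E_Q]$ equals $1$ with probability $p_v p_e$, independently across $e$, and is independent of the $f_e$'s. Since one can write $\tilde x_e = Y_e \min\{f_e,2\tau\}/(p_v^2 p_e)$, a direct computation gives $\mathbb E[p_v \tilde x_e \mid f] = \min\{f_e,2\tau\}$ and hence $\mathbb E[S \mid f] = \sum_{e \ni v, e \in N}\min\{f_e,2\tau\} =: T$, with $\mathbb E[T] \le \sum_{e \ni v, e \in N} q_e = q_v^N$ (using $\min\{f_e,2\tau\}\le f_e$ and $\mathbb E[f_e]=q_e$).

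There are two independent sources of randomness — the $R$ sampled graphs $\mathcal G_1,\dots,\mathcal G_R$ that determine the $f_e$'s, and the fresh realization $\mathcal G$ that determines the $Y_e$'s — so I would handle them in two stages. Stage one controls $T$ over the sampling randomness. The obstacle here is that the $f_e$'s for edges sharing $v$ are correlated through the matching constraint, so I cannot treat them as independent. The observation that sidesteps this is that in each sample $M(\mathcal G_i)$ matches $v$ through at most one incident edge, so $Z_i := \sum_{e \ni v, e \in N}\mathbb 1[e \in M(\mathcal G_i)] \in \{0,1\}$, and $\sum_{e \ni v, e \in N} f_e = \tfrac1R\sum_i Z_i$ is an average of $R$ i.i.d.\ $\{0,1\}$ variables with mean $q_v^N$. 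A multiplicative Chernoff bound then gives $\sum_{e \ni v, e \in N} f_e \le (1+\epsilon)\max\{q_v^N,\epsilon\}$, hence $T \le (1+\epsilon)\max\{q_v^N,\epsilon\}$, except with probability $\exp(-\Theta(\epsilon^3 R)) \ll \epsilon$, where the large value of $R$ from Algorithm \ref{alg:nonadaptive1} supplies the slack.

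Stage two controls $S$ given the $f_e$'s, over the realization of $\mathcal G$. Here $S$ is a sum of the independent nonnegative terms $p_v \tilde x_e = Y_e \min\{f_e,2\tau\}/(p_v p_e)$, each bounded by $2\tau/(p_v p_e) = \epsilon^3 p_v/(10\log(1/\epsilon)) =: b$ by the definition of $\tau$ together with the cap $2\tau$ in step 1 of the procedure; producing this small per-term bound is precisely the role of the cap. Applying a multiplicative Chernoff bound relative to the mean $T \le (1+\epsilon)\max\{q_v^N,\epsilon\}$ and choosing a deviation parameter $\delta = \Theta(\epsilon)$ so that $(1+\epsilon)(1+\delta)\le 1/(1-5\epsilon)$, I obtain $S \le \max\{q_v^N,\epsilon\}/(1-5\epsilon)$ except with probability $\exp(-\Theta(\epsilon^2\max\{q_v^N,\epsilon\}/b))$. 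The floor $\max\{q_v^N,\epsilon\}\ge\epsilon$ is essential to keep this exponent meaningful when $q_v^N$ is tiny; substituting $b$ makes the exponent $\Theta(\log(1/\epsilon)/p_v)$, so the failure probability is $\epsilon^{\Theta(1/p_v)}\ll\epsilon$.

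Finally I would union-bound the two failure events; the total failure probability falls well below $2\epsilon$, so with probability at least $1-2\epsilon$ we have $\max\{q_v^N,\epsilon\}/(p_v\sum_{e \ni v}\tilde x_e) = \max\{q_v^N,\epsilon\}/S \ge 1-5\epsilon$, which is the statement. The main obstacle, as flagged, is the correlation among the $f_e$'s at a shared vertex, resolved by the ``at most one matched incident edge per sample'' observation; the rest is a calibrated two-stage concentration argument in which the specific magnitudes of $\tau$ and $R$ are tuned to beat the $\epsilon$-level failure probability and the $\max\{\cdot,\epsilon\}$ floor covers the small-$q_v^N$ regime.
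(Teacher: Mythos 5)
Your proposal is correct and takes essentially the same route as the paper's proof: a two-stage concentration argument that first controls $\sum_{e \ni v,\, e \in N} f_e$ via the observation that each sampled matching contains at most one edge incident to $v$ (the paper's Hoeffding step on $f_v^N$), and then applies a Chernoff bound over the independent realizations of the edges at $v$, with the $2\tau$ cap providing the per-term bound (the paper's two-case analysis of $X = \sum_i X_i$, which plays the same role as your $\max\{q_v^N, \epsilon\}$-floor in the exponent). The only cosmetic difference is that the paper conditions on the realization of $e$ itself and bounds its own contribution $p_v \tilde{x}_e \le \epsilon$ separately, while you fold $e$ into the independent sum; the slack in either argument absorbs this.
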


\begin{proof} [Proof of Lemma \ref{lem:ns2}]
    Note that the following proof is an updated proof from Claim A.5, \cite{DBLP:conf/soda/BehnezhadFHR19} for the new model. In order to prove this lemma, we will need to relate the numerator of the scaling factor, $q_v^N$, to the denominator of the scaling factor, $\sum_{e \ni v} \tilde{x}_e$.  Let $\tilde{x_v} := \sum_{e \ni v} \tilde{x}_e$.  To create this relation, we will analyze $f_e$ since it is related to $q_e$ and $\tilde{x_e}$.  Note from the lemma's statement, we have already specified some incident edge $e = (u, v) \in \mathcal{E}_Q \cap N$. If $e$ is not realized, we know that $x_e$ will just be 0 and does not contribute to the matching. Therefore, moving forward we assume that $e$ is realized.  We will use $e_1, e_2, ..., e_k \in E_Q \cap N$ to denote the rest of the edges incident to $v$, realized or unrealized. Now, let $f_v^N := \sum_{e_i} f_e$; we will begin the proof by showing that $f_v^N$ is a good approximation of $q_v^N$. 
    
    Trivially, since $\mathbb{E}[f_e] = q_e$, $\mathbb{E}[f_v^N] \leq q_v^N$.  Intuitively, $f_v^N$ is the fraction representing the number of matchings in which $v$ is matched by some edge $e_i$ in a round of Algorithm $\ref{alg:nonadaptive1}$ divided by $R-1$ since we do not count the round $v$ is matched by $e$. So, by Hoeffding's inequality we have
    \begin{align*}
        \Pr(f_v^N - q_v^N \geq \epsilon^2) &\leq \Pr(f_v^N - \mathbb{E}[f_v^N] \geq \epsilon^2) \\ &\leq exp(-2(R-1)\epsilon^4) \leq \epsilon.
    \end{align*}
    So far we have that with probability at least $1-\epsilon$, $f_v^N - q_v^N \leq \epsilon^2$, and this inequality implies that with probability at least $1-\epsilon$, $\max\{f_v^N, \epsilon\} - \max\{q_v^N, \epsilon\} \leq \epsilon^2$. Therefore, with probability at least $1-\epsilon$,
    \begin{align}
    \max\{f_v^N, \epsilon\} \leq \max\{q_v^N, \epsilon\} + \epsilon^2  &\leq (1+\epsilon)\max\{q_v^N, \epsilon\} . \label{ineq:sc}
    \end{align}
    
    So, we have shown that $f_v^N$ is close to $(1+\epsilon) q_v^N$. Note that in step 2 of the non-crucial edge procedure, $s_e$ only decreases when $\tilde{x_v}$ is greater than $\max\{q_v^N, \epsilon\}$ . Therefore, we will show that the probability of this happening is small, but we will use $f_v^N$ in place of $q_v^N$.  To do so, we define $X_1, X_2, ..., X_k$ to be the random variables conditioned on edge $e$ already being realized where we set $X_i$ to 0 if $e_i$ is not realized and $(\min\{f_{e_i}/(p_v p_e), 2\tau/(p_v p_e)\})$ if $e_i$ is realized. Note that we conditioned on the realization of edge $e$ to keep the set of events $X_i$ mutually independent. Since edge $e$ is realized, we know that the incident vertex $v$ to each edge $e_i$ must also be realized. Furthermore, the probability that each edge $e_i$ is realized is now only $p_v p_e$ since a factor of $p_v$ has been removed for the incident vertex. Now, we define $X := \sum_{e_i} X_i$.  Observe that $X = \tilde{x_v} - p_v \cdot \tilde{x_e}$ since step 1 of the non-crucial edge procedure sets any realized edges $e_i$ to be $\min\{f_{e_i}/(p_v^2p_e), 2\tau/(p_v^2p_e)\}$.  Additionally by linearity of expectation, $\mathbb{E}[X] = \sum_{i} \mathbb{E}[X_i] \leq f_v$.  
    
    Now we will consider two cases.  First, assume $\mathbb{E}[X] \leq \epsilon/(2p_v)$, then by definition of $X$ we also have that $\mathbb{E}[X] \leq \max\{f_v^N, \epsilon\}/(2p_v)$. From here, we can achieve the following probability bound:
    \begin{align*}
        &\Pr(X > \max\{f_v^N, \epsilon\}) \\ &= \Pr(X - \max\{f_v^N, \epsilon\}/2 \geq \max\{f_v^N, \epsilon\}/2) \\
        &\leq \Pr(X - \mathbb{E}[X] \geq \max\{f_v^N, \epsilon\}/2) \\
        &= \Pr\left(X \geq \left(1 + \frac{\max\{f_v^N, \epsilon\}}{2\cdot\mathbb{E}[X]}\right) \mathbb{E}[X]\right) \\
        &\leq \exp\left(-\frac{\max\{f_v^N, \epsilon\}/2}{6\tau/(p_vp_e)}\right) \text{, by Chernoff bound.} \\
        &\leq \exp\left(-\frac{\epsilon}{12\tau/(p_vp_e)}\right)
        \leq \exp(-1/\epsilon^2)
        \leq \epsilon.
    \end{align*}
    
    For the second case, if $\mathbb{E}[X] > \epsilon/(2p_v)$:
    
    \begin{align*}
        &\Pr(X > (1+\epsilon)\max\{f_v^N, \epsilon\}/p_v) \leq \Pr(X > (1+\epsilon)\mathbb{E}[X]) \\
        &\leq \exp\left(-\frac{\epsilon^2\mathbb{E}[X]}{6\tau/(p_vp_e)}\right) \text{, by Chernoff bound.} \\
        &\leq \exp\left(\frac{\epsilon^3}{12\tau/p_e}\right) \text{, since $\mathbb{E}[X] > \epsilon/(2p_v)$.} \\
        &\leq \exp\left(-\frac{1}{\log(1/\epsilon)}\right) \leq \epsilon.
    \end{align*}
    Thus, in either case we know that 
    \[
    \Pr(X \geq (1+\epsilon) \max\{f_v^N, \epsilon\}) \leq \epsilon. 
    \]  Since $X = \tilde{x_v} - \tilde{x_e}$, we also know that with probability at least $1-\epsilon$,
    \begin{align*}
        \tilde{x_v} &\leq (1+\epsilon)(\max\{f_v^N, \epsilon\}/p_v) +p_v\cdot\tilde{x_e} \\
        &\leq (1+\epsilon)(\max\{f_v^N, \epsilon\}/p_v) + \epsilon \\
        &\leq (1+2\epsilon)(\max\{f_v^N, \epsilon\}/p_v).
    \end{align*}
    Now, we can use Inequality $\ref{ineq:sc}$ to obtain our inequality to substitute in $q_v^N$.  With probability at least $1-2\epsilon$,
    \begin{align*}
        \tilde{x_v} &\leq (1+2\epsilon)(\max\{f_v^N, \epsilon\}/p_v) \\
        &\leq (1+2\epsilon)(1+\epsilon)(\max\{q_v^N, \epsilon\}/p_v) \\
        &\leq (1+5\epsilon)(\max\{q_v^N, \epsilon\}/p_v).
    \end{align*}
    Finally, with probability at least $1-2\epsilon$,
    \begin{align*}
        \max\{q_v^N, \epsilon\}/(\tilde{x}_v \cdot p_v) \geq \frac{1}{1+5\epsilon} \geq 1-5\epsilon.
    \end{align*}
\end{proof}

\begin{lemma}
\label{lem:nbound}
    Given a non-crucial edge matching $x$ from Procedure 1,
    \[
        \mathbb{E}\left[\sum_{e \in N} w_e \cdot x_e \right] \geq (1 - 10 \epsilon) \varphi(N).
    \]
\end{lemma}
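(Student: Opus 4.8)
The plan is to treat the scaling in step~3 of the non-crucial edge procedure multiplicatively. Since $x_e = \tilde x_e \cdot s_e$ with $s_e \in [0,1]$, it suffices to show that on average the scaling factor costs at most a $(1-9\epsilon)$ factor per edge, and then to combine this with the bound $\mathbb{E}[\sum_{e \in \mathcal{E}_Q \cap N} w_e \tilde x_e] \ge (1-\epsilon)\varphi(N)$ already established in Lemma~\ref{lem:ns1}. Concretely, I will argue that for every non-crucial edge $e$,
\[
\mathbb{E}\!\left[w_e \tilde x_e\, s_e\right] \ \ge\ (1-9\epsilon)\,\mathbb{E}\!\left[w_e \tilde x_e\right],
\]
and then sum over $e \in N$, using that $x_e = \tilde x_e = 0$ for unrealized edges, to conclude $\mathbb{E}[\sum_{e\in N} w_e x_e] \ge (1-9\epsilon)(1-\epsilon)\varphi(N) \ge (1-10\epsilon)\varphi(N)$.

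For the per-edge bound, I would first use $s_e \ge (1-5\epsilon)\,\mathbf 1[s_e \ge 1-5\epsilon]$. Writing $e=(u,v)$, step~2 makes $s_e = \min\{1, r_u, r_v\}$ where $r_w = \max\{q^N_w,\epsilon\}/(p_v \sum_{e'\ni w}\tilde x_{e'})$ is the per-vertex ratio. Lemma~\ref{lem:ns2} says that, conditioned on $e$ being realized, each of $r_u,r_v$ is at least $1-5\epsilon$ with probability at least $1-2\epsilon$, so a union bound gives $\Pr[s_e \ge 1-5\epsilon \mid e \text{ realized}] \ge 1-4\epsilon$. Conditioning on the realization of $e$ (recall $\tilde x_e=0$ otherwise) then reduces the task to lower bounding $\mathbb{E}[w_e\tilde x_e\,\mathbf 1[s_e\ge 1-5\epsilon]]$ by $(1-4\epsilon)\,\mathbb{E}[w_e\tilde x_e]$.

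The main obstacle will be exactly this decoupling of $\tilde x_e$ from the scaling event in the \emph{weighted} case: because $w_e$ is unbounded, I cannot dismiss the rare event $\{s_e < 1-5\epsilon\}$ simply by boundedness of $w_e\tilde x_e$. I expect the resolution to reuse the structure already present in the proof of Lemma~\ref{lem:ns2}. Conditioned on $e$ being realized, the ratios $r_u,r_v$ are governed by the sums $\sum_{e'\ni w}\tilde x_{e'}$ at $w\in\{u,v\}$, and the self-contribution $p_v\tilde x_e \le 2\tau/(p_v p_e)\le \epsilon$ of $e$ to those sums has already been absorbed there as additive slack. Hence, up to this absorbed slack, the event $\{s_e < 1-5\epsilon\}$ is determined by the realizations and $f$-values of the \emph{other} edges incident to $u$ and $v$, which are independent of $\tilde x_e = \min\{f_e/(p_v^2 p_e),\, 2\tau/(p_v^2 p_e)\}$ once we condition on $e$ being realized. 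The conditional bound $\Pr[s_e \ge 1-5\epsilon]\ge 1-4\epsilon$ should therefore hold uniformly over the value of $\tilde x_e$, yielding $\mathbb{E}[w_e\tilde x_e\,\mathbf 1[s_e\ge 1-5\epsilon]] \ge (1-4\epsilon)\,\mathbb{E}[w_e\tilde x_e]$. Chaining the $(1-5\epsilon)$ and $(1-4\epsilon)$ factors gives the per-edge inequality, and summing against Lemma~\ref{lem:ns1} finishes the proof. The one place where I would be careful is verifying that this independence survives the conditioning on $e$'s realization in precisely the form used in Lemma~\ref{lem:ns2}, since the whole weighted argument rests on it.
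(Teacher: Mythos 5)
Your proposal follows essentially the same route as the paper's proof: restrict to realized edges so that $x_e = s_e \cdot \tilde{x}_e$, invoke Lemma \ref{lem:ns2} at both endpoints of $e$ with a union bound to get $s_e \geq 1-5\epsilon$ with probability at least $1-4\epsilon$, and chain the factors $(1-4\epsilon)(1-5\epsilon)$ with the $(1-\epsilon)$ from Lemma \ref{lem:ns1} to reach $(1-10\epsilon)\varphi(N)$. The only substantive difference is that you explicitly flag and sketch a justification for decoupling the event $\{s_e \geq 1-5\epsilon\}$ from the value of $w_e \tilde{x}_e$, a step the paper performs implicitly by multiplying the probability bound from Lemma \ref{lem:ns2} directly into the expectation; your extra care here is warranted and does not change the argument's structure or constants.
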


\section{Crucial Edges and Unweighted Approximation}
\label{sec:unweighted}
Now we will augment the previous fractional matching from the non-crucial edge procedure with a second procedure on crucial edges. The crucial edge procedure is different for unweighted and weighted graphs, and we first give the slightly simpler procedure for unweighted graphs. The resulting fractional matching will give us our desired approximation ratio to prove Theorem \ref{thm:unweighted}. 

In our new stochastic matching model, it is important to emphasize that the realization of non-crucial edges actually gives some information about the realization of crucial edges.  In fact, because some non-crucial edges share an incident vertex with crucial edges, the probability of a crucial edge being realized loses a factor of $p_v$ when we know an adjacent non-crucial edge already has been realized as that implies the shared incident vertex has been realized as well. In order to avoid this further complexity to the problem, this crucial edge procedure will assign remaining budget to a crucial edge based on the incident vertices' non-crucial edge budget such that no vertex contributes more than size 1 to the whole matching. Specifically, we will set $x_e$ of edge $e = (u, v)$ for crucial edges equal to approximately $1 - q_v^N$ or $1 - q_u^N$ (with an additional $(1-\epsilon)$ factor) in the crucial edge procedure.  Additionally, we utilize a probability distribution defined below.

\begin{definition}
   Take any matching $M$ of the realized crucial edges in sub-graph $Q = (V, E_Q)$ from Algorithm $\ref{alg:nonadaptive1}$. Given $\mathcal{G}$, the $\emph{appearance probability}$ of $M$ is the probability that $M$ is the exact set of crucial edges that is in both $E_Q$ and the maximum matching of $\mathcal{G}$.
\end{definition}

\subsection{Crucial Edge Procedure}
\begin{enumerate}
    \item Draw a matching $M_C$ of $\mathcal{E}_Q \cap C$ based on the appearance probabilities of matchings given $\mathcal{G} \cap C$.
    \item For $e = (u, v) \in M_C$, let $x_e = (1-\epsilon) \min\{1 - q_u^N, 1-q^N_v\}$.
\end{enumerate}

Trivially, this procedure makes sure that all vertices $v \in V$ do not exceed their budget, i.e. $x_v \leq 1$.  So, first, we will prove that this procedure still allows fractional matching $x$ to satisfy the constraints of Lemma $\ref{lem:frac2int}$.

\begin{lemma}
\label{lem:blossom}
    $\forall U \subseteq V$ with $|U| \leq 1 / \epsilon$, $\sum_{e \in E} x_e \leq \lfloor |U|/2 \rfloor$.
\end{lemma}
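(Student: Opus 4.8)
The plan is to fix an arbitrary $U \subseteq V$ with $|U| \le 1/\epsilon$ and decompose the combined fractional matching over the induced edge set $E(U)$ into its non-crucial and crucial contributions, bound each piece separately, and observe that the two bounds add to exactly $\lfloor |U|/2 \rfloor$. Since the threshold in Definition \ref{def:threshold} splits $E$ into disjoint sets $N$ and $C$, I would write
\[
\sum_{e \in E(U)} x_e = \sum_{e \in E(U) \cap N} x_e + \sum_{e \in E(U) \cap C} x_e,
\]
where the first summand carries the values assigned by the non-crucial edge procedure and the second those assigned by the crucial edge procedure.

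For the non-crucial part I would simply invoke the first property of Lemma \ref{lem:nsupport}, which already guarantees $\sum_{e \in E(U) \cap N} x_e \le \epsilon \lfloor |U|/2 \rfloor$ for every $U$ with $|U| \le 1/\epsilon$. This disposes of the non-crucial edges and leaves a slack of $(1-\epsilon)\lfloor |U|/2 \rfloor$ to absorb the crucial edges.

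For the crucial part the key structural fact is that the crucial edge procedure assigns nonzero value only to edges of the single matching $M_C$ drawn in step 1. Because $M_C$ is an integral matching, the crucial edges lying entirely inside $U$ are vertex-disjoint and hence number at most $\lfloor |U|/2 \rfloor$. Moreover, each such edge $e = (u,v)$ receives value $x_e = (1-\epsilon)\min\{1 - q_u^N,\, 1 - q_v^N\} \le 1 - \epsilon$, since $q_u^N, q_v^N \ge 0$. Multiplying the count by this per-edge bound yields $\sum_{e \in E(U) \cap C} x_e \le (1-\epsilon)\lfloor |U|/2 \rfloor$.

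Adding the two estimates gives $\sum_{e \in E(U)} x_e \le \epsilon\lfloor |U|/2 \rfloor + (1-\epsilon)\lfloor |U|/2 \rfloor = \lfloor |U|/2 \rfloor$, which is exactly the claimed inequality. There is no genuine obstacle here; the only thing to get right is the accounting---namely that the $(1-\epsilon)$ factor deliberately baked into the crucial edge values is precisely what compensates for the $\epsilon$ slack that Lemma \ref{lem:nsupport} permits the non-crucial edges. I would emphasize that this argument is entirely deterministic: it holds for every realization and every choice of $M_C$, so the statement reduces to combining the matching structure of $M_C$ with the already-established blossom bound for the non-crucial edges, with no probabilistic reasoning required.
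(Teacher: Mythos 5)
Your proof is correct and takes essentially the same route as the paper's own proof: split the sum into the non-crucial part, bounded by $\epsilon\lfloor |U|/2\rfloor$ via Lemma \ref{lem:nsupport}, and the crucial part, bounded by $(1-\epsilon)\lfloor |U|/2\rfloor$ because $M_C$ is an integral matching whose edges inside $U$ number at most $\lfloor |U|/2\rfloor$ and each carries value at most $1-\epsilon$. If anything, your accounting is slightly cleaner, since you use $\lfloor |U|/2\rfloor$ consistently where the paper writes $\lfloor (|U|-1)/2\rfloor$.
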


Second, we show that the size of the fractional matching after the crucial edge procedure is the size we wanted. The proof of the following lemma largely relies on the optimal matching on non-crucial edges from the non-crucial edge procedures as well as a bit of algebra involving matching probabilities and the additional value assignments from the crucial edge procedure.


\begin{lemma}
\label{lem:crucsize}
    Given unweighted graph $G$ and fractional matching $x$ from the non-crucial and crucial edge procedures, $\mathbb{E}[\sum_{e \in \mathcal{E}_Q} x_e]/\mathbb{E}[\mu(G)] \geq (1 - 2\epsilon)(4\sqrt{2}-5)$.
\end{lemma}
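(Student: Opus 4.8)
The plan is to split the expected fractional-matching size into its non-crucial and crucial parts, convert each into an expression in the matching probabilities $q_e$, and then reduce the whole statement to a single deterministic inequality about the $q$'s that I will settle with a per-vertex optimization whose worst case produces the constant $4\sqrt2-5$.

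First I would treat the two contributions separately, using that the supports of $x$ on $N$ and on $C$ are disjoint so that $\sum_{e\in\mathcal{E}_Q}x_e=\sum_{e\in\mathcal{E}_Q\cap N}x_e+\sum_{e\in\mathcal{E}_Q\cap C}x_e$. For the non-crucial part, Lemma \ref{lem:nbound} already gives $\mathbb{E}[\sum_{e\in N}x_e]\ge(1-10\epsilon)\varphi(N)=(1-10\epsilon)q(N)$ in the unweighted case. For the crucial part, the crucial edge procedure sets $x_e=(1-\epsilon)\min\{1-q_u^N,1-q_v^N\}$ whenever $e=(u,v)\in M_C$, so $\mathbb{E}[\sum_{e\in C}x_e]=(1-\epsilon)\sum_{e=(u,v)\in C}\Pr(e\in M_C)\min\{1-q_u^N,1-q_v^N\}$. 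The one probabilistic point to nail down is that $\Pr(e\in M_C)\ge(1-\epsilon)q_e$ for every crucial edge: since $q_e\ge\tau$ and $M_C$ is drawn from the appearance distribution, the only way to lose a $q_e$ fraction is if $e\notin E_Q$, which happens with probability $(1-q_e)^R\le e^{-\tau R}$, negligible for the value of $R$ fixed in Algorithm \ref{alg:nonadaptive1} (one checks $\tau R=\Theta(\log(1/(\epsilon p_v^2p_e))/\epsilon)$).

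Combining the two parts, writing $\min\{1-q_u^N,1-q_v^N\}=1-\max\{q_u^N,q_v^N\}$, and using Observation \ref{obs:expmatch} so that $\mathbb{E}[\mu(\mathcal{G})]=q(N)+q(C)=:\mathrm{OPT}$, the claim reduces, up to the $(1-O(\epsilon))$ factors that I will fold into the stated $(1-2\epsilon)$ by rescaling $\epsilon$, to the deterministic inequality
\[
\mathrm{LOSS}:=\sum_{e=(u,v)\in C}q_e\max\{q_u^N,q_v^N\}\le(6-4\sqrt2)\,\mathrm{OPT},
\]
noting $4\sqrt2-5=1-(6-4\sqrt2)$ is exactly the target ratio. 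To prove this I would orient each crucial edge toward its endpoint of larger non-crucial matching probability and set $D_v:=\sum q_e$ over the crucial edges assigned to $v$, so that $D_v\le q_v^C$, $\sum_v D_v=q(C)$, and $q(N)=\tfrac12\sum_v q_v^N$. With this bookkeeping $\mathrm{LOSS}=\sum_v q_v^N D_v$ exactly, and since $\mathrm{OPT}=\tfrac12\sum_v q_v^N+\sum_v D_v$ it suffices to prove the per-vertex bound $q_v^N D_v\le(6-4\sqrt2)(\tfrac12 q_v^N+D_v)$ and sum over $v$.

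The per-vertex bound is where the constant is born and is the main obstacle. Writing $a=q_v^N$, $d=D_v$ with the single constraint $a+d\le q_v\le1$, I would show $(6-4\sqrt2)(\tfrac12 a+d)-ad\ge0$. For fixed $a$ this is linear in $d$, so its minimum over $d\in[0,1-a]$ is at an endpoint; the $d=0$ case is immediate, and the $d=1-a$ case reduces to the quadratic $a^2-(4-2\sqrt2)a+(6-4\sqrt2)\ge0$, whose discriminant is exactly $0$. The double root at $a=2-\sqrt2$ is precisely the tight configuration — two crucial edges sharing a single non-crucial edge with $q^N=2-\sqrt2$ and crucial mass $\sqrt2-1$ on each side — which is why no constant better than $4\sqrt2-5$ is possible and why the inequality holds with equality there. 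I expect the reduction steps and the $\Pr(e\in M_C)$ estimate to be routine; the tangency (zero-discriminant) computation that pins the sharp $4\sqrt2-5$ is the real content.
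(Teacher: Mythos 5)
Your proposal follows the paper's proof essentially step for step: the same split of $\sum_{e\in\mathcal{E}_Q}x_e$ into the non-crucial part handled by Lemma~\ref{lem:nbound} and the crucial part handled via $\Pr(e\in M_C)\ge(1-\epsilon)q_e$ (using $(1-\tau)^R\le\epsilon$ for crucial edges), the same orientation of each crucial edge toward its endpoint of larger $q^N$, and the same reduction to the inequality $\sum_v q_v^N q_v^I\le(6-4\sqrt2)\bigl(q(N)+q(C)\bigr)$. The only difference is that the paper closes by citing the algebraic Lemma~5.4 of \cite{DBLP:conf/soda/BehnezhadFHR19} (restated as Lemma~\ref{lem:algebraic}), whereas you prove that inequality inline with the per-vertex endpoint/zero-discriminant argument, which is correct and is precisely how the cited lemma is established.
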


Finally, we may prove our first theorem, Theorem $\ref{thm:unweighted}$.

\begin{proof} [Proof of Theorem \ref{thm:unweighted}]
Take fractional matching $x$ on $\mathcal{Q} = (\mathcal{V}, \mathcal{E}_Q)$ as constructed by the non-crucial edge and crucial edge procedures. By Lemma \ref{lem:blossom}, $x$ satisfies the condition for Lemma \ref{lem:frac2int}.  Thus, by Lemma \ref{lem:frac2int}, there exists an integral matching $y$ on $\mathcal{Q}$ with size at least $(1-\epsilon)$ times that of $x$.  So, we have that
\begin{align*}
    &\mathbb{E}[\mu(\mathcal{Q})] \geq \mathbb{E}\left[\sum_{e \in \mathcal{E}_Q} w_e \cdot y_e\right] \\
    &\geq (1-\epsilon)\mathbb{E}\left[\sum_{e \in \mathcal{E}_Q} w_e \cdot x_e\right] &\text{By Lemma \ref{lem:frac2int}.} \\
    &\geq (1-3\epsilon)(4\sqrt{2} - 5)\mathbb{E}[\mu(G)] & \text{By Lemma \ref{lem:crucsize}.} \\
    &\geq (.6568-\epsilon_0)\mathbb{E}[\mu(G)]
\end{align*}
where $\epsilon_0 = 3(.6568)\epsilon$.
\end{proof}
Unfortunately, it can be shown that this bound is tight for our analysis using the non-crucial and crucial edge procedures as there are examples of graphs where the best approximation using these procedures has ratio $(4\sqrt{2} - 5)$ (see \cite{DBLP:conf/soda/BehnezhadFHR19})

\section{Weighted Approximation}
\label{sec:weighted}
For weighted graphs, we will begin similarly to the unweighted graph analysis.  The end goal will be to create a fractional matching of non-crucial edges and crucial edges to prove our desired .501 bound.  We have slightly changed and improved the results of this section compared to \cite{DBLP:conf/soda/BehnezhadFHR19} by working with a smaller constant $\delta$ in Definition \ref{def:heavy1}. From this update, the approximation ratio is a bit better for the following weighted case analysis.

Due the new model, we must again work around the correlation between realization probabilities of adjacent edges in our new model. First, the non-crucial edge procedure will create a near-perfect fractional matching $x$ on non-crucial edges as before. However, a new procedure for crucial edges will be used that will not only assign fractional matching values to crucial edges but may also modify the values in the fractional matching given to non-crucial edges.  By lowering the previously assigned non-crucial edge matching values, there will be more budget for crucial edges to take. In weighted graphs, it is possible that crucial edges have high weights and significantly contribute to the maximum matching of $\mathcal{G}$. For this reason, increasing the remaining vertex budget accordingly for crucial edges is necessary.  So, the following procedure considers matching probabilities and edge weights when assigning fractional matching values to the crucial edges, and then it modifies the non-crucial fractional matching if needed.

\subsection{Weighted Crucial Edge Procedure}
\begin{enumerate}
    \item As in the previous crucial edge procedure, draw some matching $M_C$ of $\mathcal{E}_Q \cap C$ according to appearance-probabilities.  For vertex $v$ and constant $\alpha$, define 
    \[g(v, \alpha) := \frac{\min\{q_v^N, 1- \alpha\}}{q_v^N} \varphi_v^N.\]
    \item For $e = (u, v) \in M_C$, set \[
    x_e := (1-\epsilon) \argmax\limits_{0 \leq \alpha \leq 1} (g(u, \alpha) + g(v, \alpha) + \alpha\cdot w_e).
    \]
    \item For any vertex $v$, if $x_v > 1$, scale down fractional matching of incident non-crucial edges until $x_v \leq 1$.
\end{enumerate}

Before analyzing this new procedure, we want to bound $\varphi(N)$ and $\varphi(C)$ so that they are easier to work with.  To do so, we utilize the following lemma discussing the relation of $\varphi(C)$ to Algorithm $\ref{alg:nonadaptive1}$'s expected matching size. The proof of the lemma and other missing proofs of this section can be found in Appendix.
\begin{lemma}
\label{lem:cbound}
    Given sub-graph $Q = (V, E_Q)$ from Algorithm $\ref{alg:nonadaptive1}$, 
    $\mathbb{E}[\mu(\mathcal{Q})]\geq (1-\epsilon)\varphi(C)$.
\end{lemma}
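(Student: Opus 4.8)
The plan is to lower bound $\mu(\mathcal{Q})$ by exhibiting an explicit matching inside $\mathcal{Q} = Q \cap \mathcal{G}$ built from crucial edges, and then take expectations. Consider the (independent) realized sub-graph $\mathcal{G}$ used to form $\mathcal{Q}$, and look at the crucial edges of its maximum matching that also survive in $Q$, namely the set $S := C \cap M(\mathcal{G}) \cap E_Q$. Every edge in $S$ lies in $M(\mathcal{G}) \subseteq \mathcal{G}$ and in $Q$, so it lies in $\mathcal{Q}$; moreover $S \subseteq M(\mathcal{G})$ is a subset of a matching and hence is itself a matching. Therefore $S$ is a valid matching of $\mathcal{Q}$, which gives the pointwise bound $\mu(\mathcal{Q}) \geq \sum_{e \in S} w_e$.

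Taking expectations and using linearity, I would write
\begin{align*}
\mathbb{E}[\mu(\mathcal{Q})] \;\geq\; \mathbb{E}\!\left[\sum_{e \in C \cap M(\mathcal{G}) \cap E_Q} w_e\right] \;=\; \sum_{e \in C} w_e \cdot \Pr\big(e \in M(\mathcal{G}) \text{ and } e \in E_Q\big).
\end{align*}
The key structural observation is that $Q$ is constructed non-adaptively from the $R$ independent samples $\mathcal{G}_1,\dots,\mathcal{G}_R$, which are independent of the realization $\mathcal{G}$ that defines $\mathcal{Q}$. Hence the two events factor:
\begin{align*}
\Pr\big(e \in M(\mathcal{G}) \text{ and } e \in E_Q\big) \;=\; \Pr\big(e \in M(\mathcal{G})\big)\cdot \Pr\big(e \in E_Q\big) \;=\; q_e \cdot \Pr\big(e \in E_Q\big).
\end{align*}

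It then remains to show $\Pr(e \in E_Q) \geq 1-\epsilon$ for every crucial edge $e$. Since each sample $\mathcal{G}_r$ is distributed identically to $\mathcal{G}$ and $M(\cdot)$ is a fixed deterministic rule, $\Pr(e \in M(\mathcal{G}_r)) = q_e$, and the samples are independent, so $\Pr(e \notin E_Q) = (1-q_e)^R \leq (1-\tau)^R \leq e^{-\tau R}$ using $q_e \geq \tau$ for crucial $e$. Plugging in the definitions of $\tau$ and $R$ gives $\tau R = \tfrac{100}{\epsilon}\log\!\big(1/(\epsilon p_v^2 p_e)\big)$, which is comfortably larger than $\log(1/\epsilon)$, so $e^{-\tau R} \leq \epsilon$ and thus $\Pr(e \in E_Q) \geq 1-\epsilon$. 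Combining the three displays yields $\mathbb{E}[\mu(\mathcal{Q})] \geq (1-\epsilon)\sum_{e \in C} w_e q_e = (1-\epsilon)\varphi(C)$, as claimed. I expect the only genuinely delicate point to be justifying the independence of $\mathcal{G}$ from $Q$ (and hence the factorization), since everything else is a short union/product bound; the quantitative step $\Pr(e\in E_Q)\ge 1-\epsilon$ is essentially forced by how $R$ is calibrated against $\tau$.
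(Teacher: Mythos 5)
Your proof is correct and follows essentially the same route as the paper's: both lower bound $\mathbb{E}[\mu(\mathcal{Q})]$ by the expected weight of the crucial edges of $M(\mathcal{G})$ that survive in $Q$ (the paper states this as $\mathbb{E}[\mu(\mathcal{Q})] \geq \mathbb{E}[\varphi(E_Q \cap C)]$, implicitly relying on the same independence of $\mathcal{G}$ from the samples defining $Q$ that you spell out), and both conclude via $\Pr(e \notin E_Q) = (1-q_e)^R \leq (1-\tau)^R \leq \epsilon$ for crucial $e$. Your write-up is merely more explicit about exhibiting the matching $S = C \cap M(\mathcal{G}) \cap E_Q$ and the factorization step; the decomposition, the independence argument, and the calibration of $\tau R$ against $\log(1/\epsilon)$ match the paper's proof.
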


Recall that from Procedure $\ref{proc:non-crucial}$ and Lemma $\ref{lem:nbound}$, there is an expected matching of $Q$ of size at least $(1-10\epsilon)\varphi(N)$.  From Lemma $\ref{lem:cbound}$, we know that there is also an expected matching size of $Q$ with size at least $(1-\epsilon)\varphi(C)$.  Thus, if either $\varphi(C)$ or $\varphi(N)$ is at least $.501 \cdot \mathbb{E}[\mu(\mathcal{G})]$, we have our approximation ratio from Theorem $\ref{thm:weighted}$. We may now focus on when this is not the case.  Since $\mathbb{E} [\mu(\mathcal{G})] = \varphi(C) + \varphi(N)$, we will analyze the case when
\[
    .499 \cdot \mathbb{E}[\mu(\mathcal{G})] \leq \varphi(C), \varphi(N) \leq .501 \cdot \mathbb{E}[\mu(\mathcal{G})].
\]
For this analysis, we will utilize the matching constructed by the weighted crucial edge procedure and further classify crucial edges by edge weights to make claims about this matching's expected weight.

\begin{definition}
\label{def:heavy1}
    Let $\delta = .09$, then a crucial edge $e = (u,v) \in C$ is $\emph{heavy}$ if $w_e \geq (1 + \delta)(\varphi^N_u + \varphi^N_v)$, and we denote the set of heavy edges with $H$.  
    
    A crucial edge is $\emph{semi-heavy}$ if $e$ is not heavy, $w_e \geq 2(1 + \delta)\varphi_v^N$, and $q_u^N \leq (1-\delta)$ where $q_v^N \geq q_u^N$.  We denote the set of semi-heavy edges with $H^*$.
\end{definition}

Observe that by definition, the weight of a heavy edge $e = (u, v) \in H$ is larger than the expected fractional matching of adjacent non-crucial edges.  As such, from step 2 of the weighted crucial edge procedure, $x_e$ will be maximized when $\alpha = 0$ and $x_e = (1-\epsilon)$.  Furthermore, step 3 will reduce the fractional matching of adjacent non-crucial edges to 0 so that $x_u, x_v \leq 1$.  Similarly, for semi-heavy edge $e = (u, v) \in H^*$, by definition the weight of $e$ is greater than the expected non-crucial fractional matching of one of its vertices.  Specifically, from step 2 of the procedure we have that $x_e \geq (1-\epsilon)(1-q_u^N) \geq (1-\epsilon)\delta$.  With these crucial edge bounds, we can prove that if a graph has enough heavy and semi-heavy edges, then will will reach at least the desired .501 approximation ratio.

\begin{lemma}
\label{lem:weightedsupport}
    If $\varphi(H) + \varphi(H*) \geq 0.074\varphi(C)$, then $\mathbb{E}[\mu(\mathcal{Q})]/\mathbb{E}[\mu(\mathcal{G})] \geq .501 - 11\epsilon$.
\end{lemma}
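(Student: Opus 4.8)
The plan is to lower-bound the expected weight of the fractional matching $x$ produced by the non-crucial and weighted crucial edge procedures, then invoke Lemma \ref{lem:frac2int} (via the weighted analog of Lemma \ref{lem:blossom}, which certifies that $x$ meets the small-set constraints once step 3 enforces $x_v \leq 1$) to pass to an integral matching of $\mathcal{Q}$ at a cost of only a $(1-\epsilon)$ factor. Since we are working in the regime $\varphi(N), \varphi(C) \in [.499, .501]\,\mathbb{E}[\mu(\mathcal{G})]$, it suffices to show that $x$ has expected weight at least the near-optimal non-crucial baseline $(1-O(\epsilon))\varphi(N)$ from Lemma \ref{lem:nbound} plus an additive gain of at least $.002\,\mathbb{E}[\mu(\mathcal{G})]$ attributable to the heavy and semi-heavy edges.

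First I would set up the per-edge accounting built into the function $g$. For a crucial edge $e=(u,v)\in M_C$, reserving budget $\alpha$ at $u$ and $v$ leaves the incident non-crucial edges an expected weight of exactly $g(u,\alpha)+g(v,\alpha)$, so the combined expected contribution of $e$ together with its incident non-crucial mass is $g(u,\alpha_e)+g(v,\alpha_e)+(1-\epsilon)\alpha_e w_e$. By the $\argmax$ choice of $\alpha_e$ in step 2, this is at least $(1-\epsilon)$ times the value at $\alpha=0$, namely $(1-\epsilon)(\varphi_u^N+\varphi_v^N)$. A per-vertex charging argument then shows that summing these contributions over $M_C$ and adding the non-crucial mass at untouched vertices recovers the baseline $(1-O(\epsilon))\varphi(N)$, so the net gain is precisely the excess of the $\argmax$ value over the $\alpha=0$ value, accumulated over $H$ and $H^*$.

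Next I would quantify this excess. For a heavy edge the $\argmax$ is at $\alpha_e=1$ (as noted after Definition \ref{def:heavy1}), so the combined value equals $w_e$ while the $\alpha=0$ baseline is $\varphi_u^N+\varphi_v^N\leq w_e/(1+\delta)$; the excess is thus at least $\tfrac{\delta}{1+\delta}w_e$. For a semi-heavy edge with $q_v^N\geq q_u^N$, evaluating at $\alpha_e\geq 1-q_u^N$ yields excess at least $(1-q_u^N)w_e-(1-q_u^N/q_v^N)\varphi_v^N$; bounding $\varphi_v^N\leq w_e/(2(1+\delta))$ and optimizing over the admissible ranges $q_u^N\leq 1-\delta$ and $q_u^N\leq q_v^N\leq 1$ gives excess at least $c_\delta w_e$ for a positive constant $c_\delta$ (with $\delta=.09$ the optimization returns $c_\delta$ on the order of $.05$). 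Taking expectations and using that a crucial edge appears in $M_C$ with probability essentially $q_e$ (the appearance-probability coupling underlying Lemma \ref{lem:cbound}), the total gain is at least $\min\{\tfrac{\delta}{1+\delta},c_\delta\}\big(\varphi(H)+\varphi(H^*)\big)$.

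Finally I would combine the pieces: the hypothesis gives $\varphi(H)+\varphi(H^*)\geq .074\,\varphi(C)\geq .074\cdot.499\,\mathbb{E}[\mu(\mathcal{G})]$, so the gain, added to $\varphi(N)\geq .499\,\mathbb{E}[\mu(\mathcal{G})]$, pushes the expected weight of $x$ to at least $.501\,\mathbb{E}[\mu(\mathcal{G})]$, and folding the $(1-\epsilon)$ integrality conversion together with the $(1-O(\epsilon))$ baseline loss into the error term delivers $\mathbb{E}[\mu(\mathcal{Q})]/\mathbb{E}[\mu(\mathcal{G})]\geq .501-11\epsilon$. I expect the main obstacle to be the semi-heavy accounting: the coefficient $.074$ and the window $[.499,.501]$ are deliberately tight, so the constant $c_\delta$ must be evaluated without the lossy steps I used above to confirm the gain genuinely clears the $.002$ margin, and simultaneously one must verify that step 3's rescaling of non-crucial edges charges no weight beyond what $g$ already accounts for, so that the baseline $\varphi(N)$ is not eroded.
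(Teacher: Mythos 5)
Your proposal follows essentially the same route as the paper's own proof: the same per-edge ``excess over the $\alpha=0$ baseline'' accounting for heavy and semi-heavy edges (the paper's Inequalities \ref{ineq:heavy} and \ref{ineq:semi}, with coefficients $\frac{\delta}{1+\delta}\approx .083$ and $\frac{\delta+2\delta^2}{2(1+\delta)}\approx .0487$), the same use of appearance probabilities to pass to expectations, the same restriction to the regime $\varphi(N),\varphi(C)\in[.499,.501]\,\mathbb{E}[\mu(\mathcal{G})]$, and the same final conversion to an integral matching via Lemma \ref{lem:weightedblossom} and Lemma \ref{lem:frac2int}. Structurally there is nothing new here relative to the paper.

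The problem is the final arithmetic, exactly where you yourself express doubt: this is a genuine gap, and it cannot be closed with the stated constants. The guaranteed gain is $\min\bigl\{\tfrac{\delta}{1+\delta}, \tfrac{\delta+2\delta^2}{2(1+\delta)}\bigr\}\cdot(\varphi(H)+\varphi(H^*)) \approx .048\,(\varphi(H)+\varphi(H^*))$, and the hypothesis only grants $\varphi(H)+\varphi(H^*)\ge .074\,\varphi(C)$, so the gain is at least $.048\times .074\,\varphi(C)\approx .00355\,\varphi(C)$; in the worst case of the regime ($\varphi(N)=.499\,\mathbb{E}[\mu(\mathcal{G})]$, $\varphi(C)=.501\,\mathbb{E}[\mu(\mathcal{G})]$) the total is about $.499+.00355\times .501\approx .5008$ times $\mathbb{E}[\mu(\mathcal{G})]$, which falls short of $.501$ by roughly $2\times 10^{-4}$. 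So your claim that the gain ``pushes the expected weight of $x$ to at least $.501\,\mathbb{E}[\mu(\mathcal{G})]$'' is false as computed, and no sharper evaluation of $c_\delta$ rescues it, since the binding coefficient is the semi-heavy one, $\approx .0487$. What you have actually uncovered is an inconsistency in the paper itself: in its displayed chain the paper substitutes $\varphi(H)+\varphi(H^*)\ge 0.09\,\varphi(C)$ --- not the $0.074\,\varphi(C)$ granted by the hypothesis --- and it is $.048\times .09=.00432$ that makes its arithmetic close at $.50106$. The repair is either to strengthen the hypothesis constant to $0.09$ (the complementary case in Theorem \ref{thm:mainweightednaive} then works with $\varphi(C^*)>0.91\,\varphi(C)$ and still has ample slack, since $.57\times .91>.501$), or to weaken the conclusion to roughly $.5007-11\epsilon$; with the $0.074$ hypothesis as stated, neither your argument nor the paper's yields $.501-11\epsilon$.
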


Now, we must look at when $\varphi(H) + \varphi(H*) < 0.074\varphi(C)$, i.e. when the heavy and semi-heavy edges do not constitute a large part of the expected maximum matching on crucial edges. We will use $C^*$ to denote the set of crucial edges that are not heavy nor semi-heavy.  We will \emph{direct} edges of $C^*$ based on their adjacent non-crucial edges' contribution to the maximum matching of $\mathcal{G}$.

\begin{definition}
  We will classify edges $e = (u, v) \in C^*$ into the following three types of edges:
\begin{enumerate}
    \item If $\varphi_v^N \geq \varphi_u^N$, direct $e$ towards $v$.
    \item If $\varphi_v^N < \varphi_u^N$ and $w_e \leq 2(1+\delta)\varphi_v^N$, direct $e$ towards $v$.
    \item Else, direct $e$ towards $u$.  Note, in this case $\varphi_v^N < \varphi_u^N$ and $w_e > 2(1+\delta)\varphi_v^N$.
\end{enumerate}
\end{definition}

Note that edge types 1-3 partition $C^*$.  Thus, using these definitions and directed edge types, we may prove properties of $C^*$ and our desired bound for the approximation ratio of sub-graph $Q$ from Algorithm $\ref{alg:nonadaptive1}$.





\begin{theorem}
\label{thm:mainweightednaive}
    Given sub-graph $Q$ from Algorithm $\ref{alg:nonadaptive1}$, $\mathbb{E}[\mu(\mathcal{Q})]/\mathbb{E}[\mu(\mathcal{G})] \geq .501 - 11\epsilon$.
\end{theorem}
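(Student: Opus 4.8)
The plan is to prove Theorem \ref{thm:mainweightednaive} by a case analysis driven by how the expected matching weight $\mathbb{E}[\mu(\mathcal{G})] = \varphi(C) + \varphi(N)$ splits between crucial and non-crucial edges. First I would dispose of the two easy cases. If $\varphi(C) \geq .501\,\mathbb{E}[\mu(\mathcal{G})]$, then Lemma \ref{lem:cbound} already gives $\mathbb{E}[\mu(\mathcal{Q})] \geq (1-\epsilon)\varphi(C) \geq (.501 - \epsilon)\mathbb{E}[\mu(\mathcal{G})]$; symmetrically, if $\varphi(N) \geq .501\,\mathbb{E}[\mu(\mathcal{G})]$, Lemma \ref{lem:nbound} yields the bound with the weaker $(1-10\epsilon)$ factor. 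Since $\varphi(C) + \varphi(N) = \mathbb{E}[\mu(\mathcal{G})]$ by Observation \ref{obs:expmatch}, the only remaining situation is the balanced one in which both quantities lie in $[.499, .501]\,\mathbb{E}[\mu(\mathcal{G})]$, and it is here that the weighted crucial edge procedure must do genuine work.

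Within the balanced case I would split on the mass of heavy and semi-heavy edges. If $\varphi(H) + \varphi(H^*) \geq 0.074\,\varphi(C)$, Lemma \ref{lem:weightedsupport} closes the case immediately. So the crux is the complementary subcase $\varphi(H) + \varphi(H^*) < 0.074\,\varphi(C)$, in which the light crucial edges $C^*$ carry at least a $0.926$ fraction of $\varphi(C)$. Here I would estimate the expected weight of the combined fractional matching $x$ produced by the non-crucial and weighted crucial edge procedures. Because $M_C$ is a matching, its edges are vertex-disjoint, so each vertex is touched by at most one crucial edge and no $\varphi_v^N$ term is counted twice. Taking expectations, each crucial edge $e = (u,v)$ appears in $M_C$ with probability $q_e$ and contributes its per-edge optimum $\max_{0\le\alpha\le 1}\big(g(u,\alpha) + g(v,\alpha) + \alpha\, w_e\big)$, while every vertex not covered by $M_C$ retains its full non-crucial contribution.

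The main obstacle is to lower bound this per-edge optimum for each \emph{directed} $C^*$ edge in terms of $\varphi_u^N$, $\varphi_v^N$, and $w_e$, and then to sum these bounds cleanly. For this I would exploit the three-way direction of $C^*$: evaluating $g(u,\alpha) + g(v,\alpha) + \alpha\, w_e$ at a judiciously chosen $\alpha$ — for instance $\alpha$ equal to the smaller of $q_u^N, q_v^N$ for type 1 and 2 edges, and $\alpha$ close to $1$ for type 3 edges whose weight dominates the budget of the endpoint they point away from — shows that the optimum is at least the non-crucial weight retained at the target vertex plus a strictly positive gain, proportional either to $w_e$ or to the $\delta$-slack the definition of $C^*$ guarantees. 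Charging each edge's value to the single endpoint it is directed toward is legitimate precisely because $M_C$ is a matching, so I would assemble a global bound of the form $\mathbb{E}\!\left[\sum_{e} w_e\, x_e\right] \geq (1-\epsilon)\big(\varphi(N) + \gamma\cdot\varphi(C^*)\big)$ for a definite constant $\gamma = \gamma(\delta) > 0$.

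Finally I would substitute the balanced-case estimates $\varphi(N) \geq .499\,\mathbb{E}[\mu(\mathcal{G})]$ and $\varphi(C^*) \geq 0.926\,\varphi(C) \geq 0.926 \cdot .499\,\mathbb{E}[\mu(\mathcal{G})]$ together with $\delta = .09$, and check that the resulting constant clears $.501$ once the fractional-to-integral loss of Lemma \ref{lem:frac2int} and the procedure's $(1-\epsilon)$ factors are absorbed into the $11\epsilon$ slack. The most delicate and error-prone part is the per-edge $\argmax$ lower bound combined with the direction-based charging: one must verify in each of the three edge types that the chosen $\alpha$ both respects the vertex budgets, so that step 3 of the procedure forces no further scaling down, and extracts enough additional weight from $w_e$ over the $\alpha = 0$ non-crucial baseline $\varphi_u^N + \varphi_v^N$ to beat $.5$ by the narrow margin the numerics require.
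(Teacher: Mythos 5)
Your overall architecture matches the paper's: the easy cases via Lemmas \ref{lem:cbound} and \ref{lem:nbound}, the split on $\varphi(H)+\varphi(H^*) \geq 0.074\,\varphi(C)$ handled by Lemma \ref{lem:weightedsupport}, the directed classification of $C^*$, and the final conversion via Lemma \ref{lem:frac2int}. The gap is in the core subcase $\varphi(C^*) > 0.926\,\varphi(C)$: the global bound you propose to assemble, $\mathbb{E}\left[\sum_{e} w_e x_e\right] \geq (1-\epsilon)\left(\varphi(N) + \gamma\cdot\varphi(C^*)\right)$ with a fixed constant $\gamma>0$, is false, so no choice of $\alpha$ per edge type and no charging scheme can produce it. The per-edge gain over the $\alpha = 0$ baseline is not uniformly proportional to $q_e w_e$. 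Concretely, take a type-3 edge $e=(u,v)$ directed to $u$ with $q_u^N = q_v^N = 1-\eta$ for small $\eta \geq \tau$, $\varphi_u^N = 10\varphi_v^N$, and $w_e = 3\varphi_v^N$; this edge is in $C^*$ (not heavy since $w_e < (1+\delta)(\varphi_u^N+\varphi_v^N)$, not semi-heavy since $q_u^N > 1-\delta$). For $\alpha > \eta$ the objective $g(u,\alpha)+g(v,\alpha)+\alpha w_e$ has derivative $w_e - \varphi_u^N/q_u^N - \varphi_v^N/q_v^N < 0$, so the optimum sits at $\alpha^* = \eta$ and the gain is $\eta\, w_e$: a $\Theta(\eta)$ fraction, not a constant fraction, of the edge's weight. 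Its expected gain $\approx \eta\, q_e w_e$ is thus a vanishing fraction of its contribution $q_e w_e$ to $\varphi(C^*)$ as $\eta$ approaches $\tau$, which can be made arbitrarily small. A graph whose $C^*$ mass sits on such edges defeats any absolute constant $\gamma$.

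What rescues the theorem --- and what your plan is missing --- is that in exactly this bad regime the crucial mass itself is throttled: $q_e \leq q_v^C \leq 1-q_v^N$, and by Lemma \ref{lem:weightedw}, $\varphi_v^I \leq 2(1+\delta)(1-q_v^N)\varphi_v^N$, so a vertex whose non-crucial budget is nearly exhausted also carries little $C^*$ weight \emph{relative to its own non-crucial weight}. The paper therefore proves a multiplicative rather than additive bound: per vertex, the loss fraction $\left(\delta\varphi_v^I + q_v^N\varphi_v^I\right)/\left(\varphi_v^I + \varphi_v^N/2\right)$ is bounded by a one-variable optimization (with $\delta = 0.09$) at $\leq 0.43$, and these per-vertex bounds are merged by the mediant inequality (Lemma \ref{lem:algebraic2}) to give $\mathbb{E}\left[\sum_e w_e x_e\right] \geq (1-2\epsilon)\cdot 0.57\cdot\left(\varphi(N)+\varphi(C^*)\right) \geq (0.528-10\epsilon)\,\mathbb{E}[\mu(\mathcal{G})]$, which in this subcase holds without even invoking the balanced-case assumption. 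Note that the two bound shapes are incomparable: $0.57\left(\varphi(N)+\varphi(C^*)\right)$ deliberately gives up a constant fraction of $\varphi(N)$ in exchange for a constant fraction of $\varphi(C^*)$, and it is precisely this trade-off between the crucial gain and the non-crucial weight at saturated vertices that the numerics of Theorem \ref{thm:mainweightednaive} hinge on; a bound anchored at the full $\varphi(N)$ plus a per-edge surplus cannot be pushed through.
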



\section{EDCS 2/3 Approximation}
\label{sec:edcs}
In this section we improve our bound for unweighted graphs to $2/3-\epsilon$ for an arbitrary constant $\epsilon>0$. Inspired by a work of \cite{DBLP:conf/soda/AssadiB19} we use edge-degree constrained sub-graph (EDCS) for designing our algorithm. Before stating our result we first give the definition of EDCS from \cite{DBLP:conf/icalp/BernsteinS15} and \cite{DBLP:conf/soda/BernsteinS16}. 

\begin{definition}
[\cite{DBLP:conf/icalp/BernsteinS15,DBLP:conf/soda/BernsteinS16}]
\label{def:edcs}
For any graph $G=(V,E)$, and integers $\beta \ge \beta^- \ge 0$, an edge-degree constrained sub-graph(EDCS)$(G,\beta, \beta^-)$ is a sub-graph $H=(V, E_H)$ with the following two properties.
\begin{enumerate}
    \item For every edge $(v,u) \in E_H$: $deg_H(v)+ deg_H(u) \le \beta$.
    \label{enum:mainedcs1}
    \item For every edge $(v,u) \notin E_H$: $deg_H(v)+ deg_H(u) \ge \beta^-$.
    \label{enum:mainedcs2}
\end{enumerate}
\end{definition}

It has been shown in \cite{DBLP:conf/icalp/BernsteinS15} and \cite{DBLP:conf/soda/BernsteinS16} that for any graph $G$, and any parameters $\beta > \beta^-$, an EDCS of $G$ exists. Also it is easy to see that an EDCS of $G$ is degree-bounded and has a maximum degree of $\beta$. An interesting property of EDCS is that for a large enough $\beta$ and $\beta^-$, it always preserves $2/3$ approximation of maximum matching in $G$. Specifically, we have the following.

\begin{theorem} [\cite{DBLP:conf/soda/AssadiB19}]
\label{thm:edcs}
Let $G=(V,E)$ be any graph, $\epsilon <1/2$, $\lambda \le \epsilon/32$, $\beta \ge 8 \lambda^{-2} \log(1/\lambda)$, $\beta^- \ge (1-\lambda) \beta$, and Let $H$ be an EDCS$(G,\beta, \beta^-)$. Then $\mu(H)\ge (2/3 - \epsilon) \mu(G)$. 
\end{theorem}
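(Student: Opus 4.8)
The cleanest route I would take separates two phenomena: an EDCS preserves almost all of the matching \emph{fractionally}, and the loss from a fractional to an integral matching in general graphs is exactly a factor of $2/3$. Concretely, I would prove and combine two statements. (A) The graph $H$ admits a fractional matching $y$ (that is, $y_e \ge 0$ with $\sum_{e \ni v} y_e \le 1$ for every $v$) of total value $\sum_e y_e \ge (1 - O(\lambda))\,\mu(G)$. (B) Every graph satisfies $\mu(H) \ge \tfrac23\,\nu_f(H)$, where $\nu_f(H)=\max_y \sum_e y_e$ is the fractional matching number. Chaining these gives $\mu(H) \ge \tfrac23(1-O(\lambda))\mu(G) \ge (2/3 - \epsilon)\mu(G)$ once $\lambda \le \epsilon/32$ is small enough to absorb the constants. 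This decomposition also explains the exact shape of the bound in Theorem \ref{thm:edcs}: the $2/3$ is entirely the integrality gap of the degree-constrained matching polytope in general graphs, which is why the analogous statement for bipartite graphs improves to $1-\epsilon$.

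I would dispatch statement (B) first, as it is a standard structural fact independent of the EDCS hypotheses. The fractional matching polytope defined only by the degree constraints $\sum_{e \ni v} y_e \le 1$ has half-integral vertices whose fractional support decomposes into vertex-disjoint odd cycles. On an odd cycle of length $2k+1$ the fractional optimum has value $\tfrac{2k+1}{2}$ while the best integral matching has value $k$, a ratio of $\tfrac{2k}{2k+1}\ge \tfrac23$, with equality in the worst case of a triangle. Rounding an optimal vertex of the polytope therefore loses at most a factor $2/3$, giving $\mu(H)\ge \tfrac23\nu_f(H)$.

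The substance of the proof is statement (A), and this is where Definition \ref{def:edcs} enters. Fix a maximum matching $M^*$ of $G$; the goal is to produce a fractional matching of $H$ whose value nearly matches $|M^*|$. The two EDCS properties pull in complementary directions and together make this possible: property 2 ensures that whenever an $M^*$-edge $(u,v)$ is \emph{absent} from $H$, its endpoints compensate by carrying many incident $H$-edges, since $\deg_H(u)+\deg_H(v)\ge \beta^-$, so there is ``capacity'' in $H$ around every $M^*$-edge; property 1 simultaneously caps each vertex's degree-sum with its neighbors at $\beta$, preventing that capacity from being over-subscribed. I would exploit this either by routing the weight of each $M^*$-edge onto nearby $H$-edges and averaging, or, more robustly, through the dual formulation: since $\nu_f(H)$ equals the minimum fractional vertex cover of $H$, it suffices to show every fractional cover $z$ of $H$ has value $\ge (1-O(\lambda))|M^*|$, which follows because covering the many $H$-edges that property 2 forces around each $M^*$-edge already compels a cost of order one near that edge.

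I expect the main obstacle to be precisely this quantitative control in (A): driving the multiplicative loss down to $1-O(\lambda)$ rather than some fixed constant. A naive routing overloads vertices by constant factors, so the argument must use the near-tightness $\beta^-\ge (1-\lambda)\beta$ carefully, exploiting that the slack between properties 1 and 2 is only $\lambda\beta$ and that this same slack both creates and bounds the deficiency. This is also where the hypothesis $\beta \ge 8\lambda^{-2}\log(1/\lambda)$ is essential: a large $\beta$ forces each $H$-edge to carry only an $O(1/\beta)$ share of weight, so per-vertex loads concentrate tightly around their means by a Hoeffding-type bound, which is what produces the $\lambda^{-2}$ and $\log(1/\lambda)$ in the threshold, and the integrality of the degrees contributes only lower-order error. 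Once (A) is established with loss $1-O(\lambda)$ and combined with the integrality-gap bound (B), the choice $\lambda \le \epsilon/32$ yields $\mu(H)\ge (2/3-\epsilon)\mu(G)$; the full execution of this argument is carried out in \cite{DBLP:conf/soda/AssadiB19}.
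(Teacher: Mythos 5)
Your step (B) is a correct and standard fact: extreme points of the polytope $\{y\ge 0,\ \sum_{e\ni v}y_e\le 1\}$ are half-integral with the half-edges forming vertex-disjoint odd cycles, giving $\mu(H)\ge\tfrac23\nu_f(H)$. The genuine gap is step (A), which is false, and with it the entire diagnosis that the $2/3$ in Theorem~\ref{thm:edcs} is ``entirely the integrality gap.'' Since the bipartite fractional matching polytope is integral, $\nu_f(H)=\mu(H)$ for bipartite $H$, so (A) would force every EDCS of a bipartite graph to preserve a $(1-O(\lambda))$-fraction of $\mu(G)$. Here is a bipartite counterexample in which even $\beta^-=\beta$ (i.e.\ $\lambda=0$): take classes $U,A,X$ on the left and $V,B,D$ on the right with $|U|=|V|=n$ and $|A|=|B|=|X|=|D|=n\cdot\frac{\beta/2}{\beta/2-1}$; let $H$ consist of a biregular graph between $U$ and $B$ (degrees $\beta/2$ on the $U$ side, $\beta/2-1$ on the $B$ side), a symmetric biregular graph between $V$ and $X$, a perfect matching between $A$ and $B$, and a perfect matching between $X$ and $D$; let $G$ be $H$ plus a perfect matching between $U$ and $V$. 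Every $H$-edge has degree sum at most $\beta$, and every non-$H$ edge $(u_i,v_i)$ has degree sum exactly $\beta/2+\beta/2=\beta\ge\beta^-$, so $H$ is an EDCS$(G,\beta,\beta)$. Yet $\mu(G)=n+|X|+|B|$ (match $U$--$V$, $A$--$B$, $X$--$D$) while $X\cup B$ is a vertex cover of $H$, so $\mu(H)=\nu_f(H)=|X|+|B|$, and the ratio tends to $2/3$ as $\beta\to\infty$. So an EDCS loses a third of the matching \emph{fractionally}, in a bipartite graph, with no odd cycle anywhere; your closing claim that the bipartite analogue improves to $1-\epsilon$ is false for the same reason, and your outline reaches the correct constant only because the false step (A) claims losslessness exactly where the real loss occurs.

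For comparison: this paper does not prove Theorem~\ref{thm:edcs} at all --- it imports it from \cite{DBLP:conf/soda/AssadiB19} and only uses it as a black box inside Lemma~\ref{lem:edcs-main}. In the actual proof of \cite{DBLP:conf/soda/AssadiB19}, the constant $2/3$ is paid in the core combinatorial argument, not in rounding: one first proves the bipartite case by a Hall/K\"onig-type deficiency argument that plays Property~\ref{enum:mainedcs2} of Definition~\ref{def:edcs} (endpoints of missing matching edges have large $H$-degree) against Property~\ref{enum:mainedcs1} (each of their $H$-neighbors can absorb at most $\beta$ minus that degree), and the example above shows that this trade-off genuinely bottoms out at $2/3$; the general case is then obtained from the bipartite one with only $O(\epsilon)$ additional loss, which is where the hypothesis $\beta\ge 8\lambda^{-2}\log(1/\lambda)$ and concentration enter --- close in spirit to your Hoeffding remark, but in service of the bipartite-to-general reduction, not of lossless fractional preservation. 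In particular, integrality is never allowed to cost a constant factor: wherever fractional matchings are rounded in this line of work (cf.\ Lemma~\ref{lem:frac2int} in this paper), the fractional matching is arranged to put little weight on small odd sets so that rounding loses only $1-\epsilon$. If you want to salvage your two-step plan, the statement you would need in place of (A) is $\nu_f(H)\ge(2/3-O(\lambda))\mu(G)$ together with that odd-set condition, and you must then drop the factor-$2/3$ rounding step (B) entirely, since paying $2/3$ twice would only yield $4/9$.
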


A result by \cite{DBLP:conf/soda/AssadiB19} shows that for any stochastic graph $G$ where each edge is realized with a probability of $p_e $, an EDCS$(G, \beta, \beta-1)$ also preserves a $2/3-\epsilon$ approximation of the expected maximum matching. We show a similar result for the generalized stochastic matching problem where both edges and vertices are stochastic. Specifically, let $Q$ be an EDCS$(G,\beta, \beta-1)$ for $\beta \ge \frac{C \log(1/(\epsilon \cdot  p_v \cdot p_e))}{\epsilon^2 p_v p_e}$ where $C$ is a large constant. 
Also, Let $\mathcal{Q}$ be the realized portion of $Q$. In the following lemma we show that using the EDCS approach, $\mathcal{Q}$ achieves a $2/3 - O(\epsilon)$ matching approximation ratio in expectation. The proof of Lemma \ref{lem:edcs-main} can be seen in the final section of the Appendix. 

\begin{lemma}
\label{lem:edcs-main}
$\mathbb{E}[\mu(\mathcal{Q})] \ge (2/3-O(\epsilon)) \mathbb{E}[\mu(\mathcal{G})]$ \,.
\end{lemma}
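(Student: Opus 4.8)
The plan is to show that, for a typical realization $\mathcal{G}$, the realized sparsifier $\mathcal{Q} = Q \cap \mathcal{G}$ is itself (essentially) an EDCS of $\mathcal{G}$ with scaled parameters $\beta_{\mathcal{Q}} \approx p_v p_e\,\beta$ and $\beta_{\mathcal{Q}}^{-} \ge (1-O(\epsilon))\beta_{\mathcal{Q}}$, and then to invoke Theorem \ref{thm:edcs} with $\lambda = O(\epsilon)$ to conclude $\mu(\mathcal{Q}) \ge (2/3 - O(\epsilon))\,\mu(\mathcal{G})$ for that realization. The intuition is that a realized edge of $Q$ sees each of its endpoint-degrees scaled down by the survival factor $p_v p_e$ (conditioned on the endpoints being realized), so the two EDCS inequalities satisfied by $Q$ in $G$ should transfer, up to concentration, into the corresponding inequalities for $\mathcal{Q}$ in $\mathcal{G}$; taking expectations at the end then yields the lemma.

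The key computation is that for any realized vertex $v$, $\deg_{\mathcal{Q}}(v)$ is a sum of $\deg_Q(v) \le \beta$ indicators, each with conditional mean $p_v p_e$, so $\mathbb{E}[\deg_{\mathcal{Q}}(v) \mid v \in \mathcal{V}] = p_v p_e \deg_Q(v)$. For any edge relevant to an EDCS constraint the degree-sum $\deg_{\mathcal{Q}}(u) + \deg_{\mathcal{Q}}(v)$ is a sum of $\Theta(\beta)$ such indicators and, by a Chernoff bound, deviates from $p_v p_e(\deg_Q(u)+\deg_Q(v))$ by more than an $\epsilon$-fraction with probability at most $\exp(-\Omega(\epsilon^2 p_v p_e \beta))$; the choice $\beta \ge C\log(1/(\epsilon p_v p_e))/(\epsilon^2 p_v p_e)$ makes this at most $\epsilon$. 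A subtlety unique to the generalized model is that edge survivals are \emph{not} independent, since two edges at $u$ and $v$ sharing a common neighbor $w$ both depend on whether $w$ is realized. I would neutralize this either by first conditioning on all vertex realizations (so that the edge-survival indicators become conditionally independent) or by a bounded-differences argument in which each vertex-realization bit perturbs each degree by at most one; either route recovers the same concentration bound. This is exactly the point at which the full edge-independence relied on by \cite{DBLP:conf/stoc/BehnezhadDH20} and \cite{DBLP:journals/corr/abs-2004-08703} fails and must be replaced.

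The main obstacle is that this failure probability is only a small constant rather than $1/\mathrm{poly}(n)$, so I cannot union-bound over all $n$ vertices to claim $\mathcal{Q}$ is \emph{exactly} an EDCS of $\mathcal{G}$ with high probability — doing so would force $\beta = \Omega(\log n)$, whereas our $\beta$ must be independent of $n$. Following \cite{DBLP:conf/soda/AssadiB19}, the fix is to argue in expectation and charge the violating edges. Call a realized edge $e=(u,v)$ \emph{bad} if it breaks the relevant EDCS inequality for $\mathcal{Q}$ in $\mathcal{G}$ (degree-sum above $\beta_{\mathcal{Q}}$ when $e \in \mathcal{E}_Q$, or below $\beta_{\mathcal{Q}}^{-}$ when $e \in \mathcal{E}\setminus\mathcal{E}_Q$); because the EDCS property of $Q$ in $G$ guarantees the corresponding degree-sum in $Q$ is $\Theta(\beta)$, the previous paragraph shows any fixed candidate edge is bad with probability at most $\epsilon$ given that it is realized. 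Deleting the bad set $B$ leaves a subgraph that \emph{is} a genuine EDCS of $\mathcal{G}\setminus B$, so Theorem \ref{thm:edcs} gives $\mu(\mathcal{Q}) \ge (2/3-O(\epsilon))\,\mu(\mathcal{G}\setminus B) \ge (2/3-O(\epsilon))(\mu(\mathcal{G}) - \nu(B))$, where $\nu(B)$ is the matching number of $B$. It then remains to show $\mathbb{E}[\nu(B)] \le O(\epsilon)\,\mathbb{E}[\mu(\mathcal{G})]$, i.e. that only an $O(\epsilon)$-fraction of a maximum matching is expected to be bad.

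I expect this last charging step to be the hard part. The naive bound $\nu(B) \le |B|$ is far too weak, since $\mathbb{E}[|B|]$ scales with $|E|$ rather than with $\mathbb{E}[\mu(\mathcal{G})]$; the content is to sum the per-edge failure probability only over the $\le \mu(\mathcal{G})$ edges of a matching, while correctly decoupling the event ``$e$ is bad'' from the realization-dependent choice of that matching (a maximum matching can be biased toward exactly the low-degree vertices that tend to carry property-2 violations). I would handle this decoupling as in \cite{DBLP:conf/soda/AssadiB19}, for instance by comparing against the deterministic weighting carried by the matching probabilities $q_e$ (via Observation \ref{obs:expmatch}) rather than a single realized optimum, and by treating the property-1 and property-2 bad sets separately. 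Granting this bound, combining the two bounds above gives $\mathbb{E}[\mu(\mathcal{Q})] \ge (2/3-O(\epsilon))(1-O(\epsilon))\,\mathbb{E}[\mu(\mathcal{G})] = (2/3-O(\epsilon))\,\mathbb{E}[\mu(\mathcal{G})]$, as required, with everything outside the charging step being concentration plus a direct application of Theorem \ref{thm:edcs}.
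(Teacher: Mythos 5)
Your overall strategy is the paper's: transfer the EDCS property of $Q$ in $G$ to $\mathcal{Q}$ in $\mathcal{G}$ with parameters scaled by $p_v p_e$ via Chernoff concentration, excise the violating part, apply Theorem \ref{thm:edcs}, and charge the excision against $\mathbb{E}[\mu(\mathcal{G})]$. However, two of your steps have genuine gaps. First, deleting the bad \emph{edges} $B$ does not leave ``a genuine EDCS of $\mathcal{G}\setminus B$.'' Removing a property-1-violating edge of $\mathcal{E}_Q$ lowers the $\mathcal{Q}$-degrees of both endpoints, so an edge of $\mathcal{E}\setminus\mathcal{E}_Q$ that was not bad (degree-sum $\ge \beta_{\mathcal{Q}}^-$ before deletion) can fall below the lower threshold after deletion; these cascade violations are not in $B$, and a single overshooting vertex can trigger up to $\beta$ of them. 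This is precisely why the paper works with violating \emph{vertices}: it defines $\mathcal{V}^+$ (realized degree too high) and $\mathcal{V}^-$ (realized degree too low \emph{or adjacent in $Q$ to $\mathcal{V}^+$}), removes from both $\tilde{\mathcal{Q}}$ and $\tilde{\mathcal{G}}$ every edge incident to $\mathcal{V}^+$, and additionally removes from $\tilde{\mathcal{G}}$ every non-$Q$ edge incident to $\mathcal{V}^-$, so the lower-bound property is never tested at a vertex whose degree was perturbed. Your construction needs this second round of removals to be correct. (Your worry about dependent edge realizations, by contrast, is harmless here: conditioned on $v$ being realized, the indicators for distinct incident edges of $v$ involve distinct vertex coins and distinct edge coins, so per-vertex concentration needs no extra work.)

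Second, the step you ``grant'' is where the real content lies, and your stated per-edge failure probability of $\epsilon$ is too weak for \emph{any} charging to succeed. The candidate bad objects are not confined to one matching; the only a priori bound on their number is $O(\beta\,\mu(G))$ (since $G$ has a vertex cover of size $2\mu(G)$ and $Q$ has maximum degree $\beta$), and since $\beta \gg 1/(p_v^2 p_e)$ while $\mathbb{E}[\mu(\mathcal{G})]$ can be as small as $p_v^2 p_e\, \mu(G)$, a bound of $\epsilon\cdot\beta\,\mu(G)$ is not $O(\epsilon)\,\mathbb{E}[\mu(\mathcal{G})]$. The paper sidesteps your decoupling problem (the correlation between ``$e$ is bad'' and ``$e \in M(\mathcal{G})$'') entirely: with $\beta = C\log(1/(\epsilon p_v p_e))/(\epsilon^2 p_v p_e)$ the Chernoff bound actually yields failure probability $(\epsilon p_v p_e)^{\Omega(C)}$, polynomially small with exponent controlled by the constant $C$; Lemma \ref{lem:vpvm} multiplies this tiny probability by the $O(\beta\,\mu(G))$ non-isolated vertices of $Q$ to get $\mathbb{E}[|\mathcal{V}^+|]+\mathbb{E}[|\mathcal{V}^-|] \le \epsilon^5 (p_v p_e)^5 \mu(G)$, and then converts $\mu(G)$ into $\mathbb{E}[\mu(\mathcal{G})]$ using $\mathbb{E}[\mu(\mathcal{G})] \ge p_v^2 p_e\, \mu(G)$, losing only one matched edge per removed vertex: $\mu(\tilde{\mathcal{G}}) \ge \mu(\mathcal{G}) - |\mathcal{V}^+| - |\mathcal{V}^-|$. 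If you strengthen your concentration estimate to this polynomial form and switch to the vertex-based excision, your outline becomes the paper's proof; as written, both the EDCS claim after deletion and the charging bound remain unproven.
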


\section{Conclusion}
We have now proven bounds on the approximation ratio of Algorithm \ref{alg:nonadaptive1} breaking a half-approximation.  The natural next step is to improve the ratio up to $(1-\epsilon)$.  For the old model of stochastic matching, \cite{DBLP:journals/corr/abs-2004-08703} achieved a $(1-\epsilon)$ approximation for weighted stochastic matching using \ref{alg:nonadaptive1} complemented by a greedy sub-algorithm. Unfortunately, differences in the models bar us from adapting their new algorithm directly. Previous analysis techniques relied on the complete independence of edge realization, and it seems to us to be non-trivial to overcome this difference.

\medskip

\section{Acknowledgements}
\label{sec:ack}
This research was supported by the NSF BIGDATA Grant No. 1546108, NSF SPX Grant No. 1822738, NSF AF Grant No. 2114269, and an Amazon AWS award.

\bibliography{refs}

\begin{thebibliography}{12}
\providecommand{\natexlab}[1]{#1}

\bibitem[{Assadi and Bernstein(2019)}]{DBLP:conf/soda/AssadiB19}
Assadi, S.; and Bernstein, A. 2019.
\newblock Towards a Unified Theory of Sparsification for Matching Problems.
\newblock In Fineman, J.~T.; and Mitzenmacher, M., eds., \emph{2nd Symposium on
  Simplicity in Algorithms, SOSA@SODA 2019, January 8-9, 2019 - San Diego, CA,
  {USA}}, volume~69 of \emph{{OASICS}}, 11:1--11:20. Schloss Dagstuhl -
  Leibniz-Zentrum f{\"{u}}r Informatik.

\bibitem[{Assadi, Khanna, and Li(2016)}]{DBLP:conf/sigecom/AssadiKL16}
Assadi, S.; Khanna, S.; and Li, Y. 2016.
\newblock The Stochastic Matching Problem with (Very) Few Queries.
\newblock In Conitzer, V.; Bergemann, D.; and Chen, Y., eds., \emph{Proceedings
  of the 2016 {ACM} Conference on Economics and Computation, {EC} '16,
  Maastricht, The Netherlands, July 24-28, 2016}, 43--60. {ACM}.

\bibitem[{Assadi, Khanna, and Li(2017)}]{DBLP:conf/sigecom/AssadiKL17}
Assadi, S.; Khanna, S.; and Li, Y. 2017.
\newblock The Stochastic Matching Problem: Beating Half with a Non-Adaptive
  Algorithm.
\newblock In Daskalakis, C.; Babaioff, M.; and Moulin, H., eds.,
  \emph{Proceedings of the 2017 {ACM} Conference on Economics and Computation,
  {EC} '17, Cambridge, MA, USA, June 26-30, 2017}, 99--116. {ACM}.

\bibitem[{Behnezhad and Derakhshan(2020)}]{DBLP:journals/corr/abs-2004-08703}
Behnezhad, S.; and Derakhshan, M. 2020.
\newblock Stochastic Weighted Matching:
  {\textdollar}(1-{\(\epsilon\)}){\textdollar} Approximation.
\newblock \emph{CoRR}, abs/2004.08703.

\bibitem[{Behnezhad et~al.(2019{\natexlab{a}})Behnezhad, Derakhshan, Farhadi,
  Hajiaghayi, and Reyhani}]{DBLP:conf/sagt/BehnezhadD0HR19}
Behnezhad, S.; Derakhshan, M.; Farhadi, A.; Hajiaghayi, M.; and Reyhani, N.
  2019{\natexlab{a}}.
\newblock Stochastic Matching on Uniformly Sparse Graphs.
\newblock In Fotakis, D.; and Markakis, E., eds., \emph{Algorithmic Game Theory
  - 12th International Symposium, {SAGT} 2019, Athens, Greece, September 30 -
  October 3, 2019, Proceedings}, volume 11801 of \emph{Lecture Notes in
  Computer Science}, 357--373. Springer.

\bibitem[{Behnezhad, Derakhshan, and
  Hajiaghayi(2020)}]{DBLP:conf/stoc/BehnezhadDH20}
Behnezhad, S.; Derakhshan, M.; and Hajiaghayi, M. 2020.
\newblock Stochastic matching with few queries: (1-{\(\epsilon\)})
  approximation.
\newblock In Makarychev, K.; Makarychev, Y.; Tulsiani, M.; Kamath, G.; and
  Chuzhoy, J., eds., \emph{Proccedings of the 52nd Annual {ACM} {SIGACT}
  Symposium on Theory of Computing, {STOC} 2020, Chicago, IL, USA, June 22-26,
  2020}, 1111--1124. {ACM}.

\bibitem[{Behnezhad et~al.(2019{\natexlab{b}})Behnezhad, Farhadi, Hajiaghayi,
  and Reyhani}]{DBLP:conf/soda/BehnezhadFHR19}
Behnezhad, S.; Farhadi, A.; Hajiaghayi, M.; and Reyhani, N. 2019{\natexlab{b}}.
\newblock Stochastic Matching with Few Queries: New Algorithms and Tools.
\newblock In Chan, T.~M., ed., \emph{Proceedings of the Thirtieth Annual
  {ACM-SIAM} Symposium on Discrete Algorithms, {SODA} 2019, San Diego,
  California, USA, January 6-9, 2019}, 2855--2874. {SIAM}.

\bibitem[{Behnezhad and Reyhani(2018)}]{DBLP:conf/sigecom/BehnezhadR18}
Behnezhad, S.; and Reyhani, N. 2018.
\newblock Almost Optimal Stochastic Weighted Matching with Few Queries.
\newblock In Tardos, {\'{E}}.; Elkind, E.; and Vohra, R., eds.,
  \emph{Proceedings of the 2018 {ACM} Conference on Economics and Computation,
  Ithaca, NY, USA, June 18-22, 2018}, 235--249. {ACM}.

\bibitem[{Bernstein and Stein(2015)}]{DBLP:conf/icalp/BernsteinS15}
Bernstein, A.; and Stein, C. 2015.
\newblock Fully Dynamic Matching in Bipartite Graphs.
\newblock In Halld{\'{o}}rsson, M.~M.; Iwama, K.; Kobayashi, N.; and Speckmann,
  B., eds., \emph{Automata, Languages, and Programming - 42nd International
  Colloquium, {ICALP} 2015, Kyoto, Japan, July 6-10, 2015, Proceedings, Part
  {I}}, volume 9134 of \emph{Lecture Notes in Computer Science}, 167--179.
  Springer.

\bibitem[{Bernstein and Stein(2016)}]{DBLP:conf/soda/BernsteinS16}
Bernstein, A.; and Stein, C. 2016.
\newblock Faster Fully Dynamic Matchings with Small Approximation Ratios.
\newblock In Krauthgamer, R., ed., \emph{Proceedings of the Twenty-Seventh
  Annual {ACM-SIAM} Symposium on Discrete Algorithms, {SODA} 2016, Arlington,
  VA, USA, January 10-12, 2016}, 692--711. {SIAM}.

\bibitem[{Blum et~al.(2015)Blum, Dickerson, Haghtalab, Procaccia, Sandholm, and
  Sharma}]{DBLP:conf/sigecom/BlumDHPSS15}
Blum, A.; Dickerson, J.~P.; Haghtalab, N.; Procaccia, A.~D.; Sandholm, T.; and
  Sharma, A. 2015.
\newblock Ignorance is Almost Bliss: Near-Optimal Stochastic Matching With Few
  Queries.
\newblock In Roughgarden, T.; Feldman, M.; and Schwarz, M., eds.,
  \emph{Proceedings of the Sixteenth {ACM} Conference on Economics and
  Computation, {EC} '15, Portland, OR, USA, June 15-19, 2015}, 325--342. {ACM}.

\bibitem[{Yamaguchi and Maehara(2018)}]{DBLP:conf/soda/YamaguchiM18}
Yamaguchi, Y.; and Maehara, T. 2018.
\newblock Stochastic Packing Integer Programs with Few Queries.
\newblock In Czumaj, A., ed., \emph{Proceedings of the Twenty-Ninth Annual
  {ACM-SIAM} Symposium on Discrete Algorithms, {SODA} 2018, New Orleans, LA,
  USA, January 7-10, 2018}, 293--310. {SIAM}.

\end{thebibliography}

\clearpage

\appendix

\section{Missing Proofs of Algorithm \ref{alg:nonadaptive1} Analysis}
\label{appx:algana}
\begin{proof} [Proof of Lemma \ref{lem:nsupport}]
     To prove the first property, note that from step 1 of the procedure we have that $x_e \leq 2\tau/(p_v^2p_e) \leq \epsilon^3$ for at most ${|U| \choose 2}$ edges.  This implies that for any $U \subseteq V$ with $|U| \leq 1/\epsilon$,
     \begin{align*}
        \sum_{e \in E(U)} x_e &= \epsilon^3 \frac{|U| (|U| - 1)}{2}
        \leq \epsilon^3  \frac{\frac{1}{\epsilon}(|U| - 1)}{2} \\
        &\leq \epsilon^2 \lfloor|U| / 2 \rfloor \leq \epsilon \lfloor|U| / 2 \rfloor \,.
     \end{align*}
     To prove the second property, we then observe the scaling from steps 2 and 3 for the edges incident to any vertex $v \in V$.  Since for all edges $s_e \leq 1$, if $\tilde x_v \leq \max\{q^N_v, \epsilon\}$, then trivially $x_v = s_e \cdot \tilde x_v \leq \max\{q^N_v, \epsilon\}$.  
     
     If $\tilde x_v > \max\{q^N_v, \epsilon\}$, then $\max\{q^N_v, \epsilon\} / \tilde x_v \leq 1$ and so, at the end of step 2, $s_e \leq \max\{q^N_v, \epsilon\} / \tilde x_v$.  Thus after step 3,
     \[
        x_v = \tilde x_v \cdot s_e \leq \tilde x_v \cdot \max\{q^N_v, \epsilon\} / \tilde x_v \leq \max\{q^N_v, \epsilon\} \,.
     \]
\end{proof}

\begin{proof} [Proof of Lemma \ref{lem:ns1}]
First, we will show a simple equality between the non-crucial edges in $Q$ and $\varphi(N)$.  Let $\mu_i^N$ denote the random variable representing the weight of the non-crucial edges in the maximum matching from round $i$ of Algorithm $\ref{alg:nonadaptive1}$. In Algorithm $\ref{alg:nonadaptive1}$, the matching chosen at any round $i$ has the same probability of being chosen as the matching chosen for $\mathcal{G}$.  In other words, $\mathbb{E}[\mu_i^N] = \sum_{e \in N} w_e \cdot q_e = \varphi(N)$.  Additionally, let $\overline{\mu^N} = (\mu_1^N + \mu_2^N + ... + \mu_R^N)/R$. Then, by definition of $f_e$, $\overline{\mu^N} = \sum_{e \in E_Q \cap N} w_e \cdot f_e$, and moreover, by linearity of expectation, we have that $\mathbb{E}[\overline{\mu^N}] = \varphi(N)$. Thus, we have the key equality between $\varphi(N)$ and our non-crucial edges of $Q$:
\begin{align}
    \mathbb{E}\left[\sum_{e \in E_Q \cap N} w_e \cdot f_e\right] = \varphi(N).
\end{align}
Unfortunately, this does not tell us enough about the fractional matching on the realization of $Q$, and so, we will need to analyze step 1 of our non-crucial edge procedure. As previously discussed, $\mathbb{E}[f_e] = q_e$ is the observation that was the basis of our choice of procedure.  However, since $x_e = \min\{f_e/(p_v^2p_e), 2\tau/(p_v^2p_e)\}$, we will discuss the probability that $2\tau < f_e$.  We want to show that this probability is small, only at most $\epsilon \cdot q_e$, and thus, does not affect our matching much.  To do so, we let $X = X_1 + X_2 + ... + X_R$ be the sum of random independent events $X_i$, which are 1 if edge $e$ is in the maximum matching of iteration $i$ of Algorithm $\ref{alg:nonadaptive1}$ and 0 otherwise.  We will use a Chernoff bound on $\mathbb{E}[X]$ and work back to $f_e$ and $\tau$ since by definition, $X = R \cdot f_e$.
\begingroup
\allowdisplaybreaks
\begin{align*}
&\Pr(f_e \geq 2 \tau ) = \Pr(f_e - \tau \geq \tau )\\
&\leq P[ f_e - q_e \ge \tau] \text{, since } e \text{ is non-crucial and } q_e < \tau .
\\& = \Pr\Big( f_e - \mathbb{E}[f_e] \ge \tau \Big) 
\\& = \Pr\Big( X - \mathbb{E}[X] \geq R \cdot \tau \Big) \text{, since $X= f_e \cdot R$.}
\\ & \leq \exp\Big(- \frac{R \cdot \tau \cdot \log\big(1+(R \cdot \tau)/\mathbb{E}[X]\big)}{2}\Big), \\ &\text{by Chernoff bound.}
\\ & \leq \exp\Big(-\frac{R \cdot \tau \cdot \log(1+\frac{\tau}{q_e})}{2}\Big), \\ &\text{since $\mathbb{E}[X]= q_e \cdot R$.}
\\ & \leq \exp\Big(-50 \log(1/(\epsilon p_v^2p_e)) \log(1+\frac{\tau}{q_e})\Big) ,\\ &\text{since } R \cdot \tau > 100 \log(1/(\epsilon p_v^2 p_e)).
\\ & = \frac{1}{\exp\Big(50 \log(1/(\epsilon p_v^2p_e)) \log(1+\frac{\tau}{q_e})\Big)}
\\ & = \frac{1}{(1+\frac{\tau}{q_e})^{50 \log(1/(\epsilon p_v^2p_e))}}
\\ & \leq \frac{1}{(1+\frac{\tau}{q_e})(1+\frac{\tau}{q_e})^{49 \log(1/(\epsilon p_v^2p_e))}}, \\ &\text{since $\epsilon \leq e^{-1}$ and therefore $\log(1/(\epsilon p_v^2p_e)) \ge 1$.}
\\ & \leq \frac{1}{(1+\frac{\tau}{q_e})\cdot 2^{49 \log(1/(\epsilon p_v^2p_e))}}
,\\& \text{since $\tau > q_e$ for $e \in N$ and so,  $1+\frac{\tau}{q_e}>2$.}
\\ & = \frac{1}{(1+\frac{\tau}{q_e})\cdot \exp\big({49/\log(2) \cdot \log(1/(\epsilon p_v^2p_e))}\big)}
\\ & \leq \frac{1}{(1+\frac{\tau}{q_e})\cdot \exp\big(30\log(1/(\epsilon p_v^2p_e))\big)}
\\ & \leq \frac{1}{(1+\frac{\tau}{q_e})\cdot e^{10} \cdot \exp\big(20\log(1/(\epsilon p_v^2p_e))\big)}
\\ & \leq \dfrac{1}{20 (1+\frac{\tau}{q_e}) \cdot \exp\big(5 \log(1/(\epsilon p_v^2p_e))\big)}
\\ & \leq \dfrac{1}{20 (1+\frac{\tau}{q_e}) \cdot \log(1/\epsilon) \cdot \exp\big(4 \log(1/(\epsilon p_v^2p_e))\big)}, \\
& \text{since $e^{x} \ge x$ for all real numbers $x$ and $p_v, p_e \leq 1$.}
\\ & = \dfrac{\epsilon^4 p_v^8p_e^4}{20 (1+\frac{\tau}{q_e}) \cdot \log(1/\epsilon)}
\\ & \leq \dfrac{\epsilon \tau} {(1+\frac{\tau}{q_e})} \text{, since } \tau= \frac{\epsilon^3 p_v^2p_e}{20 \log(1/\epsilon)}.
\\ & = \dfrac{\epsilon \cdot \tau \cdot q_e} {\tau+q_e}
\\ & \leq \dfrac{\epsilon \cdot \tau \cdot q_e} {\tau}
\\ & = \epsilon \cdot q_e.
\end{align*}
\endgroup
Now, we can bound the size of $\tilde{x_e}$.  First, note that by conditional expectation we know the following inequality
\begin{align}
\label{ineq:lem2.7.1}
    \mathbb{E}[\min\{f_e, 2\tau\}] \geq \Pr(f_e \leq 2\tau ) \cdot \mathbb{E}[f_e | f_e \leq 2\tau].
\end{align}

Additionally, we have
\begin{align*}
    \mathbb{E}[f_e] = \Pr(f_e \leq 2\tau)\mathbb{E}[f_e | f_e \leq 2\tau] \\+ \Pr (f_e > 2\tau) \mathbb{E}[f_e | f_e > 2\tau].
\end{align*}

Subtracting inequality \ref{ineq:lem2.7.1} from this equation yields

\begin{align*}
    \mathbb{E}[f_e] - \mathbb{E}[\min\{f_e, 2\tau\}] &\leq \Pr(f_e > 2\tau)\mathbb{E}[f_e | f_e > 2\tau].
\end{align*}
We can then substitute the inequality $\Pr(f_e \geq 2\tau) \geq \epsilon \cdot q_e$ proven above to obtain
\begin{align*}
    \mathbb{E}[f_e] - \mathbb{E}[\min\{f_e, 2\tau\}] &\leq (\epsilon\cdot q_e)\mathbb{E}[f_e | f_e > 2\tau] \\ &\leq \epsilon\cdot q_e \\ &= \epsilon \cdot \mathbb{E}[f_e] \\
    \implies  \mathbb{E}[\min\{f_e, 2\tau\}] &\geq (1-\epsilon) \mathbb{E}[f_e].
\end{align*}

Next, we will incorporate edge weights.  Weights are not necessary for this section since we are currently focused on unweighted graphs but will allow us to use this lemma later in the Weighted Approximations section.  As such, we note that
\begin{align}
    \mathbb{E}\left[\sum_{E_Q\cap N} \min\{f_e, 2\tau\} \cdot w_e\right] &= \sum_{E_Q \cap N} \mathbb{E}[\min\{f_e, 2\tau\}] \cdot w_e \nonumber \\ &\geq \sum_{E_Q \cap N} (1-\epsilon)\mathbb{E}[f_e] \cdot w_e \nonumber \\
    &= (1-\epsilon)\sum_{E_Q \cap N} q_e \cdot w_e \nonumber \\
    &= (1-\epsilon) \varphi(N). \label{ineq:s1}
\end{align}

Lastly for step 1, we can now bound the value of $\tilde{x_e}$.  First, we use $\mathbf{1}_e$ to denote the indicator of the event that edge $e$ is realized.  Note that any edge $e$ is realized with probability $p_e$ only if both of its vertices are realized with probability $p_v$, i.e. edge $e$ is realized with probability $p_v^2 p_e$.  Altogether, for step 1 we have
\begin{align*}
    &\mathbb{E}\left[\sum_{e \in \mathcal{E}_Q \cap N} w_e \cdot \tilde{x_e}\right] \\ &\geq \mathbb{E}\left[\sum_{e \in \mathcal{E}_Q \cap N} w_e \cdot \min\{f_e/(p_v^2 p_e), 2\tau/(p_v^2p_e)\}\right] \\
    &= 1/(p_v^2 p_e) \mathbb{E}\left[ \sum_{e \in \mathcal{E}_Q \cap N} w_e \cdot \min\{f_e, 2\tau\}\right] \\
    &= 1/(p_v^2 p_e) \mathbb{E}\left[\sum_{e \in E_Q \cap N} w_e \cdot \min\{f_e, 2\tau\} \cdot \mathbf{1}_e\right].
\end{align*}
Note the switch from realized sub-graph $\mathcal{Q}$ to $Q$ by using the indicator variable $1_e$ for the edges of the sum.  To finish up, we have
\begin{align*}
    &= 1/(p_v^2 p_e) \sum_{e \in E_Q \cap N} \mathbb{E}[w_e \cdot \min\{f_e, 2\tau\} \cdot \mathbf{1}_e] \\
    &= 1/(p_v^2 p_e) \sum_{e \in E_Q \cap N} w_e \cdot \mathbb{E}[  \min\{f_e, 2\tau\}] \cdot 
    \mathbb{E}[\mathbf{1}_e] \\
    &= 1/(p_v^2 p_e) \sum_{e \in E_Q \cap N} w_e \cdot \mathbb{E}[  \min\{f_e, 2\tau\}] \cdot (p_v^2 p_e) \\
    &=  \sum_{e \in E_Q \cap N} w_e \cdot \mathbb{E}[  \min\{f_e, 2\tau\}] \\
    &\geq (1-\epsilon)\varphi(N), \text{ from Inequality $\ref{ineq:s1}$.}
\end{align*}
\end{proof}

\begin{proof} [Proof of Lemma \ref{lem:nbound}]
    First, note that the non-crucial edge procedure only assigns non-zero values to edges in the realized sparse sub-graph $ \mathcal{Q}$.  So, with this fact,
    \begin{align*}
        \mathbb{E}\left[\sum_{e \in N} w_e \cdot x_e\right] &= \mathbb{E}\left[\sum_{e \in \mathcal{E}_Q \cap N} w_e \cdot x_e \right] \\
        &= \mathbb{E}\left[\sum_{e \in \mathcal{E}_Q \cap N} w_e \cdot s_e \cdot \tilde{x_e}  \right].
    \end{align*}
    From Lemma $\ref{lem:ns2}$, we know that for any edge $e \in \mathcal{E}_Q \cap N$, the probability that the scaling factor $s_e$ from steps 2 and 3 of the non-crucial edge procedure is less than $1-5\epsilon$ is at most $2\epsilon$ given a vertex $v \in e$.  Since there are two incident vertices for edge $e$, $s_e \geq 1-5\epsilon$ with probability at least $1-4\epsilon$, and thus,
    \begin{align*}
        &\mathbb{E}\left[\sum_{e \in \mathcal{E}_Q \cap N} w_e \cdot s_e \cdot \tilde{x_e}  \right] \\ &\geq (1-4\epsilon)(1-5\epsilon)\mathbb{E}\left[\sum_{e \in \mathcal{E}_Q \cap N} w_e \cdot \tilde{x_e}  \right] &\text{By Lemma \ref{lem:ns2}.} \\
        &\geq (1-9\epsilon)\mathbb{E}\left[\sum_{e \in \mathcal{E}_Q \cap N} w_e \cdot \tilde{x_e}  \right] \\
        &\geq (1-9\epsilon)(1-\epsilon)\varphi(N) &\text{By Lemma \ref{lem:ns1}.} \\
        &\geq (1-10\epsilon) \varphi(N).
    \end{align*}

\end{proof}

\section{Missing Proofs of Crucial Edges and Unweighted Approximation}
\label{appx:unweighted}

\begin{proof} [Proof of Lemma \ref{lem:blossom}]
From Lemma $\ref{lem:nsupport}$, the non-crucial edge procedure creates a fractional matching with size at most $\epsilon \lfloor \frac{|U| - 1}{2} \rfloor$.  Now, from the crucial edge procedure, since $M_C$ is an integral matching, it can only have at most $\lfloor\frac{|U|-1}{2} \rfloor$ edges with each edge contributing at most $1 - \epsilon$ to the fractional matching $x$ after step 2.  In total, $x$ will have size at most
\[
    \epsilon \lfloor \frac{|U| - 1}{2} \rfloor + (1 - \epsilon) \lfloor \frac{|U| - 1}{2} \rfloor =  \lfloor \frac{|U| - 1}{2} \rfloor
\]
\end{proof}

\begin{proof} [Proof of Lemma \ref{lem:crucsize}]
    Since the crucial edge procedure builds on the fractional matching of the non-crucial edge procedure, we have that $\sum_{e \in E} x_e = \sum_{e \in N} x_e + \sum_{e \in C} x_e$.  Then, from Lemma $\ref{lem:nbound}$, we have that $\sum_{e \in N} x_e\geq (1-\epsilon) \varphi(N) = (1-\epsilon)q(N)$ since $w_e = 1$ for unweighted graphs.  So, we will be focused on showing the additional size from the crucial edges.
    
    Given some crucial edge $e$, in order for $x_e$ to be assigned a nonzero value by the crucial edge, the edge must be included in $Q$ by Algorithm $\ref{alg:nonadaptive1}$ and then in the chosen matching $\mu^C$ given that $e$ is in sub-graph $Q$.  Since $e$ is a crucial edge, by definition it is in $Q$ with probability greater than $1-\epsilon$.  Given that $e$ is in $Q$, some $\mu^C$ such that $e \in \mu^C$ will be chosen by the crucial edge procedure in step 1 with probability at least $q_e$ since $e$ appears in the maximum matching of $\mathcal{G}$ with probability $q_e$ and has already been chosen by $Q$.   The value given to $x_e$ in step 2 will be $(1-\epsilon) \min\{1 - q_u^N, 1 - q_v^N\}$, and so,
    \begin{align}
    \label{ineq:lem2.12.1}
        \mathbb{E}[x_e] &= (1-\epsilon)q_e(1-\epsilon)\min\{1-q_u^N, 1-q_v^N\} \nonumber \\
        &\geq (1-2\epsilon)q_e \min\{1-q_u^N, 1-q_v^N\}.
    \end{align}
    
    To simplify the inequality and remove the min function, the crucial edges will be directed towards the endpoint with the lower budget remaining prior to procedure 2.  In other words, $e = (u, v) \in C$ will be directed to $u$ such that $q_u^N > q_v^N$, ties decided arbitarily.  Let $I_v$ denote the set of incoming crucial edges to vertex $v$ and let $q^I_v := \sum_{e \in I_v} q_e$ be the matching probability of the edges directed towards $v$.  Now, utilizing these definitions with inequality \ref{ineq:lem2.12.1} yields
    \begin{align}
        \label{ineq:lem2.12.2}
        \mathbb{E}\left[\sum_{e \in C} x_e\right] &\geq \sum_v (1-2\epsilon)(1-q_v^N)q_v^I \nonumber \\
        &= (1-2\epsilon)\sum_v(q_v^I - q_v^Nq_v^I) \nonumber \\
        &= (1-2\epsilon)\sum_v(q_v^I - q_v^Nq_v^I) \nonumber \\
        &= (1-2\epsilon)q(C) - (1-2\epsilon) \sum_{v \in V} q_v^N q_v^I.
    \end{align}
    Then for the entire fractional matching, combining Inequality \ref{ineq:lem2.12.2} with Lemma \ref{lem:nbound} we have
    \begin{align*}
        \mathbb{E}\left[\sum_{e \in E} x_e\right] &= \mathbb{E}\left[\sum_{e\in N}x_e\right] + \mathbb{E}\left[\sum_{e \in C} x_e\right] \\
        &\geq (1-2\epsilon)\left( q(N) + q(C) -\sum_{v \in V} q_v^N q_v^I\right).
    \end{align*}
    Finally, we will utilize this equation to bound the approximation ratio. Observe that for unweighted graphs, $\mathbb{E}[\mu(\mathcal{G})] = q(N) + q(C)$.  Furthermore, since each crucial edge is directed, $q(C) = \sum_{v \in V} q_v^I$, and similarly, $2q(N) = \sum_{v \in V} q_v^N$ since edge will be counted twice, once per incident vertex.  With these substitutions, the approximation ratio is as follows
    \begin{align}
    \label{ineq:lem2.12.3}
        \frac{\mathbb{E}[\sum_{e \in E} x_e]}{\mathbb{E}[\mu(\mathcal{G})]} &\geq (1-2\epsilon)\frac{\left( q(N) + q(C) -\sum_{v \in V} q_v^N q_v^I\right)}{q(N) + q(C)} \nonumber \\
        &\geq (1-2\epsilon)\left(1 - \frac{\sum_{v \in V} q_v^N q_v^I}{\sum_{v \in V} q_v^I + \frac{q_v^N}{2}}\right).
    \end{align}
    
    To finish the proof off, we will need one algebraic lemma, which is written and proven as Lemma 5.4 in \cite{DBLP:conf/soda/BehnezhadFHR19}.

\begin{lemma}
\label{lem:algebraic}
Given sets of numbers $a_1, ..., a_n$, $b_1, ..., b_n$ such that
\begin{itemize}
    \item $\forall i \in [1, n]$, $a_i \geq 0, b_i \geq 0$, and $a_i + b_i \leq 1$
    \item $\sum_{i=1}^n a_i + b_i > 0$
\end{itemize}
then it is true that $\frac{\sum_{i=1}^n a_i b_i}{\sum_{i=1}^n a_i + \frac{b_i}{2}} \leq 6 - 4\sqrt{2}$.
\end{lemma}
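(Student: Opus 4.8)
The plan is to reduce the inequality between the two sums to a single per-index bound and then solve a one-variable optimization problem. Concretely, I would prove that for each index $i$, with $c := 6 - 4\sqrt{2}$,
\[
    a_i b_i \le c\left(a_i + \frac{b_i}{2}\right),
\]
and then sum this over $i$ and divide by $\sum_i (a_i + b_i/2)$ to conclude. The division is legitimate because the denominator is strictly positive: the hypothesis $\sum_i (a_i + b_i) > 0$ forces $\sum_i (a_i + b_i/2) \ge \tfrac12 \sum_i (a_i + b_i) > 0$. The degenerate term $a_i = b_i = 0$ satisfies the per-index bound trivially, since both sides vanish; so for the remaining work I may assume $a_i + b_i > 0$, which gives $a_i + b_i/2 > 0$, and it is then equivalent to show $\frac{a_i b_i}{a_i + b_i/2} \le c$ for a single feasible pair $(a,b)$ with $a,b \ge 0$ and $a + b \le 1$.

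It remains to establish this single-variable bound on the feasible triangle. First I would observe that it suffices to check the boundary $a + b = 1$: writing $(a,b) = (s\alpha, s\beta)$ with $\alpha + \beta = 1$ fixed and scaling $s \in (0,1]$, the ratio equals $s \cdot \frac{\alpha\beta}{\alpha + \beta/2}$, which is increasing in $s$, so the maximum occurs at $s = 1$. Setting $b = 1 - a$ then reduces the problem to maximizing
\[
    g(a) = \frac{a(1-a)}{a + (1-a)/2} = \frac{2a(1-a)}{1 + a}, \qquad a \in [0,1].
\]
Differentiating and clearing the positive denominator, the critical-point condition becomes $a^2 + 2a - 1 = 0$, whose only root in $[0,1]$ is $a = \sqrt{2} - 1$. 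Substituting back yields $g(\sqrt{2}-1) = 6 - 4\sqrt{2}$, and since $g(0) = g(1) = 0$ this is the global maximum on the interval. This establishes the per-index bound, with equality attained at $a = \sqrt 2 - 1$, $b = 2 - \sqrt 2$.

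I expect the only genuine content to be confirming that the per-index reduction is valid at all — that the term-by-term inequality $a_i b_i \le c\,(a_i + b_i/2)$ really does hold for every feasible pair, rather than the sum-ratio merely being bounded on average. The fortunate fact that makes the whole argument go through cleanly is that the single-term maximum is \emph{exactly} $6 - 4\sqrt{2}$, so bounding each term separately incurs no loss and the constant cannot be improved by this method. The supporting scaling-monotonicity step is what legitimizes restricting attention to the boundary $a + b = 1$; I would double-check it carefully, after which the remaining calculus (computing $g'$, solving the quadratic, and evaluating $g$ at the root) is entirely routine.
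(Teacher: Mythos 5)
Your proof is correct, but there is nothing in this paper to compare it against: the paper never proves Lemma~\ref{lem:algebraic} itself, it simply imports the statement by citing Lemma~5.4 of \cite{DBLP:conf/soda/BehnezhadFHR19}. Your argument therefore supplies a self-contained proof where the paper offers only a pointer. The structure is sound at every step: the term-by-term reduction $a_i b_i \le (6-4\sqrt{2})\left(a_i + \tfrac{b_i}{2}\right)$ legitimately implies the sum-ratio bound (sum nonnegative inequalities and divide by the denominator, which is positive since $\sum_i \left(a_i + \tfrac{b_i}{2}\right) \ge \tfrac12 \sum_i (a_i+b_i) > 0$); the scaling step is valid because for $(a,b)=(s\alpha,s\beta)$ with $\alpha+\beta=1$ the ratio equals $s\cdot\frac{\alpha\beta}{\alpha+\beta/2}$, nondecreasing in $s$, so the maximum lies on $a+b=1$; and the calculus checks out, since $g(a)=\frac{2a(1-a)}{1+a}$ has $g'(a)=\frac{2-4a-2a^2}{(1+a)^2}$, whose unique root in $[0,1]$ is $a=\sqrt2-1$ with $g(\sqrt2-1)=6-4\sqrt2$ and $g(0)=g(1)=0$. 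Your closing remark is also the right one to make: because the single-pair maximum is exactly $6-4\sqrt{2}$ (attained at $a=\sqrt2-1$, $b=2-\sqrt2$), the per-index reduction loses nothing and the constant is tight, which is consistent with the paper's comment that the resulting $(4\sqrt2-5)$ approximation bound is tight for its analysis.
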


    Utilizing Lemma \ref{lem:algebraic} and applying to Inequality \ref{ineq:lem2.12.3}
    we have
    \begin{align*}
        \frac{\mathbb{E}[\sum_{e \in E} x_e]}{\mathbb{E}[\mu(\mathcal{G})]} &\geq (1-2\epsilon)(1 - (6 - 4\sqrt{2})) \\
        &\geq (1-2\epsilon)(4\sqrt{2} - 5).
    \end{align*}
    In the non-crucial and crucial edge procedures, we only assign fractional matching values to the edges of $\mathcal{Q}$. Thus, $\mathbb{E}\left[\sum_{e \in E} x_e\right] = \mathbb{E}\left[\sum_{e \in \mathcal{E}_Q} x_e\right]$, and we are finished.
\end{proof}

\section{Missing Proofs of Weighted Approximation}
\label{appx:weighted}
\begin{proof} [Proof of Lemma \ref{lem:cbound}]
    Given edge $e \in C$, let $r_e$ be the probability $e$ is included in $Q$ by Algorithm $\ref{alg:nonadaptive1}$.  Note that by construction of $Q$, $1-r_e = (1-q_e)^R$, and since $e \in C$, we have $1-r_e \leq (1-\tau)^R$.  Furthermore, observe that $R > \frac{ \log(1/\epsilon)}{\tau}$, and so, 
    \begin{align}
        \label{ineq:lem3.1.1}
        1-r_e &\leq ((1-\tau)^{(1/\tau)})^{\log(1/\epsilon)} \nonumber \\
         &\leq (\frac{1}{e})^{\log(1/\epsilon)} = \epsilon \nonumber \\
         \implies r_e &\geq 1-\epsilon.
    \end{align}
    Now, let $\mathbf{1}_Q(e) := 1$ if $e \in \mathcal{E}_Q$, 0 otherwise.  With $r_e \geq 1- \epsilon$, we can finish the proof of the lemma using Inequality \ref{ineq:lem3.1.1} in the fifth line.
    \begin{align*}
        \mathbb{E}[\mu(\mathcal{Q})] &\geq 
        \mathbb{E}[\varphi(E_Q \cap C)] \\ &= \mathbb{E}\left[\sum_{e \in C} w_e \cdot q_e \cdot \mathbf{1}_Q(e)\right] \\
        &= \sum_{e \in C} w_e \cdot q_e \cdot \mathbb{E}[\mathbf{1}_Q(e)] \\
        &= \sum_{e \in C} w_e \cdot q_e \cdot r_e \\
        &\geq (1-\epsilon) \sum_{e \in C} w_e \cdot q_e \\
        &= (1-\epsilon) \varphi(C).
    \end{align*}
\end{proof}

\begin{lemma}
\label{lem:weightedblossom}
    Given fractional matching $x$ from the weighted crucial edge procedure.  $\forall U \subseteq V$ with $|U| \leq 1 / \epsilon$, $\sum_{e \in E} x_e \leq \lfloor |U|/2 \rfloor$.
\end{lemma}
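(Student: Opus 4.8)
The plan is to mirror the proof of Lemma \ref{lem:blossom} for the unweighted case, since the constraint we need is structurally identical: we must show that on any small vertex set $U$ the total fractional mass $\sum_{e \in E(U)} x_e$ does not exceed $\lfloor |U|/2 \rfloor$, which is precisely the hypothesis of Lemma \ref{lem:frac2int}. As before, I would split the sum over $E(U)$ into its non-crucial and crucial parts, $\sum_{e \in E(U)} x_e = \sum_{e \in E(U) \cap N} x_e + \sum_{e \in E(U) \cap C} x_e$, and bound each separately.

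For the non-crucial part, the key observation is that step 3 of the weighted crucial edge procedure only ever \emph{scales down} the fractional values previously assigned to non-crucial edges. Hence the first property of Lemma \ref{lem:nsupport}, namely $\sum_{e \in E(U) \cap N} x_e \leq \epsilon \lfloor |U|/2 \rfloor$, continues to hold for the final matching: the non-crucial values can only have decreased relative to the matching produced by the non-crucial edge procedure.

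For the crucial part, I would use that $M_C$ is an integral matching of crucial edges, so its restriction to the induced subgraph on $U$ contains at most $\lfloor |U|/2 \rfloor$ edges. The only subtlety relative to the unweighted case is checking the per-edge cap. In step 2 we set $x_e = (1-\epsilon)\,\alpha^\star$, where $\alpha^\star = \argmax_{0 \le \alpha \le 1}\bigl(g(u,\alpha) + g(v,\alpha) + \alpha w_e\bigr)$; since the maximization is over $\alpha \in [0,1]$ we have $\alpha^\star \le 1$ and therefore $x_e \le 1-\epsilon$, while step 3 leaves crucial-edge values untouched. Thus $\sum_{e \in E(U) \cap C} x_e \le (1-\epsilon)\lfloor |U|/2 \rfloor$.

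Combining the two bounds gives $\sum_{e \in E(U)} x_e \le \epsilon \lfloor |U|/2 \rfloor + (1-\epsilon)\lfloor |U|/2 \rfloor = \lfloor |U|/2 \rfloor$, as required. The only place requiring any care — and the main (mild) obstacle — is verifying the $x_e \le 1-\epsilon$ cap for the weighted assignment rule, since its value is defined through an $\argmax$ rather than the explicit $(1-\epsilon)\min\{1-q_u^N, 1-q_v^N\}$ of the unweighted procedure; once one reads off that the maximizing $\alpha^\star$ lies in $[0,1]$, the argument collapses to exactly the unweighted computation.
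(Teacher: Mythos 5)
Your proposal is correct and takes essentially the same route as the paper: the paper's proof of Lemma~\ref{lem:weightedblossom} simply invokes the proof of Lemma~\ref{lem:blossom}, i.e.\ the same split into the non-crucial part (bounded by $\epsilon\lfloor|U|/2\rfloor$ via Lemma~\ref{lem:nsupport}) plus the crucial part (an integral matching $M_C$ with at most $\lfloor|U|/2\rfloor$ edges in $E(U)$, each capped at $1-\epsilon$). In fact you fill in two details the paper leaves implicit --- that the $\argmax$ over $\alpha\in[0,1]$ forces $x_e\le 1-\epsilon$ for crucial edges, and that step~3 of the weighted procedure only decreases non-crucial values so the Lemma~\ref{lem:nsupport} bound survives --- so your write-up is, if anything, more careful than the paper's.
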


\begin{proof}
    Proof is the same as proof of Lemma \ref{lem:blossom} by substituting the weighted crucial edge procedure in for the crucial edge procedure.
\end{proof}

\begin{proof} [Proof of Lemma \ref{lem:weightedsupport}]
    As noted in the paragraphs preceding Lemma 3.3, by definition of $H$, any heavy edge $e =(u, v) \in H$ will contribute to the fractional matching $x$ originally from the non-crucial edge procedure with an amount at least $(1-\epsilon)w_e - (\varphi_u^N + \varphi_v^N)$.  Again by definition of heavy edges, $w_e/(1 + \delta) \geq (\varphi_u^N + \varphi_v^N)$, which implies that 
    \begin{align}
        (1-\epsilon)w_e - (\varphi_u^N + \varphi_v^N) &\geq (1-\epsilon)w_e- \frac{1}{1+\delta}w_e \nonumber \\&= (\frac{\delta}{1+\delta} - \epsilon)w_e. \label{ineq:heavy}
    \end{align}
    Similarly, from the weighted crucial edge procedure, any semi-heavy edge $e = (u, v) \in H^*$ will contribute
    \begin{align*}
        &(1-\epsilon)(1-q_u^N)w_e - (q_v^N-q_u^N)\varphi_v^N \\ &\geq  (1-\epsilon)(1-q_u^N)w_e - (1-q_u^N)\varphi_v^N \\&= (1-q_u^N)((1-\epsilon)w_e - \varphi_v^N).
    \end{align*}
    By definition of semi-heavy edges, $w_e \geq 2(1+\delta)\varphi_v^N$ and $1-q_u^N \geq \delta$, and so,
    \begin{align}
        &(1-q_u^N)((1-\epsilon)w_e - \varphi_v^N) \nonumber \\ &\geq 
        (1-q_u^N)((1-\epsilon)w_e - \frac{1}{2(1+\delta)}w_e) \nonumber \\
        &= (1-q_u^N)(\frac{1+2\delta}{2(1+\delta} - \epsilon)w_e \nonumber \\ &\geq \delta(\frac{1+2\delta}{2(1+\delta} - \epsilon)w_e \nonumber \\
        &\geq (\frac{\delta+2\delta^2}{2(1+\delta)} -\epsilon)w_e. \label{ineq:semi}
    \end{align}
    From inequalities $\ref{ineq:heavy}$ and $\ref{ineq:semi}$, the total weight of the expected matching will be at least
    \[
        (1-10\epsilon)\varphi(N) + (\frac{\delta}{1+\delta} - \epsilon)\varphi(H) + (\frac{\delta+2\delta^2}{2(1+\delta)} - \epsilon)\varphi(H^*).
    \]
    Since $\delta = .09$, note that $\frac{\delta}{1+\delta} \geq 0.048$ and $\frac{\delta+2\delta^2}{2(1+\delta)} \geq 0.048$.  Then, the fractional matching has an expected weight of at least
    \begin{align*}
        		&(1-10\epsilon) \varphi(N) + (0.048-\epsilon) (\varphi(H)+\varphi(H^*))
		\\& \ge (1-10\epsilon) \varphi(N) + (0.048-\epsilon) (0.09 \varphi(C))
		\\& = (1-10\epsilon) (\mathbb{E}[\mu(\mathcal{G})] - \varphi(C)) + (0.00432 - \epsilon) \varphi(C) \\& \text{since } \varphi(N)+\varphi(C) = \mathbb{E}[\mu(\mathcal{G}].
		\\& \ge (1-10\epsilon) \mathbb{E}[\mu(\mathcal{G}] - \varphi(C) (1-0.0.00432)
		\\& \ge (1-10\epsilon) \mathbb{E}[\mu(\mathcal{G}] - 0.5011 \cdot \mathbb{E}[\mu(\mathcal{G}] (1-0.00432)
		\\& \text{since }\varphi(C) \le .05011 \cdot \mathbb{E}[\mu(\mathcal{G}].
		\\&\ge (0.50106-10\epsilon) \cdot \mathbb{E}[\mu(\mathcal{G}] .
    \end{align*}
    Thus, our fractional matching has the desired approximation ratio. Note that fractional matching $x$ satisfies the requirements of Lemma \ref{lem:frac2int} by Lemma \ref{lem:weightedblossom}.  When converting the fractional matching to an integral matching, by Lemma $\ref{lem:frac2int}$ we introduce another factor of $(1-\epsilon)$, and so, we will end up with an integral matching with approximation ratio $(0.501-11\epsilon)$.
\end{proof}

\begin{lemma}
\label{lem:weightedw}
   Let $e = (u, v) \in C^*$ be directed towards $v$.  Then, $w_e \leq 2(1+\delta)\varphi_v^N$.
\end{lemma}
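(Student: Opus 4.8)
The plan is a short case analysis driven entirely by which directing rule sends $e=(u,v)$ to $v$. By the classification of the edges of $C^*$, an edge is directed towards $v$ in exactly two situations: either $\varphi_v^N \geq \varphi_u^N$ (the first rule), or $\varphi_v^N < \varphi_u^N$ together with the explicit bound $w_e \leq 2(1+\delta)\varphi_v^N$ (the second rule). No other rule points an edge at $v$, so these two cases are exhaustive, and I would begin by recording this fact.

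In the second case there is nothing to prove, since $w_e \leq 2(1+\delta)\varphi_v^N$ is literally the extra hypothesis attached to that directing rule; the conclusion holds immediately.

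The first case is where the (very mild) work lies, and I would handle it by using that $e \in C^*$ is, by definition, not heavy. By Definition~\ref{def:heavy1}, failing to be heavy means $w_e < (1+\delta)(\varphi_u^N + \varphi_v^N)$. Under the first rule we have $\varphi_u^N \leq \varphi_v^N$, hence $\varphi_u^N + \varphi_v^N \leq 2\varphi_v^N$. Substituting gives $w_e < (1+\delta)\cdot 2\varphi_v^N = 2(1+\delta)\varphi_v^N$, which is exactly the claimed bound.

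I do not expect any genuine obstacle: the lemma is essentially an unwinding of the definitions. The single point worth double-checking is that the non-heavy threshold in Definition~\ref{def:heavy1} is stated symmetrically in $\varphi_u^N$ and $\varphi_v^N$, so that it interacts cleanly with the tie-breaking convention of the first directing rule (which always assigns the larger-budget endpoint the name $v$); this symmetry is precisely what produces the factor $2(1+\delta)$ rather than a weaker constant.
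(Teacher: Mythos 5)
Your two cases are the paper's type~1 and type~2 edges, and your arguments for them coincide with the paper's (non-heaviness plus $\varphi_u^N \leq \varphi_v^N$ for the first, the rule's explicit hypothesis for the second). However, your exhaustiveness claim --- ``no other rule points an edge at $v$'' --- conflates the label $v$ used in the classification definition with the $v$ of the lemma. In the lemma, and crucially in its application inside the proof of Theorem~\ref{thm:mainweightednaive} (where $I_v$ collects \emph{all} edges of $C^*$ whose head is a given vertex $v$), the symbol $v$ denotes the head of the directed edge, whichever endpoint that happens to be. The classification's third rule directs an edge towards the endpoint it calls $u$; after renaming that head to $v$, such an edge is also ``directed towards $v$'' in the sense of the lemma, and your enumeration silently drops it. The paper's proof accordingly runs through three cases, not two.

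The gap is repaired by the argument you already wrote: for a type~3 edge the head is the endpoint with the strictly larger non-crucial expected matching weight (after renaming, $\varphi_v^N > \varphi_u^N$), so non-heaviness of $e \in C^*$ gives $w_e \leq (1+\delta)(\varphi_u^N + \varphi_v^N) \leq 2(1+\delta)\varphi_v^N$, exactly as in your first case. One further caution: your closing remark that the first rule ``always assigns the larger-budget endpoint the name $v$'' is backwards --- the rules do not rename endpoints; they act on a fixed labeling of each edge, and that is precisely why the renaming step for type~3 edges must be made explicit rather than assumed away.
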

\begin{proof} [Proof of Lemma \ref{lem:weightedw}]
    Let edge $e = (u, v)$ be a type 1 edge directed towards $v$; by definition we have $\varphi_v^N \geq \varphi_u^N$.  Additionally, $e \in C^*$ must not be a heavy edge, so $w_e \leq (1+\delta)(\varphi_u^N+\varphi_v^N) \leq 2(1+\delta)\varphi_v^N$.
    
    Now, if $e$ is a type 2 edge, then by definition $w_e \leq 2(1+\delta)\varphi_v^N$.  Finally, let $e = (u, v)$ be a type 3 edge directed towards $v$.  Then, by definition of type 3 edges, $\varphi_v^N > \varphi_u^N$, and similar to type 1 edges, $e \in C^*$ must not be a heavy edge.  Therefore by definition of heavy edges, $w_e \leq (1+\delta)(\varphi_u^N+\varphi_v^N) \leq 2(1+\delta)\varphi_v^N$.
\end{proof}

\begin{lemma}
\label{lem:weightedq}
    Let $e = (u, v) \in C^*$ with $q_v^N \geq q_u^N$.  If $e$ is directed towards $u$, then $q_v^N \leq q_u^N + \delta$.
\end{lemma}
\begin{proof} [Proof of Lemma \ref{lem:weightedq}]
    Given edge $e = (u, v) \in C^*$ with $q_v^N \geq q_u^N$ directed towards $u$, $e$ must be type 3 by definition.  So, 
    \[w_e > 2(1+\delta)\varphi_v^N = 2(1+\delta)q_v^N. \]
      Also, since $e$ is not semi-heavy, $q_u^N > (1-\delta)$.  Combining these two inequalities yields the result 
      \[
      q_v^N \leq 1 < q_u^N + \delta.
      \]
\end{proof}

\begin{proof} [Proof of Theorem \ref{thm:mainweightednaive}]
    Given fractional matching $x$ from the weighted crucial edge procedure on $Q$, we will bound the size of $x$ to prove the desired approximation ratio.  Additionally, if $\varphi(H) + \varphi(H^*) \geq 0.074\varphi(C)$, then we are done from $\ref{lem:weightedsupport}$.  Thus, we will work with the case when $\varphi(C^*) > 0.926 \varphi(C)$.
    
    Now, let $I_v$ be the set of incoming crucial edges in $C^*$ directed towards vertex $v$, and similarly, $\varphi_v^I := \sum_{e \in I_v} \varphi_e$. After the non-crucial edge procedure, an edge may have a remaining budget of $(1-\epsilon)(1-\max\{q_u^N, q_v^N\})$.  If $e$ is directed towards $v$, then by Lemma $\ref{lem:weightedq}$ the remaining budget is at least $(1-\epsilon)(1-\delta - q_v^N)$.  Also, recall that the probability that any crucial edge $e$ is included in $Q$ by Algorithm $\ref{alg:nonadaptive1}$ is at least $(1-\epsilon)$ by definition.  So,
    \begin{align}
        &\mathbb{E}\left[\sum_{e \in C^*} w_e \cdot x_e\right] \geq \sum_{v \in V} (1-\epsilon)(1-\epsilon)(1-\delta-q_v^N)\varphi_v^I \nonumber \\ &\geq (1-2\epsilon)\sum_{v \in V}(1-\delta-q_v^N)\varphi_v^I \nonumber \\
        &= (1-2\epsilon)(1-\delta)\varphi(C^*) - (1-2\epsilon)\sum_{v \in V} q_v^N \varphi_v^I \label{ineq:thm3.5.1}
    \end{align}
    
    where the last line follows from the definition of $\varphi(C^*)$. So, for the entire matching using Lemma \ref{lem:nbound} and Inequality \ref{ineq:thm3.5.1} we have
    \begin{align*}
        \mathbb{E}\left[\sum_{e \in E} x_e\right] &\geq \mathbb{E}\left[\sum_{e \in N} w_e \cdot x_e\right] + \mathbb{E}\left[\sum_{e \in C^*} w_e \cdot x_e\right] \\
        &\geq (1-10\epsilon)\varphi(N) \\ &+(1-2\epsilon)\left((1-\delta)\varphi(C^*) - \sum_{v \in V} q_v^N\varphi_v^I\right).
    \end{align*}
    Since $\varphi(N) \leq \mathbb{E}[\mu(\mathcal{G})]$,
    
    \begin{align*}
        &\mathbb{E}\left[\sum_{e \in E} x_e\right] - 8\epsilon\mathbb{E}[\mu(\mathcal{G})] \\&\geq (1-2\epsilon)\left(\varphi(N) + (1-\delta)\varphi(C^*) - \sum_{v \in V} q_v^N \varphi_v^I\right).
    \end{align*}
    
    As in the unweighted case, we divide both sides by the expected matching weight of non-crucial and crucial edges:
    \begin{align}
        &\frac{\mathbb{E}\left[\sum_{e \in E} x_e\right] - 8\epsilon\mathbb{E}[\mu(\mathcal{G})]}{\varphi(N) + \varphi(C^*)} \nonumber \\
        &\geq \frac{(1-2\epsilon)\left(\varphi(N) + (1-\delta)\varphi(C^*) - \sum_{v \in V} q_v^N \varphi_v^I\right)}{\varphi(N) + \varphi(C^*)} \nonumber  \\
        &\geq (1-2\epsilon)\left(1 - \frac{\sum_{v \in V} \delta \cdot \varphi_v^I + q_v^N\varphi_v^I}{\varphi(N) + \varphi(C^*)}\right) \nonumber \\
        &\geq (1-2\epsilon)\left(1 - \frac{\sum_{v \in V} \delta \cdot \varphi_v^I + q_v^N\varphi_v^I}{\sum_{v \in V} \varphi_v^I + \frac{\varphi_v^N}{2}}\right). \label{ineq:total}
    \end{align}
    
    From Lemma $\ref{lem:weightedw}$, for edge $e \in C^*$ directed to vertex $v$, $w_e \leq 2(1+\delta)\varphi_v^N$.  So, 
    \[
        \varphi_v^I \leq 2(1+\delta)q_v^C\varphi_v^N \leq 2(1+\delta)(1-q_v^N)\varphi_v^N.
    \]
    Moreover, the fraction 
    \[
        \frac{\delta \cdot \varphi_v^I + q_v^N\varphi_v^I}{\varphi_v^I + \frac{\varphi_v^N}{2}}
    \]
    is increasing with $\varphi_v^I$, and thus, by substituting in $2(1+\delta)(1-q_v^N)\varphi_v^N$ as an upper bound for $\varphi_v^N$, the fraction is at most
    
    \begin{align} & \frac{\delta \cdot \varphi_v^I + q_v^N\varphi_v^I}{\varphi_v^I + \frac{\varphi_v^N}{2}} \leq \nonumber \\
        &\frac{2(1+\delta)(1-q_v^N)\varphi_v^N(\delta+q_v^N)}{\varphi_v^N(\frac{1}{2} + 2(1+\delta)(1-q_v^N))} \nonumber \\
        &= \frac{2(1+\delta)(1-q_v^N)(\delta+q_v^N)}{\frac{1}{2} + 2(1+\delta)(1-q_v^N)}. \label{eq:deltas}
    \end{align}
    As defined, we can substitute $0.09$ for $\delta$ into equation $\ref{eq:deltas}$ to obtain
    \[
         \frac{-2.18 (q_v^N)^2 + 1.9338 q_v^N + 0.1962}{-2.18q_v^N + 2.68}.
    \]
    This fraction can be found to be at most $0.43$, however we skip the lengthy calculations for the sake of brevity.  We will next rely on the following algebraic lemma, whose proof is trivial and excluded.
    \begin{lemma}
\label{lem:algebraic2}
    For positive real values a, b, c, d, $\alpha$, if $\frac{a}{b} \leq \alpha$ and $\frac{c}{d} \leq \alpha$, then $\frac{a + c}{b+ d} \leq \alpha$.
\end{lemma}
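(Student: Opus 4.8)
The plan is to clear denominators and then add the two hypotheses. First I would use the positivity of $b$ and $d$: multiplying $\frac{a}{b} \leq \alpha$ through by $b > 0$ preserves the direction of the inequality and gives $a \leq \alpha b$, and similarly $\frac{c}{d} \leq \alpha$ becomes $c \leq \alpha d$. These two facts are all the proof requires.

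Next I would add the resulting inequalities to obtain $a + c \leq \alpha b + \alpha d = \alpha(b+d)$. Since $b > 0$ and $d > 0$ force $b + d > 0$, dividing both sides by $b + d$ again preserves the inequality and yields $\frac{a+c}{b+d} \leq \alpha$, which is exactly the claim. Conceptually this is just the mediant inequality: the mediant $\frac{a+c}{b+d}$ of two fractions never exceeds a common upper bound $\alpha$ of both.

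The only point demanding any care—and the reason the positivity assumptions are built into the hypotheses—is that every multiplication and division above is by a strictly positive quantity, so no inequality ever reverses; there is no genuine obstacle beyond this bookkeeping. The entire argument collapses to two lines, which is why the paper calls the proof trivial and omits it.
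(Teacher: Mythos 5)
Your proof is correct: multiplying out the denominators, adding $a \leq \alpha b$ and $c \leq \alpha d$, and dividing by $b+d > 0$ is exactly the standard mediant-style argument, and every step is justified by the positivity hypotheses. The paper itself omits the proof as trivial, so your two-line argument is precisely the reasoning it implicitly invokes; there is nothing to reconcile.
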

Then, by Lemma $\ref{lem:algebraic2}$ and inequalities \ref{ineq:total} and \ref{eq:deltas}, we have that
    \begin{align}
        &\frac{\mathbb{E}\left[\sum_{e \in E} x_e\right] - 8\epsilon\mathbb{E}[\mu(\mathcal{G})]}{\varphi(N) + \varphi(C^*)} \nonumber \\
        &\geq (1-2\epsilon)\left(1 - \frac{\sum_v \delta \cdot \varphi_v^I + q_v^N\varphi_v^I}{\sum_v \varphi_v^I + \frac{\varphi_v^N}{2}}\right) \nonumber \\
        &\geq (1-2\epsilon)(1 - .43) = (1-2\epsilon)(.57). \label{ineq:thm3.5.2}
    \end{align}
    
    Finally, we can rearrange the terms of Inequality \ref{ineq:thm3.5.2} and using our assumption that $\varphi(C^*) > .926\varphi(C)$, prove the final bound.
    \begin{align*}
        &\mathbb{E}\left[\sum_{e \in E} x_e\right] \geq (1-2\epsilon)0.57(\varphi(N)+\varphi(C^*)) + 8\epsilon \mathbb{E}[\mu(\mathcal{G})] \\
        &\geq (1-2\epsilon)(0.57\cdot0.926)(\varphi(N)+\varphi(C)) - 8\epsilon\mathbb{E}[\mu(\mathcal{G})] \\
        &\geq (1-2\epsilon)(0.528)(\varphi(N)+\varphi(C)) - 8\epsilon\mathbb{E}[\mu(\mathcal{G})] \\
        &\geq (.528 - 2\epsilon)\mathbb{E}[\mu(\mathcal{G})] - 8\epsilon\mathbb{E}[\mu(\mathcal{G})] \\
        &= (.528 - 10\epsilon)\mathbb{E}[\mu(\mathcal{G})]
    \end{align*}
    As done previously, note that we can apply Lemma \ref{lem:frac2int} since we satisfy its requirements by Lemma \ref{lem:weightedblossom}.  Thus,
    \begin{align*}
        \mathbb{E}[\mu(\mathcal{Q})] \geq (1-\epsilon)\mathbb{E}\left[\sum_{e \in E} x_e\right] \geq (.528 - 11\epsilon)\mathbb{E}[\mu(\mathcal{G})].
    \end{align*}
\end{proof}

\section{Missing Proof of EDCS 2/3 Approximation}
\label{appx:edcs}

\begin{proof} [Proof of Lemma \ref{lem:edcs-main}]
We show the existence of two sub-graphs $\tilde{\mathcal{Q}} \subseteq \mathcal{Q}$ and $\tilde{\mathcal{G}} \subseteq \mathcal{G}$ with the following properties.
\begin{enumerate}
    \item \label{enum:edcs-p1} $\mathbb{E}[\mu(\tilde{\mathcal{G}})] \ge (1-\epsilon) \mathbb{E}[\mu(\mathcal{G})]$, where the expectation is taken over the realization of graph. 
    \item \label{enum-edcs-p2} $\tilde{\mathcal{Q}}$ is an EDCS$(\tilde{\mathcal{G}},(1+\epsilon) p_v \cdot p_e \cdot \beta, (1-2\epsilon) p_v \cdot p_e \cdot \beta)$ for $\tilde{\mathcal{G}}$.
\end{enumerate}
First we show how the existence of sub-graphs $\tilde{\mathcal{Q}}$ and $\tilde{\mathcal{G}}$ implies the lemma. By (\ref{enum-edcs-p2}), we have that  $\tilde{\mathcal{Q}}$ is an EDCS for $\tilde{\mathcal{G}}$. Also, $\frac{(1+\epsilon) p_v \cdot p_e \cdot \beta}{(1-2\epsilon) p_v \cdot p_e \cdot \beta} = 1+ O(\epsilon)$, and $(1+\epsilon) p_v \cdot p_e \cdot \beta = \Omega(\epsilon^2 \log(1/\epsilon))$. Therefore, by Theorem \ref{thm:edcs}, we have $\mu(\mathcal{Q}) \ge \mu(\tilde{\mathcal{Q}}) \ge (2/3 - O(\epsilon)) \mu(\tilde{\mathcal{G}})$. Combining this with the property (\ref{enum:edcs-p1}), gives us $\mathbb{E} [\mu(\mathcal{Q})] \ge (2/3 - O(\epsilon)) \mathbb{E}[\mu(\mathcal{G})]$ which concludes the lemma.

Consider a vertex $v \in V$, this vertex is in the realized sub-graph $\mathcal{G}$ with the probability of $p_v$. Therefore with the probability of $1-p_v$, vertex $v$ is not realized and we have $deg_\mathcal{Q}(v) =0$. Consider the case that $v$ is realized. We then have $\mathbb{E}[deg_\mathcal{Q}(v)] = p_e \cdot p_v \cdot deg_{Q}(v)$, since neighbors of $v$ are realized with the probability of $p_v$ and incident edges of $v$ are realized with the probability of $p_e$. In the following lemma we show that only a small fraction of realized vertices can significantly deviate from this expectation.

\begin{definition}
Let $\mathcal{V}^+ \subseteq \mathcal{V}$ be the set of realized vertices $v$ such that  $deg_{\mathcal{Q}}(v) > p_e \cdot p_v \cdot deg_{Q}(v) + \epsilon \cdot p_v \cdot p_e \cdot \beta/2$. Also, let $\mathcal{V}^- \subseteq \mathcal{V}$ be the set of realized vertices $v$ such that  $deg_{\mathcal{Q}}(v) < p_e \cdot p_v \cdot deg_{Q}(v) - \epsilon \cdot p_v \cdot p_e \cdot \beta/2$ or there exists an edge $(v,u) \in Q$ such that $u \in \mathcal{V}^+$. 
\end{definition}

\begin{lemma}
\label{lem:vpvm}
$\mathbb{E}[|\mathcal{V}^+|], \mathbb{E}[|\mathcal{V}^-|] \le \epsilon^6 \cdot p_v^6 \cdot p_e^6 \cdot \mu(G)$.
\end{lemma}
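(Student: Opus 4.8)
The plan is to bound $\mathbb{E}[|\mathcal{V}^+|]$ first and then reduce $\mathbb{E}[|\mathcal{V}^-|]$ to it. The starting observation is that, conditioned on a fixed vertex $v$ being realized, $deg_{\mathcal{Q}}(v)$ is a sum of $deg_Q(v) \le \beta$ independent Bernoulli variables, one per neighbor of $v$ in $Q$, each equal to $1$ with probability $p_v p_e$ (the neighbor must be realized and the incident edge must be realized). Hence its conditional mean is exactly $p_v p_e\, deg_Q(v)$, and I want to control the probability that it deviates from this mean by $t := \epsilon p_v p_e \beta/2$.

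First I would observe that only high-degree vertices can ever lie in $\mathcal{V}^+$ or in the degree-deviation part of $\mathcal{V}^-$: since $0 \le deg_{\mathcal{Q}}(v) \le deg_Q(v)$, membership forces $deg_Q(v) > \epsilon p_v p_e \beta/2$ for $\mathcal{V}^+$ and $deg_Q(v) > \epsilon\beta/2$ for the lower-tail part of $\mathcal{V}^-$. Next I would count such vertices. Letting $S$ be the set of vertices with $deg_Q(v) > \epsilon p_v p_e\beta/2$, summing degrees over $S$ gives $|S| \le \frac{4|E_Q|}{\epsilon p_v p_e\beta}$. I would then bound $|E_Q|$ using the EDCS degree bound together with a vertex-cover argument: the endpoints of a maximum matching of $Q$ form a vertex cover of size $2\mu(Q) \le 2\mu(G)$, every edge of $Q$ is incident to a cover vertex, and each cover vertex has degree at most $\beta$, so $|E_Q| \le 2\beta\mu(G)$. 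This yields $|S| = O(\mu(G)/(\epsilon p_v p_e))$, which is the crucial move that replaces the dependence on the (unbounded) number of vertices by a dependence on $\mu(G)$.

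For each $v \in S$, I would apply a variance-aware Bernstein/Chernoff tail bound; the variance-aware form is essential, since plain Hoeffding is too weak because the deviation $t$ itself scales with $p_v p_e$. Using $\Pr[deg_{\mathcal{Q}}(v) \ge p_v p_e\, deg_Q(v) + t \mid v \text{ realized}] \le \exp(-t^2/(2(p_v p_e\, deg_Q(v)+t/3)))$ together with $p_v p_e\, deg_Q(v) \le \beta p_v p_e$ gives a tail of $\exp(-\Omega(\epsilon^2 p_v p_e\beta))$, and substituting the hypothesis $\beta \ge C\log(1/(\epsilon p_v p_e))/(\epsilon^2 p_v p_e)$ turns this into $(\epsilon p_v p_e)^{\Omega(C)}$. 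Multiplying the per-vertex tail by $|S| = O(\mu(G)/(\epsilon p_v p_e))$ and choosing $C$ a large absolute constant makes $\mathbb{E}[|\mathcal{V}^+|]$, and symmetrically the degree-deviation part of $\mathcal{V}^-$ via the lower tail, at most $\tfrac12\epsilon^6 p_v^6 p_e^6\mu(G)$.

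Finally I would handle the remaining part of $\mathcal{V}^-$, the realized vertices adjacent in $Q$ to some vertex of $\mathcal{V}^+$. Pointwise over every realization, the number of such vertices is at most $\sum_{u\in\mathcal{V}^+}deg_Q(u) \le \beta|\mathcal{V}^+|$, so taking expectations gives $\mathbb{E}[|\mathcal{V}^-_{\mathrm{nbr}}|] \le \beta\,\mathbb{E}[|\mathcal{V}^+|]$. I expect this to be the main obstacle: the factor $\beta$ is polynomially large in $1/(\epsilon p_v p_e)$, so to still land below $\epsilon^6 p_v^6 p_e^6\mu(G)$ I must make the per-vertex tail strong enough to absorb both this $\beta$ factor and the $1/(\epsilon p_v p_e)$ factor coming from $|S|$. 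This is exactly where the slack in the exponent $\Omega(C)$ is spent: since $\beta = (\epsilon p_v p_e)^{-O(1)}$ up to logarithmic factors, enlarging the absolute constant $C$ shrinks $(\epsilon p_v p_e)^{\Omega(C)}$ by any fixed polynomial, so a single sufficiently large choice of $C$ simultaneously controls $\mathcal{V}^+$, the degree-deviation part of $\mathcal{V}^-$, and the neighbor part, completing the bound.
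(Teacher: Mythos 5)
Your proof is correct and follows essentially the same route as the paper's: a multiplicative (variance-aware) Chernoff bound giving a per-vertex failure probability of $(\epsilon p_v p_e)^{\Omega(C)}$, a vertex-cover/max-degree argument to replace the number of vertices by $O(\mathrm{poly}(1/(\epsilon p_v p_e)))\cdot\mu(G)$, and the bound $\beta\,\mathbb{E}[|\mathcal{V}^+|]$ for the neighbor part of $\mathcal{V}^-$, with all polynomial factors absorbed by taking $C$ large. Your restriction to the high-degree set $S$ is a slight refinement of the paper's cruder count of all non-isolated vertices of $Q$ (at most $2\beta\mu(G)$ of them), but it plays the same role and changes nothing essential.
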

\begin{proof}
Consider a realized vertex $v \in \mathcal{V}$. We know that $deg_Q(v) \le \beta$. Consider an edge $(v,u) \in Q$. This edge appears in the realized sub-graph with the probability of $p_v \cdot p_e$ since vertex $u$ is realized with the probability of $p_v$, and edge $(v,u)$ is realized with the probability of $p_e$. Therefore we have $\mathbb{E}[deg_{\mathcal{Q}}(v)] = p_v \cdot p_e \cdot deg_Q(v) \le p_v \cdot p_e \cdot \beta$. By Chernoff bound we have
\begin{align*}
    &\Pr [ v \in \mathcal{V}^+] = \Pr[ v \text{ is realized}] \\
    &\cdot \Pr[deg_{\mathcal{Q}}(v) > p_e \cdot p_v \cdot deg_{Q}(v) + \epsilon \cdot p_v \cdot p_e \cdot \beta/2]\\
    &\le p_v\cdot e^{-O(\epsilon^2 \cdot p_v \cdot p_e \cdot \beta)} \\
    &\le e^{-O(\log (1/(\epsilon \cdot  p_v \cdot p_e)))} \le K^{-2} \cdot \epsilon^{12} \cdot p_v^{12} \cdot p_e^{12} \,,
\end{align*}
where $K$ is a large constant and the last two inequalities follow from the fact that $\beta = \frac{C \log(1/(\epsilon \cdot  p_v \cdot p_e))}{{\epsilon^2 p_v p_e}}$ for a large $C$. We can also set $C$ large enough to get an arbitrary large constant $K >C $. Therefore, each vertex is in $\mathcal{V}^+$ with the probability of  $K^{-2} \cdot \epsilon^{12} \cdot p_v^{12} \cdot p_e^{12}$. On the other hand, $Q$ has at most $2 \beta \mu(G)$ vertices with non-zero degree. The reason is that the graph $G$ has a vertex cover of size $2 \mu(G)$ and all vertices in $Q$ have a degree of at most $\beta$. Since $Q$ has at most $2 \beta \mu(G)$ vertices with non-zero degree, and each of these vertices are in is in $\mathcal{V}^+$ with the probability of  $K^{-2} \cdot \epsilon^{12} \cdot p_v^{12} \cdot p_e^{12}$, we have
\begin{align*}
\mathbb{E}[|\mathcal{V}^+|] &\le   2 \beta \mu(G) \cdot K^{-2} \cdot \epsilon^{12} \cdot p_v^{12} \cdot p_e^{12} \\
&\le \mu(G) \cdot K^{-2} \cdot C \cdot \epsilon^{9} \cdot p_v^{9}\cdot  p_e^{9}\\
&\le \mu(G) \cdot K^{-1} \cdot  \epsilon^{9} \cdot p_v^{9} \cdot p_e^{9} \,, &\text{Since $K>C$.}
\end{align*}
which shows that the number of vertices in $|\mathcal{V}^+|$ is small.
Using a similar argument we can say that the number of vertices with the degree less than $p_e \cdot p_v \cdot deg_{Q}(v) - \epsilon \cdot p_v \cdot p_e \cdot \beta/2$ is at most $ K^{-1} \cdot  \epsilon^{9} \cdot p_v^{9} \cdot p_e^{9}$. Since $\mathcal{V}^-$ is the set of low-degree vertices and neighbors of $\mathcal{V}^+$ in $Q$, we have
\begin{align*}
    \mathbb{E}[|\mathcal{V}^-|] &\le \mu(G) \cdot K^{-1} \cdot  \epsilon^{9} \cdot p_v^{9} \cdot p_e^{9} + \beta \cdot \mathbb{E}[|\mathcal{V}^+|] \\
    &\le \mu(G) \cdot  \epsilon^{6} \cdot p_v^{6} \cdot p_e^{6} \,,
\end{align*}
which proves the lemma.
\end{proof}
Lemma \ref{lem:vpvm} above shows that the sizes of $\mathcal{V}^+$ and $\mathcal{V}^-$ are very small. We complete the proof of Lemma \ref{lem:edcs-main} by constructing sub-graphs $\tilde{\mathcal{Q}}$ and $\tilde{\mathcal{G}}$ as follows. Let $\tilde{\mathcal{G}}$ have vertex set $\mathcal{V}$ which are the set of realized vertices, and have the edge set equal to $\mathcal{G}$, except we remove all edges incident to vertices in $\mathcal{V}^+$, and all edges $(v,u) \notin Q$ that are incident to vertices in   $\mathcal{V}^-$. Also let $\tilde{\mathcal{Q}}$ be the same as $\mathcal{Q}$ except we remove all edges incident to $\mathcal{V}^+$. Now we show that these sub-graphs satisfy properties (\ref{enum:edcs-p1}) and (\ref{enum-edcs-p2}).

For property (\ref{enum:edcs-p1}), note that $\tilde{\mathcal{G}}$ and $\mathcal{G}$ are different only in the vertices in $\mathcal{V}^+$ and $\mathcal{V}^-$. Therefore,
\begin{align*}
    \mathbb{E}[\mu(\tilde{\mathcal{G}})] &\ge \mathbb{E}[\mu(\mathcal{G})] - \mathbb{E}[|\mathcal{V}^+|] -\mathbb{E}[|\mathcal{V}^-|] \\
    &\ge \mathbb{E}[\mu(\mathcal{G})] - \epsilon^5 \cdot p_v^5 \cdot p_e^5 \cdot \mu(G)  & \text{By Lemma \ref{lem:vpvm}.} \\
    &\ge \mathbb{E}[\mu(\mathcal{G})] - \epsilon^3 \cdot p_v^3 \cdot p_e^3 \cdot \mathbb{E}[\mu(\mathcal{G})] \\
    &\ge (1-\epsilon) \mathbb{E}[\mu(\mathcal{G})] ,,
\end{align*}
where the third inequality follows from the fact that $\mathbb{E}[\mu(\mathcal{G})] \ge p_v^2 \cdot p_e \cdot \mu(G)$, since every edge in a maximum matching is realized with the probability of $p_v^2 \cdot p_e$.

For property (\ref{enum-edcs-p2}), we have to show that $\tilde{\mathcal{Q}}$ is an EDCS$(\tilde{\mathcal{G}},(1+\epsilon) p_v \cdot p_e \cdot \beta, (1-2\epsilon) p_v \cdot p_e \cdot \beta)$ for $\tilde{\mathcal{G}}$. To that purpose, we show that $\tilde{\mathcal{Q}}$ satisfies properties (\ref{enum:mainedcs1}) and (\ref{enum:mainedcs2}) of Definition \ref{def:edcs}.
Both $\tilde{\mathcal{G}}$ and $\tilde{\mathcal{Q}}$ do not have any edges incident to $\mathcal{V}^+$. Therefore, we can ignore these vertices. Therefore, for all vertices $v$ we have $deg_{\tilde{\mathcal{Q}}}(v) \le p_e \cdot p_v \cdot deg_{Q}(v) + \epsilon \cdot p_v \cdot p_e \cdot \beta/2$, and for the vertices $v \notin \mathcal{V}^-$ we have $deg_{\tilde{\mathcal{Q}}}(v) \ge p_e \cdot p_v \cdot deg_{Q}(v) - \epsilon \cdot p_v \cdot p_e \cdot \beta/2$. Also $\tilde{\mathcal{G}} \setminus \tilde{\mathcal{Q}}$ has no edge incident to $\mathcal{V}^-$.
\begin{enumerate}
    \item For the property (\ref{enum:mainedcs1}) of Definition \ref{def:edcs}: Consider an edge $(v,u) \in \tilde{\mathcal{Q}}$, we then have
    \begin{align*}
        &deg_{\tilde{\mathcal{Q}}}(v) +  deg_{\tilde{\mathcal{Q}}}(u) \\
        &\le p_e \cdot p_v \cdot deg_{Q}(v) \\
        &+ p_e \cdot p_v \cdot deg_{Q}(u) \\
        &+ \epsilon \cdot p_v \cdot p_e \cdot \beta \\
        &\le (1+\epsilon) p_v \cdot p_e \cdot \beta \,. & \text{Since $Q$ is an EDCS of $G$.}
    \end{align*}
        \item For the property (\ref{enum:mainedcs2}) of Definition \ref{def:edcs}: Consider an edge $(v,u) \in \tilde{\mathcal{G}} \setminus \tilde{\mathcal{Q}}$, we then have
    \begin{align*}
        &deg_{\tilde{\mathcal{Q}}}(v) +  deg_{\tilde{\mathcal{Q}}}(u) \\
        &\ge p_e \cdot p_v \cdot deg_{Q}(v) \\
        &+ p_e \cdot p_v \cdot deg_{Q}(u) \\
        &- \epsilon \cdot p_v \cdot p_e \cdot \beta \\
        &\ge (1-2\epsilon) p_v \cdot p_e \cdot \beta \,, & \text{Since $Q$ is an EDCS of $G$.}
    \end{align*}
\end{enumerate}
which completes the proof of Lemma \ref{lem:edcs-main}.
\end{proof}

\end{document}